\setlist{nolistsep}
\renewcommand{\algorithmicrequire}{\textbf{\small Input:}}
\renewcommand{\algorithmicensure}{\textbf{\small Output:}}
\def\X{***ATT***}
\newcommand{\red}[1]{\textbf{\color{red} #1}}
\renewcommand{\vec}[1]{\mathbf{#1}}
\renewcommand{\comment}[1]{}
\def\I{{\mathbb I}}
\def\P{{\mathbb P}}
\def\E{{\mathbb E}}
\def\V{{\mathbb V}}
\def\X{{\vec{X}}}
\def\x{{\vec{x}}}
\def\itemrange#1{%
\addtocounter{enumi}{1}%
\edef\labelenumi{\theenumi--\noexpand\theenumi.}%
\addtocounter{enumi}{-1}%
\addtocounter{enumi}{#1}%
\item
\def\labelenumi{\theenumi}}
\newtheorem{thm}{Theorem}
\newtheorem{Cor}{Corollary}
\newtheorem{prop}{Proposition}
\newtheorem{Lemma}{Lemma}
\newtheorem{remark}{Remark}
\newtheorem{Assumption}{Assumption}
\xpatchcmd{\algorithmic}{\setcounter}{\algorithmicfont\setcounter}{}{}
\providecommand{\algorithmicfont}{\small}
\theoremstyle{definition}
\newtheorem{ex}{Example}
\providecommand{\keywords}[1]{\textbf{\textbf{Key Words:}} #1}
\begin{document}


\title{ABC-CDE: Towards Approximate Bayesian Computation with Complex High-Dimensional Data and Limited Simulations}
\date{}
\author{Rafael Izbicki\thanks{Department of Statistics, Federal University of S\~ao Carlos, Brazil.}, \ Ann B. Lee\thanks{Department of Statistics \& Data Science, Carnegie Mellon University, USA.}\  and Taylor Pospisil\thanks{Department of Statistics \& Data Science, Carnegie Mellon University, USA. }}

\maketitle 
\comment{
{\em \small Point out the following in the revisions:
\begin{itemize}
 \item  downplay FlexCode and emphasize that we have (perhaps even for the first time in the ABC literature?) a principled approach for assessing the performance of ABC methods based directly on how well one estimates the posterior. Furthermore, because we can properly assess performance it follows that we are better able to tune methods to estimating posteriors. We chose FlexCode in many experiments because it is a CDE method that can deal with different types of data (and some versions of it can be used for summary statistics selection) --- but the choice of FlexCode is not vital to the discussion of method assessment based on CDE loss.

\item  the Prangle paper only looks at diagnostics such as coverage plots; show with Figure 1 why that may not be enough

\item regression adjustment methods can only adjust for changes in the mean $\E(\theta|\x)$ and variance $\V(\theta|\x)$ in the ABC sample; CDE methods on the other hand estimates the full posterior $f(\theta|\x)$ around $\x=\x_0$ so it can by construction handle general error distributions beyond a shift in the mean and a rescaling of variance.

\item ABC-CDE is less sensitive the the choice of summary statistics: ABC methods rely heavily on the initial choice of summary statistics (explain why); whereas ABC-CDE starts with an ABC sample, the main dimension reduction is implicit in the conditional density estimation with the best estimator chosen so as to minimize the CDE loss. Depending of the choice of estimator, we can adapt to different types of sparse structure in the data, and  just as in high-dimensional regression, handle a large amount of covariates.
\end{itemize}
}
\newpage
}

\begin{abstract} 
Approximate Bayesian Computation (ABC) is typically used when the likelihood is either unavailable or intractable but where data can be simulated under different parameter settings using a forward model. Despite the recent interest in ABC, high-dimensional data and costly simulations still remain a bottleneck in some applications. There is also no consensus as to how to best assess the performance of such methods without knowing the true posterior. We show how a nonparametric conditional density estimation (CDE)
framework,  which we refer to as ABC-CDE, help address three nontrivial challenges in ABC: (i) how to efficiently estimate the posterior
distribution with limited simulations and different types of data,
(ii) how to tune and compare the performance of ABC and related methods in estimating the posterior itself, rather than just certain properties of the density,
and (iii) how to
efficiently choose among a large set of summary statistics based on a CDE surrogate loss.
We provide theoretical and empirical evidence that justify ABC-CDE procedures that {\em directly} estimate and assess the posterior  based on an initial ABC sample, and we describe settings where standard ABC and regression-based approaches are inadequate.

\keywords{nonparametric methods, conditional density estimation, approximate Bayesian computation, likelihood-free inference} 
\end{abstract}

  \section{Introduction}

For many statistical inference problems in the sciences
the relationship between the parameters of interest and observable data is complicated,
but it is possible to simulate realistic data according to some model;
see  \citet{beaumont2010approximate,estoup2012estimation}
for examples in genetics, and  \citet{cameron2012approximate,weyant2013likelihood}
for examples in astronomy. 
In such situations,
the complexity of the data generation process
often prevents the derivation of a sufficiently accurate analytical form for the likelihood function.
One cannot use standard Bayesian tools as no analytical form for the posterior distribution is available.
Nevertheless one can estimate $f(\theta|\x)$, the posterior distribution of the parameters $\theta \in \Theta$ given data $\x \in \mathcal{X}$, by taking advantage of the fact that it is possible to forward simulate data $\x$ under different settings of the parameters $\theta$.
Problems of this type have motivated recent interest in methods of {\it likelihood-free inference}, 
which includes methods of {\it Approximate Bayesian Computation} (ABC; \citealt{marin2012approximate}) 

Despite the recent surge of approximate Bayesian methods, 
several challenges still remain. 
In this work,  we present a \emph{conditional density estimation} (CDE)
framework and  {\em a surrogate loss function for CDE} that address the following three problems:
\begin{enumerate}[label=(\roman*)]
	\item   how to efficiently estimate the posterior density $f(\theta|\x_o)$, where $\x_o$ is the observed sample; in particular, in settings with complex, high-dimensional data  and costly simulations,
		\item  how to  choose tuning parameters and  compare the performance of ABC and related methods based on simulations and observed data only; that is, without knowing the true posterior distribution, 	and
	\item how to best choose
	summary statistics for ABC and related methods when
	given a very large number of candidate summary statistics.
\end{enumerate}


{\bf Existing Methodology.}
There is an extensive literature on ABC methods; we refer the reader to \citet{marin2012approximate,prangle2014diagnostic}
and references therein for a review.
The connection between ABC and CDE has been noted by others;
in fact, ABC itself can be viewed
as a hybrid between
nearest neighbors and kernel density estimators \citep{blum2010approximate,biau2015new}.   As Biau et al. point out, the fundamental problem from a practical perspective is how to select the parameters in ABC methods in the absence of a priori information regarding the posterior $f(\theta|\x_o)$. Nearest neighbors and kernel density estimators are also  known to perform poorly in  settings with a large amount of summary statistics
\citep{blum2010approximate}, and they are difficult to adapt to different data types (e.g.,  mixed discrete-continuous statistics and functional data).  Few works attempt to use other CDE methods to estimate posterior distributions. At the time of submission of this paper, the only works in this direction are \citet{papamakarios2016fast} and \citet{lueckmann2017flexible}, which are based on conditional neural density estimation, \citet{fan2013approximate}  and \citet{li2015extending}, which use a mixture of Gaussian copulas to estimate
the likelihood function, and 
\citet{marin2017abc}, which suggests random forests for quantile estimation.

\comment{
show that improved estimation
can be obtained by relying upon the success of the novel
conditional density estimation framework introduced by \citet{IzbickiLeeFlexCode}, named FlexCode ({\em Flex}ible nonparametric {\em co}nditional {\em d}ensity {\em e}stimation via regression), which is a powerful
tool for converting \emph{any} regression estimator into a conditional
density estimator.  It follows that, in the ABC context, FlexCode allows one to
achieve better statistical performance for a smaller number of simulations, and hence, better performance
for smaller computation times.
}

\comment{
\red{TO BE INCLUDED: Ann says: Related to this, we should also mention the literature on ?postsampling regression adjustment?. Strictly speaking, our approach is a ?postprocessing scheme? if one uses the definition of Marin et al (see Stat Comp 2010, Section 5) ?- however, to me, it seems that existing work all relies on some post adjustment *regression* of theta on x (or regression of x on theta) rather than conditional density estimation. In any case, we need to cite the current literature and point out how our work is different.}
}

Although the above mentioned methods utilize specific CDE models to estimate posterior distributions, they do not fully explore other
advantages of a CDE framework;  such as, in methods assessment, in variable selection, 
and in 
 tuning the final estimates
with CDE as a goal (see Sections \ref{sec::comparing}, \ref{sec::summary}
and \ref{sec::beyond_ABC}). 
Summary statistics selection is indeed a 
nontrivial challenge in likelihood-free inference: 
ABC methods 
depend on the choice of statistics and distance function of observables when comparing observed and simulated data, and using the ``wrong'' summary statistics can 
 dramatically affect  their performance. For a general review of dimension reduction methods for ABC, we refer the reader to \citet{blum2013comparative},
who classify current approaches in three classes: (a) best subset selection approaches, (b) projection techniques, and (c)
regularization techniques. Many of these approaches still face either 
significant computational issues or attempt to find good summary statistics for certain characteristics of the posterior rather than the {\em entire} posterior itself.
  For instance,
\citet{creel2016selection} propose a best subset selection of summary statistics based on improving the estimate
of the posterior mean $\E[\theta|\x]$. There are no guarantees however that statistics that
lead to good estimates of $\E[\theta|\x]$ will  
be sufficient for $\theta$ or even yield reasonable estimates of   $f(\theta|\x)$.\footnote{As an example,
if $X_1,\ldots,X_n \sim \mbox{Unif}(\theta,\theta+1)$, the minimal sufficient summary statistic for $\theta$
is $(\min\{X_1,\ldots,X_n\},\max\{X_1,\ldots,X_n\})$.  The optimal statistic for estimating the posterior mean, on the other hand, is $E=\E[\theta|\x]$ \citep[Section 2.3]{fearnhead2012constructing}.}
We will elaborate on this point in Section \ref{sec::comparing}.

   \begin{figure}[H]
   	\centering
   	\includegraphics[page=1,width=6.6in,height=3.3in]{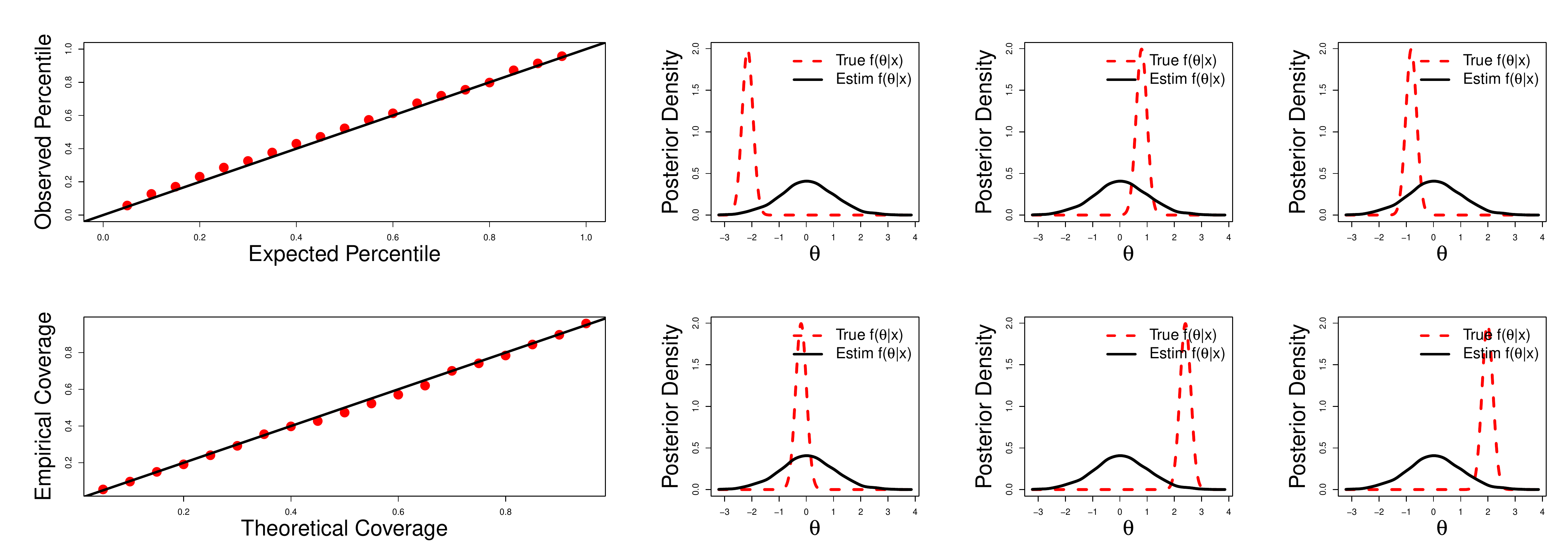} 
   	\vspace{-10mm}
   	\caption{\footnotesize   Limitations of diagnostic tests in conditional density estimation. The PP and coverage plots to the left indicate an excellent fit of $f(\theta \vert x)$ but, as indicated by the examples to the right of a few different values of $x$, the estimated posterior densities (solid black lines) are very far from the true densities (red dashed lines).} 
   	\label{fig::knn_example}
   \end{figure}

Moreover, the current literature on likelihood-free inference lacks methods that allow
one to directly compare the performance of different posterior distribution estimators. 
Given a collection of estimates $\widehat{f}_1(\theta|\x_o), \ldots, \widehat{f}_m(\theta|\x_o)$
(obtained by, e.g., ABC methods with different
tolerance levels, sampling techniques, and so on),  an open problem is to how to select the  
 estimate that is closest to the true posterior density $f(\theta|\x_o)$ for observed data $\x_o$.
Some goodness-of-fit techniques
have been proposed (for example, \citealt{prangle2014diagnostic} compute the goodness of fit based on coverage properties), 
 but although diagnostic tests are useful, they do not capture all aspects of the density estimates. Some density estimates which are not close to the true density can pass all tests \citep{breiman2001statistical,bickel2006tailor}, and the situation is even worse in conditional density estimation.   Figure \ref{fig::knn_example}  shows a toy example where both probability-probability (PP) and coverage plots wrongly indicate an excellent fit,\footnote{See \citet{IzbickiLeeFlexCode} for details on computations of these plots.} but 
 the estimated posterior distributions
    	are far from the true densities; here, $\theta|x\sim  \operatorname{Normal}(x,0.3^2)$ and $X \sim \operatorname{Normal}(0,1)$. 
    	Indeed, standard diagnostic tests will not detect an obvious flaw in conditional density estimates $\widehat{f}(\theta|\x)$ that, as in this example, are equal to the marginal distribution $f(\theta)=\int f(\theta|\x') f(\x') d\x'$ for all  $\x$.


In this paper, we  show how one can improve ABC methods with a novel CDE surrogate loss function (Eq. \ref{eq::surrogateLoss})  that measures how well one estimates the entire posterior distribution ${f}(\theta|\x_o)$; see Section
\ref{sec::comparing} for a discussion of its theoretical properties.  Our proposed method, ABC-CDE, starts with a rough approximation from an ABC sampler and then {\em directly} 
 estimates the conditional density exactly at the point $\x = \x_o$ using a nonparametric conditional density estimator.
 Unlike other ABC post-adjustment techniques  in the literature  (e.g. \citet{beaumont2002approximate} and \citet{blum2010non}), our method is optimized for estimating posteriors, and corrects for changes in the ABC posterior sample beyond the posterior mean and variance. We also present a  general framework (based on CDE) that can handle different types of data (including functional data, mixed variables, structured data, and so on) as well as a larger number of summary statistics. With, for example, FlexCode~\citep{IzbickiLeeFlexCode} 
 one can convert any existing regression estimator to a conditional density estimator. Recent neural mixture density networks  that directly estimate posteriors for complex data \citep{papamakarios2016fast,lueckmann2017flexible}  also fit into this ABC-CDE framework, and are especially promising for image data.
 Hence, with ABC-CDE, we take a different approach to address the curse of dimensionality than in traditional likelihood-free inference methods. In standard ABC,  it is essential to choose (a smaller set of) informative summary statistics to properly measure user-specified distances between observed and simulated data. The main dimension reduction in our framework is implicit in the conditional density estimation and our CDE loss function. Depending of the choice of estimator, we can adapt to different types of sparse structure in the data, and just as in high-dimensional regression, handle a large amount of covariates, even without relying on a {\em prior} dimension reduction of the data and a user-specified distance function of observables. 


Finally, we note that ABC
summary statistic selection and goodness-of-fit 
techniques are typically designed to estimate posterior distributions accurately  
for every sample $\x$. In reality, we often only care about estimates for the particular sample $\x_o$ that is observed, and even if a method produces poor estimates for some $f(\theta|\x')$ it can still produce good estimates for
$f(\theta|\x_o)$. The methods we introduce in this paper take this into consideration, and directly aim at constructing, evaluating and tuning estimators for the posterior density $f(\theta|\x_o)$ at the {\em observed} value $\x_o$.

The organization of the paper is as follows: Section \ref{sec::methods}
describes and presents theoretical results for how a CDE framework and a surrogate loss function address issues (i)--(iii).
 Section \ref{sec::experiments} includes 
simulated experiments 
that demonstrate that our proposed methods  work in practice.
In Section \ref{sec::beyond_ABC}, we revisit CDE in the context of ABC and 
demonstrate how direct estimation of posteriors with CDE and the surrogate loss can replace further iterations with standard ABC. We then end by providing links to  general-purpose CDE software that can be used in likelihood-free inference in different settings. We refer the reader to the Appendix for proofs, a comparison to post-processing regression adjustment methods, and two applications in astronomy.


\section{Methods}
\label{sec::methods}

In this section we propose a CDE framework 
for (i) estimating
  the posterior density (Section \ref{sec::postEst}),
 (ii)  comparing
  the performance of ABC and related methods
  (Section \ref{sec::comparing}), and
  (iii)  choosing optimal
  summary statistics (Section \ref{sec::summary}).

\subsection{Estimating the Posterior Density via CDE}
\label{sec::postEst}

Given a prior distribution $f(\theta)$
and a likelihood function $f(\x|\theta)$,
our goal is to compute  the posterior distribution $f(\theta|\x_o)$, where $\x_o$
is the observed sample. 
We assume we know how to sample from
$f(\x|\theta)$  for a fixed value of $\theta$.

A naive way of estimating $f(\theta|\x_o)$  via CDE methods is to 
first generate an i.i.d.~sample $\mathcal{T} = \{(\theta_1,\X_1), \ldots, (\theta_B,\X_B)\}$ 
by sampling $\theta \sim f(\theta)$ and then $\X \sim f(\x|\theta)$ for each pair. One applies the CDE method of choice to $\mathcal{T}$,  and then simply evaluates the estimated density $\widehat{f}(\theta|\x)$ at $\x=\x_o$. The naive approach however may lead to poor results because some $\x$ are far from  the observed data $\x_o$. To put it differently, standard conditional density estimators are designed
to estimate $f(\theta|\x)$ for every $\x$, but in ABC applications we are only interested in $\x_o$.

To solve this issue, 
one can instead estimate $f(\theta|\x)$ using a training set
$\mathcal{T}$ that only consists of sample points $\x$ close to $\x_o$.
This training set is created by a simple ABC rejection sampling algorithm.
More precisely: for a fixed distance function $d(\x,\x_o)$ (that could be based on summary statistics) and  tolerance level $\epsilon$,  
we construct a sample $\mathcal{T}$  according to 
Algorithm \ref{alg::naiveAbc}.
   \begin{algorithm}
  \caption{ \small Training set for CDE via Rejection ABC }\label{alg::naiveAbc}
  \algorithmicrequire \ {\small Tolerance level $\epsilon$, number of desired sample points $B$,  distance function $d$, sample $\x_0$}
  
  \algorithmicensure \ {\small Training set $\mathcal{T}$ which approximates the joint distribution of $(\theta,\X)$
  in a neighborhood of $\x_0$}
  \begin{algorithmic}[1]
  \State Let $\mathcal{T}=\{\}$
     \While{$|\mathcal{T}|<B$}
     \State Sample $\theta \sim f(\theta)$
\State  Sample $\X \sim f(\x|\theta)$
     \State  If $d(\x,\x_o)< \epsilon,$ let $\mathcal{T} \longleftarrow \mathcal{T} \cup \{(\theta,\x)\}$
     \EndWhile
     \State \textbf{return} $\mathcal{T}$  
  \end{algorithmic}
  \end{algorithm}
To this new training set $\mathcal{T}$, we then apply our conditional density estimator, and finally evaluate the estimate at $\x=\x_o$. 
This procedure can be regarded as an ABC post-processing technique \citep{marin2012approximate}:
 the  first  (ABC) approximation to the posterior
is obtained via the sample $\theta_1,\ldots,\theta_B$,
which can be seen as a sample from $f(\theta|d(\X,\x_o)<\epsilon)$.  That is, the standard ABC rejection sampler is implicitly performing conditional density estimation using an i.i.d.  sample
from the joint distribution of the data and the parameter.  We take the results of the ABC sampler and 
estimate  
the conditional density exactly at the point $\x=\x_o$ 
using other forms of conditional density estimation. 
If done correctly, the idea is that we can improve upon the original ABC approximation {\em even without}, as is currently the norm, simulating new data or decreasing the tolerance level $\epsilon$.

\begin{remark}  \label{remark::summary} For simplicity, we focus on standard ABC rejection sampling, but one can use other ABC methods, such as sequential ABC \citep{sisson2007sequential} or
population Monte Carlo ABC \citep{beaumont2009adaptive}, to construct 
$\mathcal{T}$. The data $\x$ can either be 
the original data vector, or a vector of
summary statistics. We revisit the issue of summary statistic selection in Section \ref{sec::summary}. 
\end{remark}

Next, we review FlexCode \citep{IzbickiLeeFlexCode}, which we currently use as a general-purpose methodology for estimating $f(\theta|\x)$. However, many aspects of the paper  (such as the novel approach to method selection 
without knowledge of the true posterior)
 hold for other CDE and ABC methods as well. 
In Section \ref{sec::beyond_ABC}, 
for example, we use our surrogate loss to choose the tuning parameters of a  nearest-neighbors kernel density estimator (Equation \ref{eq:kernel-nn-cde}), which includes ABC as a special case. 

\vspace{2mm}
\textbf{FlexCode as a ``Plug-In'' CDE Method.} For simplicity, 
 assume that we are interested in estimating the posterior distribution of a single parameter
$\theta \in \Re$, even if there are several parameters in the problem.\footnote{Most inference problems can be expressed as the computation of 
unidimensional quantities.
Say one is interested in estimating $m$ functions of
parameters of the model $\boldsymbol{\theta}$; $g_1,\ldots,g_m$.
One can then
(i) use ABC to obtain a single simulation set  $\mathcal{T}= \{(\boldsymbol{\theta}_1,\X_1), \ldots, (\boldsymbol{\theta}_B,\X_B)\}$,  (ii) 
for each function $g_i$, compute $\mathcal{T}^{g_i}= \{(g_i(\boldsymbol{\theta}_1),\X_1), \ldots, (g_i(\boldsymbol{\theta}_B),\X_B)\}$, and then (iii)
fit a (univariate) conditional density estimator to $\mathcal{T}^{g_i}$ to estimate $f(g_i(\boldsymbol{\theta})|\x_o)$. Note that (ii) is typically fast and (iii) can be performed in parallel; hence, the posterior distributions of all quantities of interest
can be estimated with essentially no additional computational cost.
}
Similar ideas can be used if one is interested in estimating the (joint) posterior distribution  for more than one parameter (see \citealt{IzbickiLeeFlexCode} for more details on how FlexCode can be adapted to those settings).
 In the context of ABC, $\x$ typically represents a set of statistics computed from the original data; recall Remark \ref{remark::summary}. We start by 
specifying an orthonormal basis $(\phi_i)_{i \in \mathbb{N}}$ in $\Re$.
This basis will be used to model the density $f(\theta|\x)$ {\em as a function of $\theta$}. 
Note that there is a wide range of (orthogonal) bases one can choose from to capture any challenging shape of the density function of interest 
\citep{mallat1999wavelet}.
For instance, a natural choice for reasonably smooth functions $f(\theta|\x)$ is the Fourier basis:
\[
\phi_1(\theta)=1;\hspace{9mm} \phi_{2i+1}(\theta)=\sqrt{2}\sin{\left(2\pi i\theta \right)}, \ i\in \mathbb{N}; \hspace{9mm} \phi_{2i}(\theta)=\sqrt{2}\cos{\left(2\pi i\theta \right)}, \ i\in \mathbb{N}
\]

The key idea of FlexCode is to notice that, if $\int  f^2(\theta|\x)d\theta < \infty$ for every $\x \in \mathcal{X}$,
then it is possible to expand $f(\theta|\x)$ as
$f(\theta|\x)=\sum_{i \in \mathbb{N}}\beta_{i }(\x)\phi_i(\theta),
$ where the expansion coefficients are given by
\begin{align}
\label{eq::beta}
\beta_{i }(\x) =  \E\left[\phi_i(\theta)|\x\right].
\end{align}
That is, each  $\beta_{i }(\x)$ is a {\em regression function}. 
The FlexCode estimator is defined as
$\widehat{f}(\theta|\x)=\sum_{i=1}^I\widehat{\beta}_{i}(\x)\phi_i(\theta),
$
where  $\widehat{\beta}_{i }(\x)$ are regression estimates. The cutoff $I$ in the series expansion is a tuning parameter that controls the bias/variance tradeoff in the final density estimate, and which we choose via data splitting (Section \ref{sec::comparing}). 

With FlexCode, the problem of high-dimensional conditional density estimation boils down to choosing appropriate methods for estimating the  regression functions $\E \left[\phi_i(\theta)|\x\right]$.  The key advantage of FlexCode is that it offers more flexible CDE methods: By taking advantage of existing regression methods, which can be ``plugged in'' into the CDE estimator, we can adapt to the intrinsic structure of high-dimensional data (e.g., manifolds, irrelevant covariates, and different relationships between $\x$ and the response $\theta$), as well as handle different data types (e.g., mixed data and functional data) and massive data sets (by using, e.g., xgboost \citep{chen2016xgboost}). See \citet{IzbickiLeeFlexCode} and the upcoming LSST-DESC photo-z DC1 paper for examples.
  An implementation of 
FlexCode  that allows for wavelet bases can be found at \url{https://github.com/rizbicki/FlexCoDE} (R; \citealt{R}) and \url{https://github.com/tpospisi/flexcode} (Python).

\subsection{Method Selection: Comparing Different Estimators of the Posterior}
\label{sec::comparing}

{\noindent \bf  Definition of a Surrogate Loss.}  Ultimately, we need to be able to decide which
approach is best for approximating $f(\theta|\x_o)$ without knowledge of the true posterior.
Ideally we would like to find an estimator
$\widehat{f}(\theta|\x_o)$ such that the   integrated squared-error (ISE)  loss
\begin{align}
\label{eq::trueLoss}
L_{\x_o}(\widehat{f},f)= \int (\widehat{f}(\theta|\x_o)-f(\theta|\x_o))^2  d\theta
\end{align}
is small. Unfortunately, one cannot compute $L_{\x_o}$ without knowing the true $f(\theta|\x_o)$,  which is why method selection is so hard in practice.
To overcome this issue, we propose the \emph{surrogate} loss function
\begin{align}
\label{eq::surrogateLoss}L_{\x_o}^\epsilon(\widehat{f},f)=\int \! \int (\widehat{f}(\theta|\x)-f(\theta|\x))^2 \frac{f(\x) \I(d(\x,\x_o)< \epsilon)}{\P(d(\X,\x_o)<\epsilon)} d\theta d\x, 
\end{align}
which enforces a close fit in an $\epsilon$-neighborhood of $\x_o$.
Here, the denominator $\P(d(\X,\x_o)<\epsilon)$
is simply a constant that makes $\frac{f(\x) \I(d(\x,\x_o)< \epsilon)}{\P(d(\X,\x_o)<\epsilon)}$
a proper density in $\x$. 

The advantage with the above definition is that we can directly {\em estimate} $L_{\x_o}^\epsilon(\widehat{f},f)$ from the ABC posterior sample. Indeed, it holds that $L_{\x_o}^\epsilon(\widehat{f},f)$ can be written as
\begin{align}
\label{eq::LHat}
&\int \! \int \widehat{f}^2(\theta|\x) \frac{f(\x) \I(d(\x,\x_o)< \epsilon)}{\P(d(\X,\x_o)<\epsilon)}  d\theta d\x -2 
\int \! \int  \widehat{f}(\theta|\x)f(\theta|\x) \frac{f(\x) \I(d(\x,\x_o)< \epsilon)}{\P(d(\X,\x_o)<\epsilon)} d\theta d\x
+ K_f \notag \\
&  =\E_{\X'}\left[\int \widehat{f}^2(\theta|\X') d\theta \right]-2 \E_{(\theta',\X')}\left[\widehat{f}(\theta'|\X')\right]+K_f,
\end{align}
where $(\theta',\X')$ is a random vector with distribution induced by a sample generated according to the ABC rejection procedure in Algorithm \ref{alg::naiveAbc};
 and $K_f$ is a constant that does not depend on the estimator $\widehat{f}(\theta|\x_o)$.
It follows that, given an independent validation or test sample of size $B'$ of the ABC algorithm,  
$(\theta'_1,\X'_1), \ldots, (\theta'_B,\X'_B)$,
we can estimate $L_{\x_o}^\epsilon(\widehat{f},f)$ (up to the constant $K_f$) via
\begin{align}
\label{eq::lossEstimate}
\widehat{L}_{\x_o}^\epsilon(\widehat{f},f)=\frac{1}{B'}\sum_{k=1}^{B'} \int \widehat{f}^2(\theta|\x'_k) d\theta  -2 \frac{1}{B'}\sum_{k=1}^{B'} \widehat{f}(\theta'_k|\x'_k)
 \end{align}
 When given a set of estimators $\mathcal{F}=\{\widehat{f}_1,\ldots,\widehat{f}_m\}$, 
we select
the method with the smallest estimated surrogate loss,
$$\widehat{f}^* := \arg \min_{\widehat{f} \in \mathcal{F}} \widehat{L}_{\x_o}^\epsilon(\widehat{f},f)$$

\begin{ex}[Model selection based on CDE surrogate loss versus regression MSE loss]
 Suppose we wish to estimate the posterior distribution of the mean of a Gaussian distribution with variance one. The left plot of Figure \ref{fig::CDEsReg} shows the performance of 
a 
 nearest-neighbors kernel density estimator
(Equation~\ref{eq:kernel-nn-cde}) with the kernel bandwidth $h$ and the number of nearest neighbors $k$ chosen via (i) the estimated surrogate loss of Equation \ref{eq::lossEstimate}
versus (ii) a standard regression mean-squared-error loss.\footnote{The data are simulated using the same Gaussian model as in Section~\ref{sec::beyond_ABC}, but  with $n=10$,
 $\bar{x}=0.5$ and at an acceptance ratio
 equal to 1 (that is, $\epsilon \rightarrow \infty$) and the number of simulations, B, varying.} The proposed surrogate loss clearly leads to better estimates of the posterior $f(\theta|\x_o)$ with smaller true loss (Equation \ref{eq::trueLoss}).
 Indeed, as the right plot shows, if one chooses tuning parameters via the standard regression mean-squared-error loss, the estimates end up being very far from the true distribution.
\end{ex}

  \begin{figure}[H]
  	\centering
  	\includegraphics[page=1,scale=0.29]{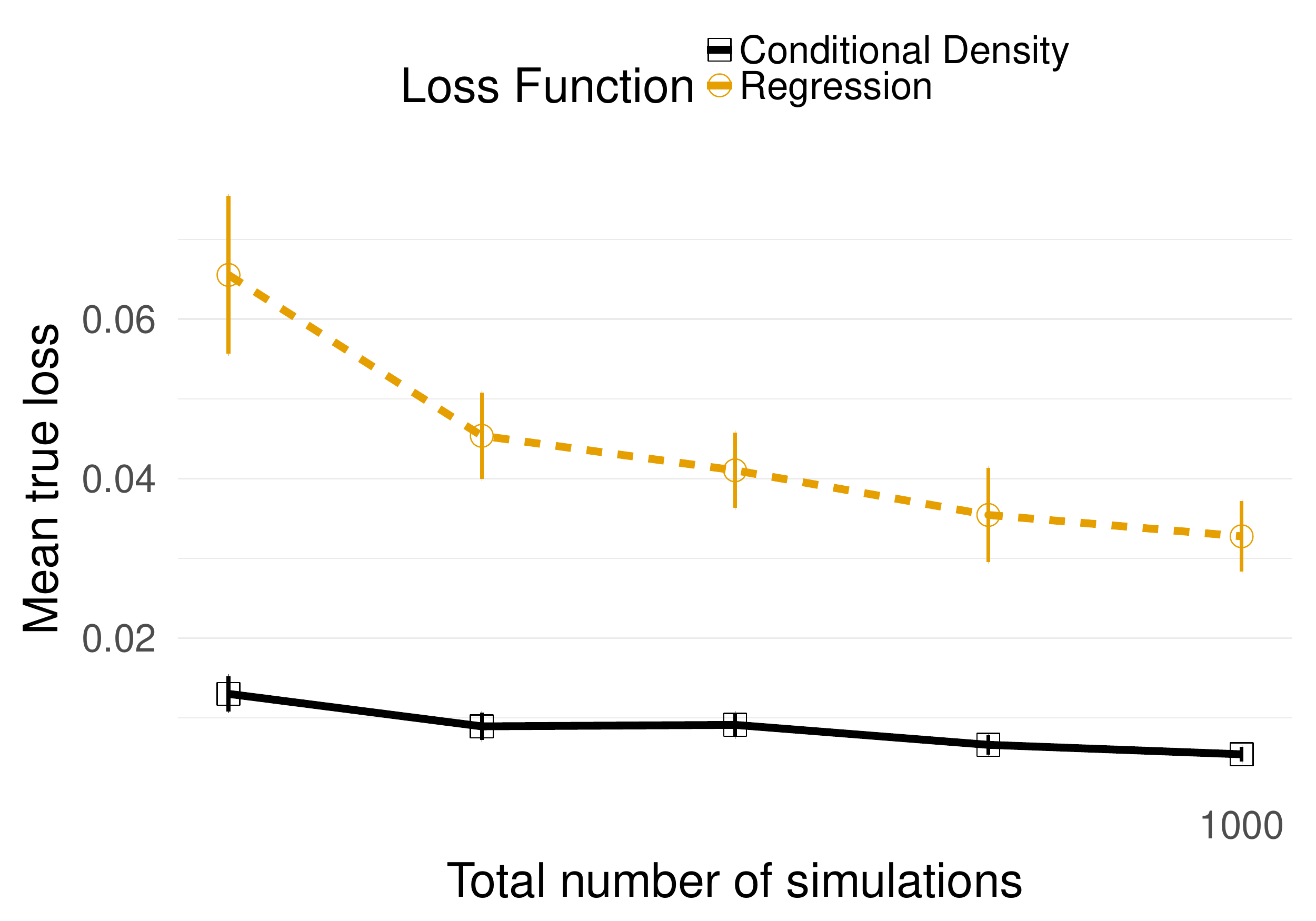}
  	  	\includegraphics[page=1,scale=0.29]{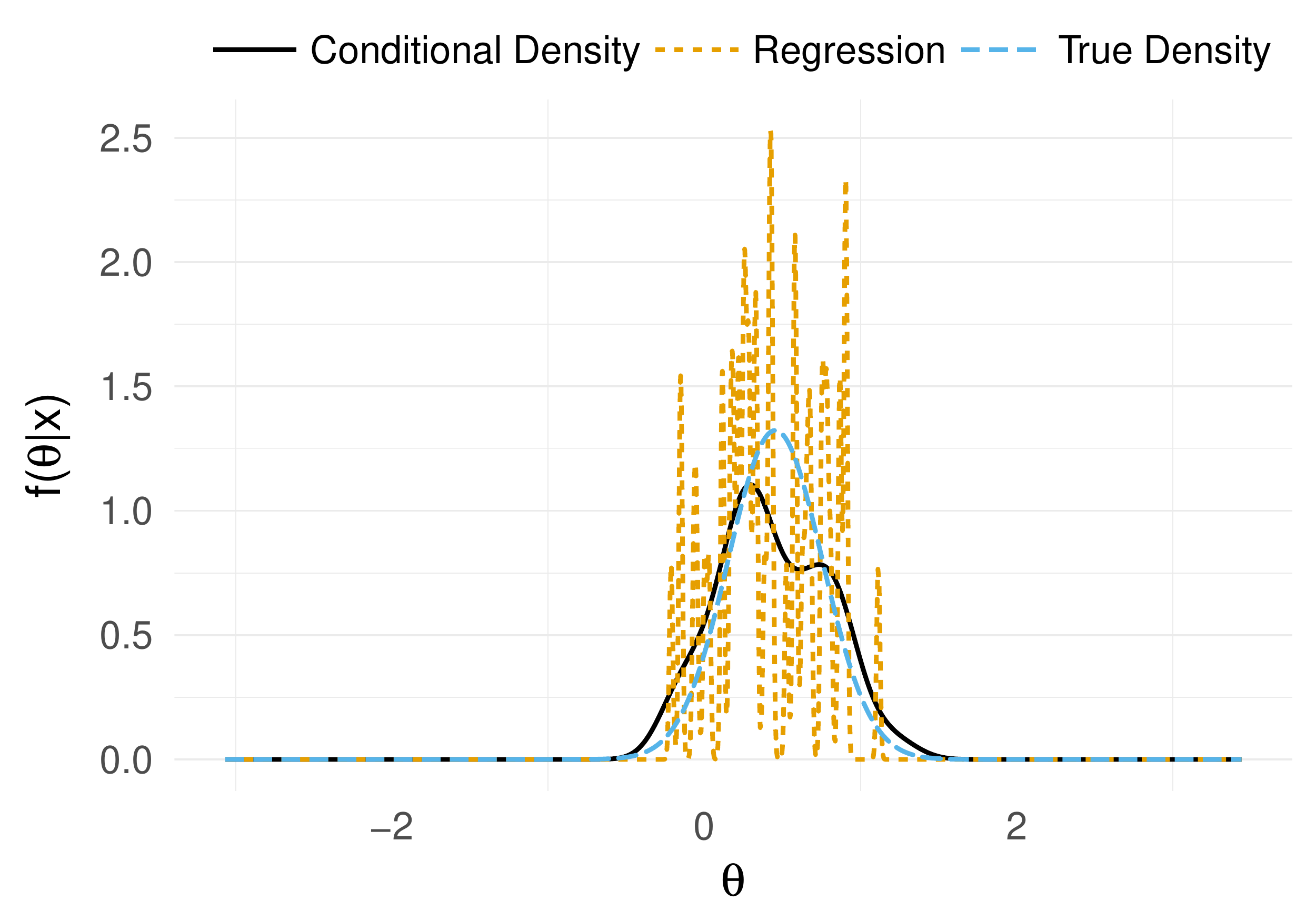}
  	\vspace{-3mm}
  	\caption{\footnotesize    {\em Left:} Performance of 
  	a nearest-neighbors kernel density estimator  with
  tuning parameters chosen via the surrogate loss of Equation \ref{eq::lossEstimate} (continuous line)
  	and via standard regression MSE loss (dashed line). {\em Right:} Estimated posterior distributions
  	tuned according to both criteria after 1000 simulations. The surrogate loss of Equation \ref{eq::lossEstimate} clearly leads to a better approximation.} 
  	\label{fig::CDEsReg}
  \end{figure}

{\noindent \bf  Properties of the Surrogate Loss.} 
Next we investigate the conditions under which the estimated surrogate loss  is close to the true loss; the proofs can be found in Appendix. The following theorem states that, if $(\widehat{f}(\theta|\x)-f(\theta|\x))^2$ is a smooth function of $\x$, then the (exact) surrogate loss $L_{\x_o}^\epsilon$ is close to $L_{\x_o}$ for small values of $\epsilon$.  

\begin{thm}
\label{Eq::LossBias}
Assume that, for every $\theta \in \Theta$, $g_\theta(\x):= (\widehat{f}(\theta|\x)-f(\theta|\x))^2$
satisfies the  H\"{o}lder condition of order $\beta$ with a constant $K_\theta$\footnote{That is, there exists a constant $K_\theta$ such that for every $\x,\vec{y} \in \Re^d$  $|g_\theta(\x)-g_\theta(\vec{y})|\leq K_\theta (d(\x,\vec{y}))^\beta$. } such that $K_H:=\int K_\theta d\theta < \infty$. Then
$|L_{\x_o}^\epsilon(\widehat{f},f)-L_{\x_o}(\widehat{f},f)| \leq K_H \epsilon^\beta = O(\epsilon^\beta)$
\end{thm}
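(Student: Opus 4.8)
The plan is to use that the weight in the surrogate loss is already a probability density in $\x$, so the constant true loss can be rewritten as an integral against this same weight; the difference between the two losses then collapses to a weighted average of the increments $g_\theta(\x)-g_\theta(\x_o)$, which the H\"{o}lder hypothesis bounds termwise.

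First I would abbreviate $w(\x):=\frac{f(\x)\I(d(\x,\x_o)<\epsilon)}{\P(d(\X,\x_o)<\epsilon)}$ and recall, as noted right after Equation \ref{eq::surrogateLoss}, that the normalizing denominator makes $\int w(\x)\,d\x=1$. Since $L_{\x_o}(\widehat{f},f)=\int g_\theta(\x_o)\,d\theta$ does not depend on $\x$, multiplying by $\int w(\x)\,d\x=1$ lets me write it as $\int\!\int g_\theta(\x_o)\,w(\x)\,d\x\,d\theta$, i.e.\ against the very same weight that appears in $L_{\x_o}^\epsilon(\widehat{f},f)=\int\!\int g_\theta(\x)\,w(\x)\,d\x\,d\theta$. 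Subtracting yields the single identity
$$L_{\x_o}^\epsilon(\widehat{f},f)-L_{\x_o}(\widehat{f},f)=\int\!\int\bigl(g_\theta(\x)-g_\theta(\x_o)\bigr)\,w(\x)\,d\x\,d\theta.$$

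Next I would take absolute values, move them inside the integrals, and apply the H\"{o}lder bound: because $w$ is supported on $\{\x : d(\x,\x_o)<\epsilon\}$, on that set $|g_\theta(\x)-g_\theta(\x_o)|\le K_\theta\,(d(\x,\x_o))^\beta\le K_\theta\,\epsilon^\beta$. Substituting this and using $\int w(\x)\,d\x=1$ factors the bound into $\epsilon^\beta\int K_\theta\,d\theta=K_H\,\epsilon^\beta$, which is exactly the claim. The computation is short, and the only thing needing care is the interchange of the order of integration together with pulling the absolute value inside; both are justified by Tonelli's theorem since the integrand $|g_\theta(\x)-g_\theta(\x_o)|\,w(\x)$ is nonnegative, and the finiteness of $K_H=\int K_\theta\,d\theta$ guarantees the bound (and hence every integral in sight) is well defined. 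Thus the main ``obstacle'' is merely the measurability/integrability bookkeeping, all of which is handed to us by the stated hypotheses.
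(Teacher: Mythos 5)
Your proposal is correct and follows essentially the same route as the paper's own proof: both rewrite $L_{\x_o}(\widehat{f},f)$ as a double integral against the unit-mass weight $w(\x)$, collapse the difference of losses to a weighted average of $g_\theta(\x)-g_\theta(\x_o)$, and bound it termwise by $K_\theta\,\epsilon^\beta$ on the support $\{d(\x,\x_o)<\epsilon\}$ before integrating out to get $K_H\,\epsilon^\beta$. Your explicit appeal to Tonelli for the interchange and the absolute-value step is a minor bookkeeping addition the paper leaves implicit.
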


The next theorem shows that the estimator $\widehat{L}_{\x_o}^\epsilon$ in Equation \ref{eq::lossEstimate}
does indeed converge to the true loss $L_{\x_o}(\widehat{f},f)$. 

\begin{thm}
Let $K_f$ be as in Equation \ref{eq::LHat}. Under the assumptions of Theorem \ref{Eq::LossBias},
$|\widehat{L}_{\x_o}^\epsilon(\widehat{f},f)+K_f-L_{\x_o}(\widehat{f},f)| = O(\epsilon^\beta)+O_P(1/\sqrt{B'})$
\end{thm}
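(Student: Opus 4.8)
The plan is to split the target quantity by the triangle inequality into a deterministic bias term, which Theorem~\ref{Eq::LossBias} already controls, and a stochastic Monte~Carlo error term, which I would handle by a law-of-large-numbers argument. Concretely, I would write
\begin{align*}
|\widehat{L}_{\x_o}^\epsilon(\widehat{f},f)+K_f-L_{\x_o}(\widehat{f},f)|
\leq
\underbrace{|\widehat{L}_{\x_o}^\epsilon(\widehat{f},f)+K_f-L_{\x_o}^\epsilon(\widehat{f},f)|}_{\text{estimation error}}
+
\underbrace{|L_{\x_o}^\epsilon(\widehat{f},f)-L_{\x_o}(\widehat{f},f)|}_{\text{bias}}.
\end{align*}
The bias term is exactly the quantity bounded in Theorem~\ref{Eq::LossBias}, so it is $O(\epsilon^\beta)$ with no further work.

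For the estimation error, I would invoke the representation in Equation~\ref{eq::LHat}, namely $L_{\x_o}^\epsilon(\widehat{f},f)-K_f = \E_{\X'}[\int \widehat{f}^2(\theta|\X')\,d\theta] - 2\,\E_{(\theta',\X')}[\widehat{f}(\theta'|\X')]$, where the expectations are taken under the distribution induced by Algorithm~\ref{alg::naiveAbc}. Since $\widehat{L}_{\x_o}^\epsilon(\widehat{f},f)$ in Equation~\ref{eq::lossEstimate} is precisely the empirical average of the same two integrands over the validation sample, and since the $K_f$ terms cancel, the estimation error reduces to the deviation of two sample means from their population expectations:
\begin{align*}
\widehat{L}_{\x_o}^\epsilon+K_f-L_{\x_o}^\epsilon
&= \left(\frac{1}{B'}\sum_{k=1}^{B'}\int \widehat{f}^2(\theta|\x'_k)\,d\theta - \E_{\X'}\!\left[\int \widehat{f}^2(\theta|\X')\,d\theta\right]\right) \\
&\quad - 2\left(\frac{1}{B'}\sum_{k=1}^{B'}\widehat{f}(\theta'_k|\x'_k) - \E_{(\theta',\X')}\!\left[\widehat{f}(\theta'|\X')\right]\right).
\end{align*}

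Because $\widehat{f}$ is fit on a training set that is independent of the validation sample, I would condition on $\widehat{f}$ and treat the summands $\int \widehat{f}^2(\theta|\X'_k)\,d\theta$ and $\widehat{f}(\theta'_k|\X'_k)$ as i.i.d.\ draws from the ABC-induced distribution. Provided these integrands have finite variance, the central limit theorem (or Chebyshev's inequality) gives that each sample mean converges to its expectation at rate $O_P(1/\sqrt{B'})$, and summing the two contributions yields an overall estimation error of $O_P(1/\sqrt{B'})$; combining with the $O(\epsilon^\beta)$ bias bound delivers the claimed rate. The main obstacle is verifying the finite-variance condition that upgrades mere consistency to the explicit $1/\sqrt{B'}$ rate: one needs $\E_{\X'}[(\int \widehat{f}^2(\theta|\X')\,d\theta)^2]<\infty$ and $\E_{(\theta',\X')}[\widehat{f}^2(\theta'|\X')]<\infty$, which in practice follows from a uniform-boundedness or integrability assumption on $\widehat{f}$ that should be stated alongside the theorem. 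The independence of the validation sample from $\widehat{f}$ is what makes the conditioning step legitimate and keeps the summands genuinely i.i.d.
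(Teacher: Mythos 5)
Your proposal is correct and follows essentially the same route as the paper's own proof: the same triangle-inequality split into the bias term (handled by Theorem~\ref{Eq::LossBias}) and the Monte Carlo estimation error, which the paper likewise controls by observing that $\widehat{L}_{\x_o}^\epsilon(\widehat{f},f)+K_f$ is an average of $B'$ i.i.d.\ random variables with mean $L_{\x_o}^\epsilon(\widehat{f},f)$. Your remark that a finite-variance (or boundedness) condition on $\widehat{f}$ is needed to secure the $O_P(1/\sqrt{B'})$ rate is a fair point that the paper leaves implicit here, though it is made explicit later via the assumption $|\widehat{f}(\theta|\x)|\leq M$ in Lemma~\ref{lemma::estimateToEps}.
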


%

Under some additional conditions, it is also possible to guarantee that not only 
the estimated surrogate loss is close to the true loss, but that the result holds 
uniformly for a finite class of estimators of the posterior distribution. This is formally stated in the following theorem.

\begin{thm}
\label{thm::estimateToTrueUnion}
Let $\mathcal{F}=\{\widehat{f}_1,\ldots,\widehat{f}_m\}$ be a   set of estimators of $f(\theta|\x_o)$.
Assume that there exists $M$ such that $|\widehat{f}_i(\theta|\x)|\leq M$ for every $\x$, $\theta$, and $i=1,\ldots,m$.\footnote{Such
assumptions hold if the $\widehat{f}_i$'s are obtained
via FlexCode with bounded basis functions (e.g., Fourier basis)
or a kernel density estimator on the ABC samples.}
Moreover, assume that
for every $\theta \in \Theta$, $g_{i,\theta}(\x):= (\widehat{f}_i(\theta|\x)-f(\theta|\x))^2$
satisfies the  H\"{o}lder condition of order $\beta$ with constants $K_\theta$
such that $ K_H:=\int K_\theta d\theta < \infty$. Then, for every $\nu>0$,
 $$\P\left(\max_{\widehat{f} \in \mathcal{F}} |\widehat{L}_{\x_o}^\epsilon(\widehat{f},f)+K_f-L_{\x_o}(\widehat{f},f)| \geq K_\epsilon \epsilon^\beta+	\nu\right) \leq  2m e^{-\frac{B'\nu^2}{2(M^2+2M)^2}}.$$
\end{thm}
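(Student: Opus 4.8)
The plan is to decompose, for each fixed estimator $\widehat{f}\in\mathcal{F}$, the deviation into a deterministic bias term and a stochastic fluctuation term, bound each of them in a way that is uniform over the finite family, and then combine them through a union bound. Concretely I would write
\[
|\widehat{L}_{\x_o}^\epsilon(\widehat{f},f)+K_f-L_{\x_o}(\widehat{f},f)|\;\le\;|\widehat{L}_{\x_o}^\epsilon(\widehat{f},f)+K_f-L_{\x_o}^\epsilon(\widehat{f},f)|\;+\;|L_{\x_o}^\epsilon(\widehat{f},f)-L_{\x_o}(\widehat{f},f)|,
\]
where the first summand is a fluctuation of an empirical average about its mean and the second is the approximation error between the surrogate and the true loss. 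The second term is handled immediately by Theorem \ref{Eq::LossBias}: since every $\widehat{f}_i$ satisfies the same H\"older assumption with the common integrable constant $K_H=\int K_\theta\,d\theta$, we have $|L_{\x_o}^\epsilon-L_{\x_o}|\le K_H\epsilon^\beta$, and, importantly, this deterministic bound is identical for all $m$ members of $\mathcal{F}$, so it contributes the $K_\epsilon\epsilon^\beta$ term (with $K_\epsilon=K_H$) outside the probability.

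For the fluctuation term I would use the identity in Equation \ref{eq::LHat}, which shows that $\widehat{L}_{\x_o}^\epsilon+K_f$ is an \emph{unbiased} estimator of $L_{\x_o}^\epsilon$. Setting
\[
W_k:=\int \widehat{f}^2(\theta|\X'_k)\,d\theta-2\,\widehat{f}(\theta'_k|\X'_k),
\]
we have $\widehat{L}_{\x_o}^\epsilon=\tfrac{1}{B'}\sum_{k=1}^{B'}W_k$ and $\E[W_k]=L_{\x_o}^\epsilon-K_f$, where the validation pairs $(\theta'_k,\X'_k)$ are i.i.d.\ draws from the distribution induced by Algorithm \ref{alg::naiveAbc} and are independent of the training data used to fit $\widehat{f}$ (so $\widehat{f}$ may be treated as a fixed, deterministic function when we condition on the training set). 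The boundedness hypothesis $|\widehat{f}|\le M$, together with the normalization that $\theta$ ranges over a set of unit measure (consistent with the Fourier-basis setup), gives $\int\widehat{f}^2\,d\theta\le M^2$ and hence $|W_k|\le M^2+2M$, so each $W_k$ lies in an interval of length $2(M^2+2M)$. Hoeffding's inequality for the i.i.d.\ bounded variables $W_k$ then yields, for each fixed $\widehat{f}_i$,
\[
\P\!\left(|\widehat{L}_{\x_o}^\epsilon(\widehat{f}_i,f)+K_f-L_{\x_o}^\epsilon(\widehat{f}_i,f)|\ge\nu\right)\;\le\;2\,e^{-\frac{B'\nu^2}{2(M^2+2M)^2}} .
\]

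Finally I would assemble the bound. Because $|L_{\x_o}^\epsilon-L_{\x_o}|\le K_H\epsilon^\beta$ holds deterministically for each estimator, the event $\{|\widehat{L}_{\x_o}^\epsilon(\widehat{f}_i,f)+K_f-L_{\x_o}(\widehat{f}_i,f)|\ge K_H\epsilon^\beta+\nu\}$ is contained in $\{|\widehat{L}_{\x_o}^\epsilon(\widehat{f}_i,f)+K_f-L_{\x_o}^\epsilon(\widehat{f}_i,f)|\ge\nu\}$; taking the maximum over $\mathcal{F}$ and applying the union bound over the $m$ estimators turns the single-estimator tail into a factor of $2m$, producing exactly the claimed inequality. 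The main difficulty here is not conceptual but a matter of careful bookkeeping: one must verify that the crude two-sided estimate $|W_k|\le M^2+2M$ (rather than a sharper asymmetric range) is what collapses the Hoeffding constant to precisely $2(M^2+2M)^2$, and one must justify the i.i.d.\ structure of the validation sample and its independence from the fitted estimators, since that independence is what legitimizes treating each $\widehat{f}_i$ as deterministic when applying the concentration inequality.
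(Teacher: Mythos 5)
Your proposal is correct and follows essentially the same route as the paper: the paper's own proof proceeds through exactly your decomposition (Hoeffding applied to the i.i.d.\ bounded variables $W_k$ with $|W_k|\leq M^2+2M$, the event inclusion via Theorem \ref{Eq::LossBias}, and a union bound over the $m$ estimators), packaged as Lemmas \ref{lemma::estimateToEps} and \ref{lemma::estimateToTrue}. Your additional remarks---that $K_\epsilon$ in the statement should be read as $K_H$, and that $\int \widehat{f}^2(\theta|\x)\,d\theta \leq M^2$ implicitly requires $\Theta$ to have unit measure---are careful bookkeeping points that the paper's proof passes over silently.
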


The next corollary shows that the procedure
we propose in this section, with high probability, picks an estimate of the posterior
density that has a true loss that is close to the true  loss of the best method in $\mathcal{F}$. 
\begin{Cor}
Let
$\widehat{f}^* := \arg \min_{\widehat{f} \in \mathcal{F}} \widehat{L}_{\x_o}^\epsilon(\widehat{f},f)$
be the best estimator in $\mathcal{F}$ according to the estimated surrogate loss, and
 let $f^*=\arg \min_{\widehat{f} \in \mathcal{F}} L_{\x_o}(\widehat{f},f)$
be the best estimator in $\mathcal{F}$ according to the true loss.
Then, under the assumptions from Theorem \ref{thm::estimateToTrueUnion}, with probability at least $1-2m e^{-\frac{B'\nu^2}{2(M^2+2M)^2}}$,
$ L_{\x_o}(\widehat{f}^*,f)\leq  L_{\x_o}(f^*,f) +2 (K_H\epsilon^\beta+	\nu).$
\end{Cor}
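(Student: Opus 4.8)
The plan is to obtain the corollary as a purely deterministic consequence of the uniform concentration bound in Theorem~\ref{thm::estimateToTrueUnion}, via the standard oracle-inequality argument. First I would fix $\delta := K_H\epsilon^\beta + \nu$ and define the favorable event
\[
A := \Bigl\{ \max_{\widehat{f}\in\mathcal{F}} \bigl|\widehat{L}_{\x_o}^\epsilon(\widehat{f},f) + K_f - L_{\x_o}(\widehat{f},f)\bigr| < \delta \Bigr\}.
\]
By Theorem~\ref{thm::estimateToTrueUnion}, $\P(A) \geq 1 - 2m e^{-\frac{B'\nu^2}{2(M^2+2M)^2}}$, which is exactly the probability claimed in the corollary. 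The whole of the remaining argument takes place on $A$, where every estimator in $\mathcal{F}$ — in particular both $\widehat{f}^*$ and $f^*$ — simultaneously satisfies $|\widehat{L}_{\x_o}^\epsilon(\,\cdot\,) + K_f - L_{\x_o}(\,\cdot\,)| < \delta$.

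Then I would chain three inequalities, all valid on $A$. Applying the bound to $\widehat{f}^*$ gives $L_{\x_o}(\widehat{f}^*,f) \leq \widehat{L}_{\x_o}^\epsilon(\widehat{f}^*,f) + K_f + \delta$. Because $\widehat{f}^*$ minimizes the estimated surrogate loss (and adding the fixed constant $K_f$ does not change the minimizer), we have $\widehat{L}_{\x_o}^\epsilon(\widehat{f}^*,f) + K_f \leq \widehat{L}_{\x_o}^\epsilon(f^*,f) + K_f$. Finally, applying the bound to $f^*$ gives $\widehat{L}_{\x_o}^\epsilon(f^*,f) + K_f \leq L_{\x_o}(f^*,f) + \delta$. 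Concatenating the three yields $L_{\x_o}(\widehat{f}^*,f) \leq L_{\x_o}(f^*,f) + 2\delta = L_{\x_o}(f^*,f) + 2(K_H\epsilon^\beta + \nu)$, the desired inequality, holding on $A$.

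There is essentially no analytical obstacle at this stage: all the probabilistic content has been absorbed into Theorem~\ref{thm::estimateToTrueUnion}, and what remains is a deterministic comparison on the good event together with the defining optimality of $\widehat{f}^*$. The only point requiring a moment's care is the role of the unknown constant $K_f$: since it is common to every estimator, it cancels in the $\arg\min$ defining $\widehat{f}^*$ and appears identically on both sides of the middle inequality, so it never has to be evaluated. I would also note that the Hölder constant written as $K_\epsilon$ in Theorem~\ref{thm::estimateToTrueUnion} is the same $K_H$ used here, so no reconciliation of constants is needed before the two additive $\delta$ terms combine into the factor of $2$.
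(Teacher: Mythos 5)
Your proposal is correct and follows essentially the same argument as the paper: both rest on the uniform bound of Theorem~\ref{thm::estimateToTrueUnion} and the same three-term comparison (true-to-estimated loss for $\widehat{f}^*$, minimality of $\widehat{f}^*$ in the estimated surrogate loss, and estimated-to-true loss for $f^*$), with the paper writing it as a telescoping decomposition and you as a chain of inequalities. Your side remark that $K_\epsilon$ in the theorem statement is the constant $K_H$ is also the right reading.
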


\subsection{Summary Statistics Selection}
\label{sec::summary}

In a typical ABC setting, there are a large number of available summary statistics. 
 Standard ABC fails if all of them are used simultaneously, especially if some statistics carry little information 
about the parameters of the model \citep{blum2010approximate}.

One can use 
ABC-CDE 
as a way of either (i) directly estimating
$f(\theta|\x_o)$ when there are a large number of summary statistics,\footnote{the dimension reduction is then implicit in the choice of (high-dimensional) regression method} or (ii)
 assigning an importance measure to each summary statistic 
  to guide variable selection in ABC and related procedures.

There are two versions of 
ABC-CDE  that are particularly useful for variable selection:
FlexCode-SAM\footnote{FlexCode with the coefficients from Equation~\ref{eq::beta} 
estimated via Sparse Additive Models \citep{Ravikumar:EtAl:2009}} and 
FlexCode-RF\footnote{FlexCode with the coefficients from Equation~\ref{eq::beta} 
estimated via Random Forests}. 
\citet{IzbickiLeeFlexCode}
show that both estimators automatically adapt to the number
of relevant covariates, i.e., the number of 
covariates that influence the distribution of the response. In the context of ABC, this means
that these methods are able to automatically detect which summary statistics are relevant in estimating the posterior distribution
of $\theta$. 
Corollary 1
	from \citet{IzbickiLeeFlexCode} implies that, {\em if} indeed 
 only $m$ out of all $d$ summary statistics influence the distribution of $\theta$, then the rate of convergence of these methods
 is $O\left(n^{-2\beta/(2\beta+m\frac{2\beta+1}{2\alpha}+1)}\right)$
 instead of $O\left(n^{-2\beta/(2\beta+d\frac{2\beta+1}{2\alpha}+1)}\right)$,
 where $\alpha$ and $\beta$ are numbers associated to the smoothness of $f(\theta|\x)$.  The former rate implies a much faster convergence: if $m \ll d$, it is essentially
 the rate one would obtain if one knew which were the relevant statistics.
 In such a setting, there is no need to explicitly perform summary statistic selection prior to estimating the posterior; 
 FlexCode-SAM or 
 FlexCode-RF automatically remove
 irrelevant covariates. 

More generally, one can use FlexCode to compute an importance measure for summary statistics (to be used in other procedures than FlexCode). It turns out that one can infer the relevance of the $j$:th summary statistic {\em in posterior estimation} from its relevance in estimating the $I$ first {\em regression} functions in FlexCode --- even if we do not use FlexCode for estimating the posterior. 
	More precisely, assume that  $\x=(x_1,\ldots, x_j, \ldots, x_d)$ is a vector of summary statistics, and let $\x'_j=(x_1,\ldots, x_{j-1}, x'_j, x_{j+1},\ldots, x_d)$.
	Define the relevance of variable $j$ to the posterior distribution $f(\theta|\x)$ as 
	$$r_j:=\int \! \int \! \int (f(\theta|\x)-f(\theta|\x'_j))^2 d\x dx'_j d\theta,$$
	and its relevance to the regression $\beta_i(\x)$ in Equation \ref{eq::beta} as
		$$r_{i,j}:=\int \! \int \left(\beta_i(\x)-\beta_i(\x'_j)\right)^2 d\x dx'_j.$$ Under some smoothness assumptions with respect to $\theta$, the two metrics are related.


\begin{Assumption}[Smoothness in $\theta$ direction]
\label{label:assumpSmooth} 
 \label{assump-sobolevZ} $\forall \vec{x} \! \in \! \mathcal{X}$, we assume that
$f(\theta|\x) \! \in \! W_{\phi}(s_\vec{x},c_\vec{x}),$ the
Sobolev space of order $s$ and radius $c$,\footnote{For every $ s>\frac{1}{2}$ and $0<c<\infty $, $W_{\phi}(s,c):=\{f \!= \!\sum_{i\geq 1} \theta_i \phi_i \!:  \! \sum_{i\geq 1} a_i^2 \theta^2_i \leq c^2 \}$, where $a_i \! \sim \! (\pi i)^s$. Notice that for the Fourier basis
 $(\phi_i)_i$, this is the standard definition of the Sobolev space of order $s$ and radius $c$; it is  the
 space of functions that have their $s$-th weak derivative bounded by $c^2$ and integrable in $L^2$.
} 
 where $f(\theta|\x)$ is viewed as a function of $\theta$, and $s_\vec{x}$ and $c_\vec{x}$ are such that 
$\inf_\vec{x} s_\vec{x}\overset{\mbox{\tiny{def}}}{=}\beta>\frac{1}{2}$ and 
$\int c^2_\vec{x} d\x <\infty$.
\end{Assumption}

\begin{prop} Under Assumption \ref{label:assumpSmooth},
$ r_j =  \sum_{i=1}^I r_{i,j} + O\left(I^{-2\beta}\right)$ \label{prop::stats_relevance}
\end{prop}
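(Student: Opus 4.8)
The plan is to exploit the orthonormal basis expansion underlying FlexCode to turn the inner $\theta$-integral into a sum of squared coefficient differences, thereby obtaining an \emph{exact} identity between $r_j$ and $\sum_{i\geq 1} r_{i,j}$, and then to control the truncation tail via the Sobolev assumption. First I would fix $\x$ and $\x'_j$ and expand $f(\theta|\x)-f(\theta|\x'_j)=\sum_{i\geq 1}(\beta_i(\x)-\beta_i(\x'_j))\phi_i(\theta)$, which is legitimate because each $f(\cdot|\x)$ lies in $W_\phi(s_\x,c_\x)\subset L^2$. By orthonormality of $(\phi_i)_i$ (Parseval's identity), $\int (f(\theta|\x)-f(\theta|\x'_j))^2\,d\theta=\sum_{i\geq 1}(\beta_i(\x)-\beta_i(\x'_j))^2$. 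Substituting this into the definition of $r_j$ and interchanging the nonnegative sum with the $\x$- and $x'_j$-integrals by Tonelli yields $r_j=\sum_{i\geq 1} r_{i,j}$, so the proposition reduces to showing $\sum_{i>I} r_{i,j}=O(I^{-2\beta})$.

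To bound the tail, I would apply $(\beta_i(\x)-\beta_i(\x'_j))^2\le 2\beta_i^2(\x)+2\beta_i^2(\x'_j)$ and then control $\sum_{i>I}\beta_i^2(\x)$ pointwise in $\x$. Since $f(\cdot|\x)\in W_\phi(s_\x,c_\x)$ means $\sum_{i\geq 1} a_i^2\beta_i^2(\x)\le c_\x^2$ with $a_i\sim(\pi i)^{s_\x}$ and $s_\x\ge \inf_\x s_\x=\beta$, the weights are increasing and satisfy $a_{I+1}^2\gtrsim I^{2\beta}$ uniformly in $\x$; factoring out $a_{I+1}^{-2}$ gives $\sum_{i>I}\beta_i^2(\x)\le a_{I+1}^{-2}\sum_{i>I}a_i^2\beta_i^2(\x)\le c_\x^2/a_{I+1}^2\lesssim c_\x^2\,I^{-2\beta}$. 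Integrating this bound against $d\x\,dx'_j$, doing the same for the $\beta_i^2(\x'_j)$ term, and using $\int c_\x^2\,d\x<\infty$ leaves $\sum_{i>I} r_{i,j}\lesssim I^{-2\beta}\int c_\x^2\,d\x=O(I^{-2\beta})$, which combined with $r_j=\sum_{i\geq 1}r_{i,j}$ gives the stated result.

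The main obstacle is making the uniform tail bound honest: both the radius $c_\x$ and the smoothness $s_\x$ depend on $\x$, so the estimate $a_{I+1}^{-2}\le C\,I^{-2\beta}$ must be extracted uniformly from $\inf_\x s_\x=\beta$, and the subsequent $\x$-integration must remain finite, which is exactly what $\int c_\x^2\,d\x<\infty$ guarantees. A secondary point requiring care is the cross term $\int\!\int \beta_i^2(\x'_j)\,d\x\,dx'_j$: because $\beta_i^2(\x'_j)$ does not depend on the original coordinate $x_j$, one integrates that coordinate out over its range and then relabels $x'_j$ as $x_j$ to recognize the integral as $\int c_\x^2\,d\x$ once more. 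This step (and the finiteness of $r_j$ itself) implicitly requires the coordinate domains to have finite measure, as is standard when working with a Fourier basis on a normalized domain.
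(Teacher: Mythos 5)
Your proof is correct and is essentially the paper's own argument: both rest on Parseval's identity (with Tonelli) to get the exact decomposition $r_j=\sum_{i\geq 1}r_{i,j}$, and on the Sobolev smoothness assumption to bound the tail $\sum_{i>I}r_{i,j}=O\left(I^{-2\beta}\right)$. The only difference is cosmetic: you control the tail coefficientwise via $(a-b)^2\leq 2a^2+2b^2$, whereas the paper's auxiliary lemma uses Cauchy--Schwarz to place $f(\theta|\x)-f(\theta|\x'_j)$ in a Sobolev ball of squared radius $(c_{\x}+c_{\x'_j})^2$ before truncating; both yield the same rate, and your explicit remarks on the finite-measure domain needed for the $x_j$-integration are, if anything, more careful than the paper's treatment.
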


Now let $u_{i,j}$ denote a measure of importance of the $j$:th summary statistic in estimating 
regression $i$ (Equation~\ref{eq::beta}). For instance, for 
FlexCode-RF, $u_{i,j}$ may represent the mean decrease in the Mean Squared Error \citep{Hast:Tibs:Frie:2001};
for FlexCode-SAM, $u_{i,j}$ may be value of the indicator function for the $j$:th summary statistic 
 when estimating $\beta_i(\x)$. Motivated by Proposition~\ref{prop::stats_relevance}, we define an \emph{importance measure} for the $j$:th summary statistic {\em in posterior estimation} according to 
\begin{align}
\label{eq::importance}
u_j := \frac{1}{I}\sum_{i=1}^I u_{i,j}.
\end{align}
We can use these values to select variables for estimating $f(\theta|\x_o)$
via other ABC methods. For example, one approach is to choose all summary statistics such that $u_j>t$, where the threshold value $t$ is defined by the user.
We will further explore this approach in Section \ref{sec::experiments}.

In summary, our procedure has two main advantages compared to current state-of-the-art approaches for selecting summary statistics in ABC:
(i) it chooses statistics
that lead to good estimates of the   \emph{entire} posterior distribution $f(\theta|\x_o)$ rather than surrogates,
such as, the regression or  posterior mean $\E[\theta|\x_o]$  \citep{aeschbacher2012novel,creel2016selection,faisal2016choosing}, 
and
(ii) it is typically faster than most other approaches; in particular, it is significantly faster than best subset selection which scales as $O(2^d)$, whereas, e.g., FlexCode-RF scales as $O(Id)$, and
FlexCode-SAM scales as $O(Id^3)$.


\section{Experiments}
\label{sec::experiments}

\subsection{Examples with Known Posteriors}
\label{sec::toy}

We start by analyzing examples with well-known and analytically computable posterior distributions:
\begin{itemize}
		\item[]\textbf{1. Mean of a Gaussian with known variance.} $X_1,\ldots,X_{20}|\mu \overset{iid}{\sim}  \operatorname{Normal}(\mu,1)$, $\mu \sim  \operatorname{Normal}(0,\sigma^2_0)$. We repeat the experiments for $\sigma_0$ in an equally spaced grid with ten values between 0.5 and 100.

		\item[]\textbf{2. Precision of a Gaussian with unknown precision.} 			$X_1,\ldots,X_{20}|(\mu,\tau) \overset{iid}{\sim}  \operatorname{Normal}(\mu,1/\tau)$, $(\mu,\tau) \sim \mbox{Normal-Gamma}(\mu_0,\nu_0,\alpha_0,\beta_0)$. We set $\mu_0=0$, $\nu_0=1$,
			and repeat the experiments  choosing $\alpha_0$ and $\beta_0$ 
			such that $\E[\tau]=1$
			and $\sqrt{\V[\tau]}$ is in an equally spaced grid with ten  values between 0.1 and 5.
			
\end{itemize}

In the Appendix we also investigate a third setting, ``Mean of a Gaussian with unknown precision'', with results similar to those shown here in the main manuscript.

 In all examples here, \emph{observed} data    $\x_o$ are drawn from a $ \operatorname{Normal}(0,1)$ distribution.
		We run each
		experiment 200 times, that is, with 200 different values of $\x_o$. 
The training set $\mathcal{T}$, which is used to build conditional density estimators,
		is constructed according to Algorithm \ref{alg::naiveAbc} with $B=10,000$ and
		a tolerance level $\epsilon$ that corresponds to an acceptance rate of 1\%.
					For the distance function    $d(\x,\x_o)$, we choose the Euclidean distance between
					minimal sufficient statistics normalized to have mean zero and variance 1;                                        these statistics are  $\bar{\x}$
                                        for scenario 1 and
                                        $(\bar{\x},s)$ for scenario 2.
We use a Fourier basis for all FlexCode experiments in the paper, but wavelets lead to similar results.

	We compare the following methods (see Section \ref{sec::beyond_ABC} and the appendix for a comparison between ABC, regression adjustment methods and ABC-CDE with a standard kernel density estimator):
	\begin{itemize}
	\item ABC: rejection ABC method  with the minimal sufficient statistics (that is, apply a kernel density estimator to the $\theta$
coordinate of $\mathcal{T}$, 
	with bandwidth chosen via cross-validation),
	\item FlexCode\_Raw-NN: FlexCode estimator with Nearest Neighbors regression,  
	\item FlexCode\_Raw-Series: FlexCode estimator with Spectral Series regression \citep{LeeIzbickiReg}, and
	\item FlexCode\_Raw-RF: FlexCode estimator with Random Forest regression.
	\end{itemize}
 The three FlexCode estimators (denoted by ``Raw'') are directly applied to the sorted values of the {\em original} covariates $X_{(1)},\ldots,X_{(20)}$. That is, we do not use minimal sufficient statistics or other summary statistics. 
	To assess the performance of each method, we compute the true loss $L_{\x_o}$
	(Equation~\ref{eq::trueLoss}) for each $\x_o$. In addition, we estimate the surrogate loss 	$L_{\x_o}^\epsilon$ according to Equation \ref{eq::lossEstimate}   
 using a new sample of size $B'=10,000$ from Algorithm \ref{alg::naiveAbc}.

  \subsubsection{CDE and Method Selection}
  \label{sec::estPostModelSelec}
  
 In this section, we investigate whether various ABC-CDE methods 
 improve upon standard ABC for the settings described above. We also evaluate the method selection approach in Section \ref{sec::comparing} by comparing decisions based on estimated surrogate losses to those made if one knew the true ISE losses. 
  

  Figure \ref{fig::toyExample1}, left, shows how well the methods actually estimate the posterior density for Settings 1-2. Panel (a) 
  and (e) list the proportion of times each method returns the best results (according to the true loss from Equation~\ref{eq::trueLoss}). Generally speaking, the larger the prior variance, the better
  ABC-CDE   methods 
  perform compared to ABC. In particular, while for small variances ABC tends to be better, for
  large prior variances,
  FlexCode with Nearest Neighbors regression tends to give the best results.
    FlexCode with expansion coefficients estimated via Spectral Series regression is also very competitive.
Panels (c) and (g) confirms these results; here we see the average true loss of each method along with standard errors.

%
%
%
    Figure \ref{fig::toyExample1}, right, summarizes the performance of our method selection algorithm.
      Panels (b) and (f) list the proportion of times the method chosen
      by the true loss (Equation~\ref{eq::trueLoss}) matches the method chosen via the estimated loss (Equation~\ref{eq::lossEstimate}) in all pairwise comparisons;
      that is, the plot tells us how often the method selection procedure proposed in Section \ref{sec::comparing} actually works. 
       We present two variations of the algorithm: 
      in the first version (see triangles), we include all the data; in the second version (see circles), we remove cases where the confidence interval for $\widehat{L}^\epsilon_{\x_o}(\widehat{f}_1,f)-\widehat{L}^\epsilon_{\x_o}(\widehat{f}_2,f)$
      contains zero (i.e., cases where we cannot tell whether $\widehat{f}_1$ or
    $\widehat{f}_2$ performs better). The baseline 
      shows what one would expect if the method selection algorithm was totally random. The plots indicate that we, in all settings, roughly arrive at the same conclusions
    with the estimated surrogate loss as we would if we knew the true loss.  
    
    For the sake of illustration, we have also added panels (d)
    and (h), 
    %
     which show a scatterplot of differences between true losses versus the differences between the estimated losses for ABC and FlexCode\_Raw-NN
     for the setting with $\sigma_0=0.5$ and $\sqrt{\V[\tau]}=0.1$,
     respectively. The fact that most samples are either in the  first or third quadrant further confirms that the estimated surrogate loss is in agreement with the true loss in terms of which method best estimates the posterior density.

       \begin{figure}[H]
    	\centering
    	\subfloat[]{  \includegraphics[page=1,scale=0.212]{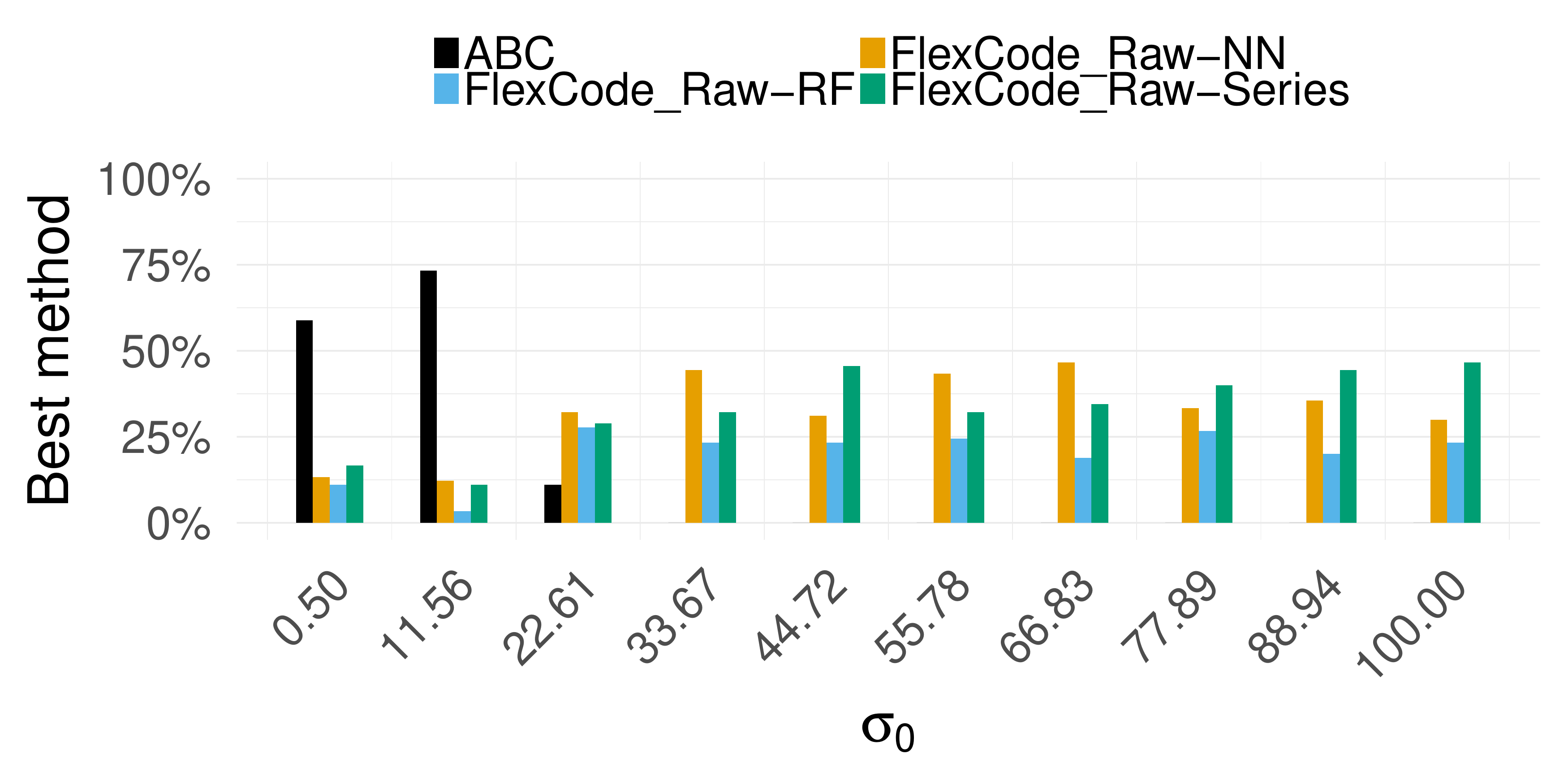}} 
    	\subfloat[]{  \includegraphics[page=1,scale=0.212]{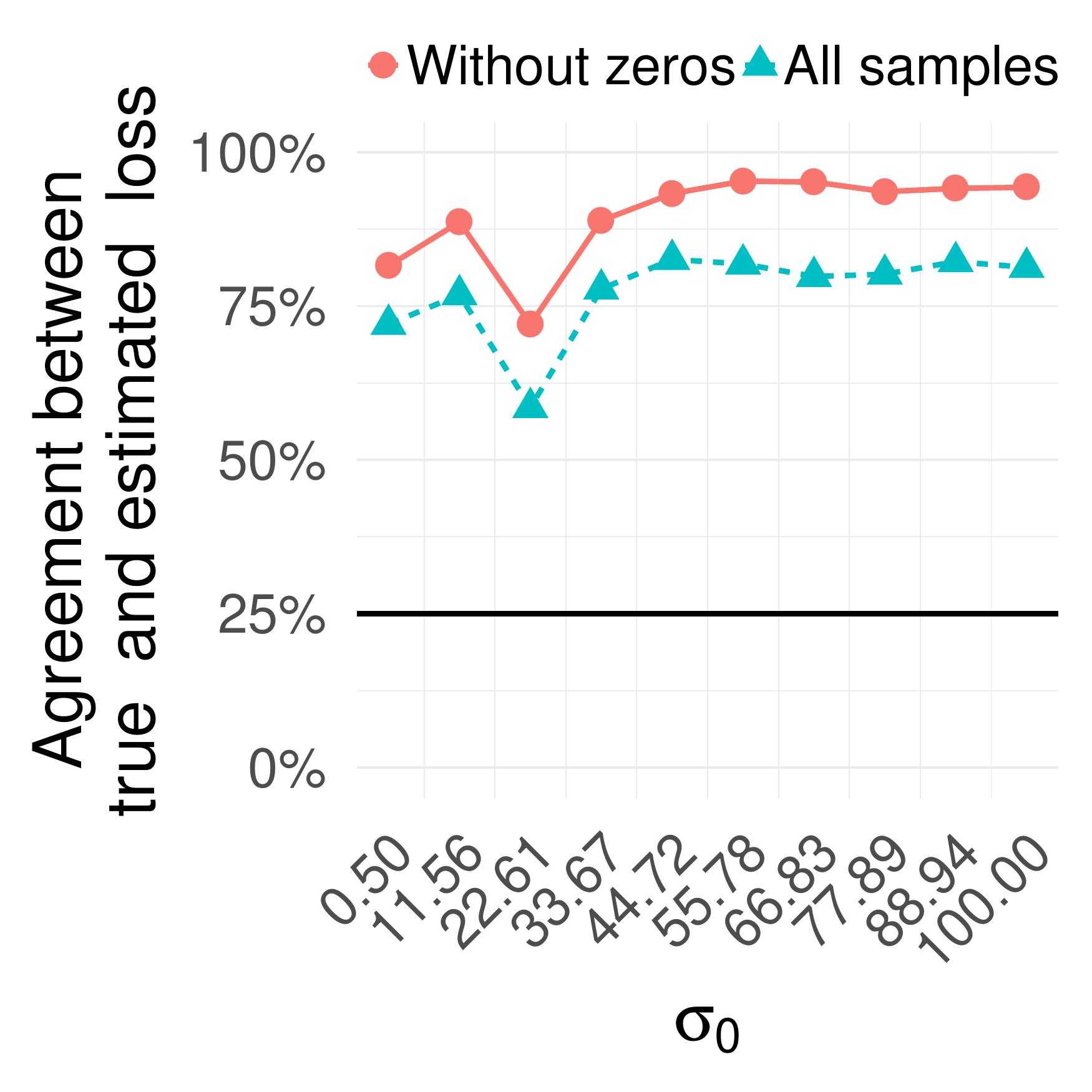}}  \\[-6mm]
    	  	\subfloat[]{  \includegraphics[page=2,scale=0.212]{figures/gauss_knownVar_reducedv2.pdf}} 
    	  	\subfloat[]{  \includegraphics[page=3,scale=0.212]{figures/gauss_knownVar_scatter_reducedv2.pdf}} \\[2mm]
    	  	    	\subfloat[]{  \includegraphics[page=1,scale=0.212]{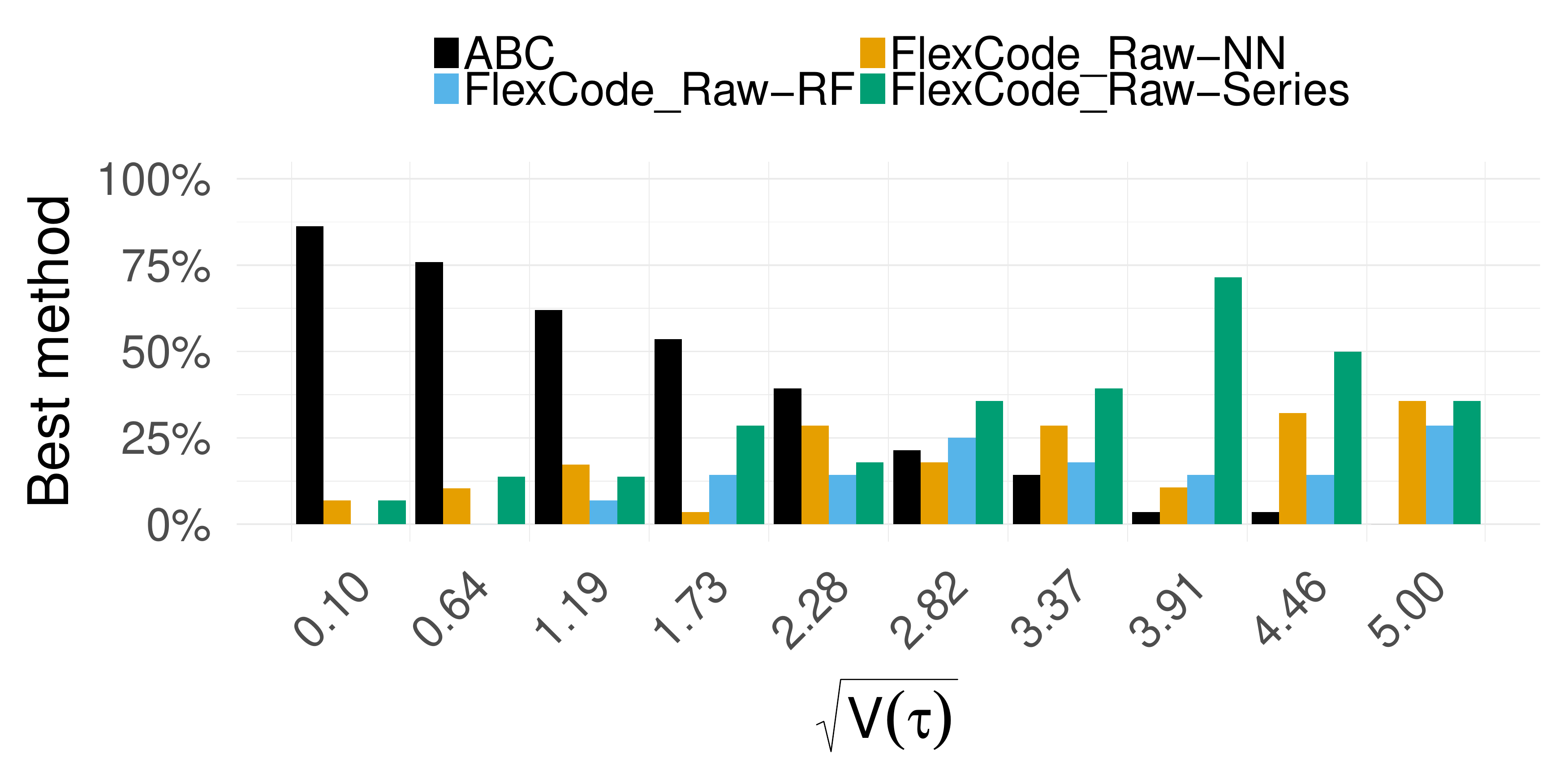}} 
    	  	    	\subfloat[]{  \includegraphics[page=1,scale=0.212]{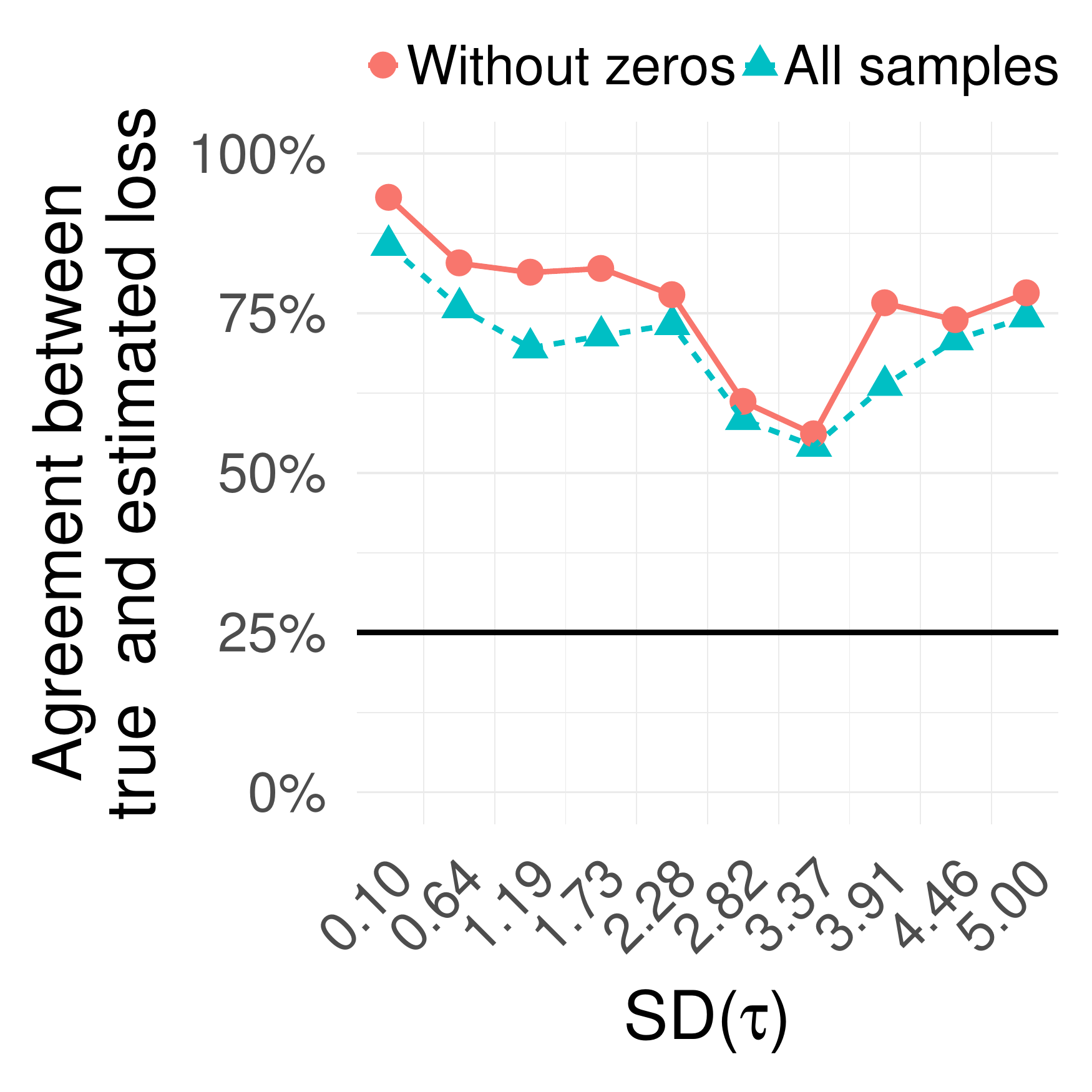}}  \\[-6mm]
    	  	    	\subfloat[]{  \includegraphics[page=2,scale=0.212]{figures/gauss_unknownVarSigma_reducedv2.pdf}} 
    	  	    	\subfloat[]{  \includegraphics[page=2,scale=0.212]{figures/gauss_unknownVarSigma_scatter_reducedv2.pdf}}  
    	  	 
    	\vspace{-3mm}
    	\caption{\footnotesize  Panels (a)-(d): CDE and method selection results for scenario 1 (mean of a Gaussian with known variance). {\em Left:}	 Panels (a) and (c) show that the
    	rejection ABC leads to better estimates of the posterior density $f(\theta|\x_o)$ when the prior variance $\sigma_0$ is small, but the NN and Series versions of FlexCode yield better estimates for moderate and large values of $\sigma_0$. {\em Right:} Panels (b) and (d) indicate that  by estimating the surrogate loss function one can tell from the data which method is better for the problem at hand. The horizontal line in panel (b) represents the behavior of a random selection.
    	Panels (e)-(h): CDE and method selection results for scenario 2 (precision of a Gaussian with unknown precision). 	
    	Conclusions are analogous.} 
    	\label{fig::toyExample1}
    \end{figure}

%
%

\begin{figure}[H]
      	\centering
      	\subfloat[]{  \includegraphics[page=2,scale=0.21]{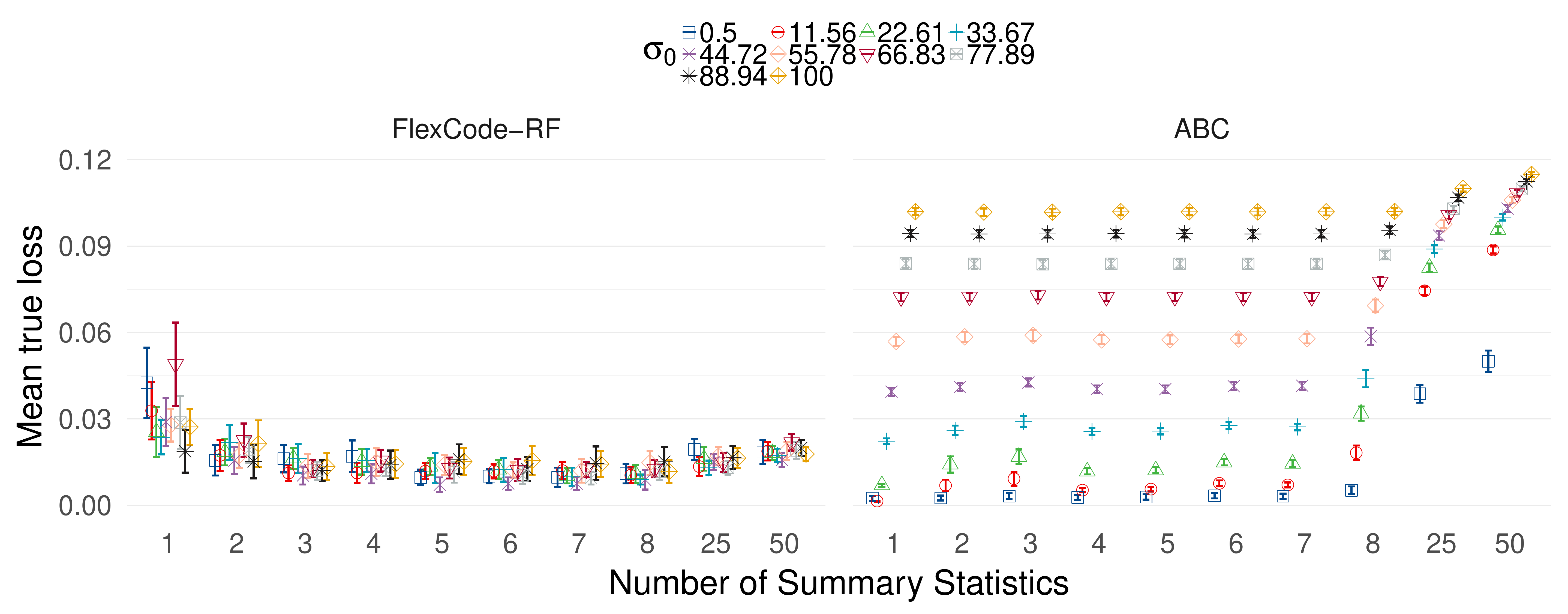}} \\[-2mm]
      	\subfloat[]{  \includegraphics[page=1,scale=0.17]{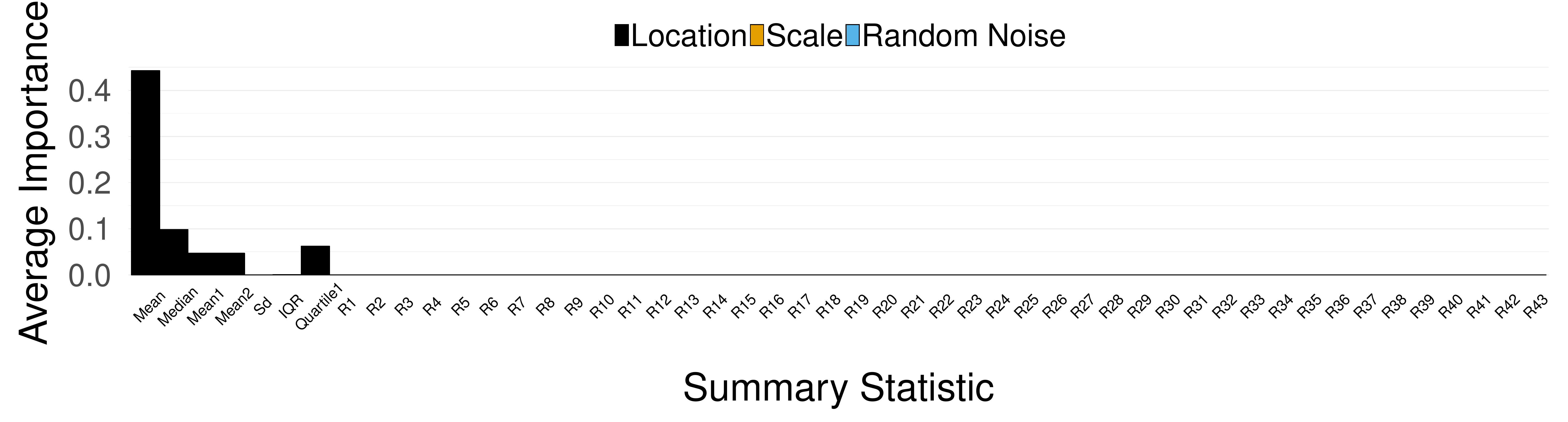}} \\
      	      	\subfloat[]{  \includegraphics[page=2,scale=0.21]{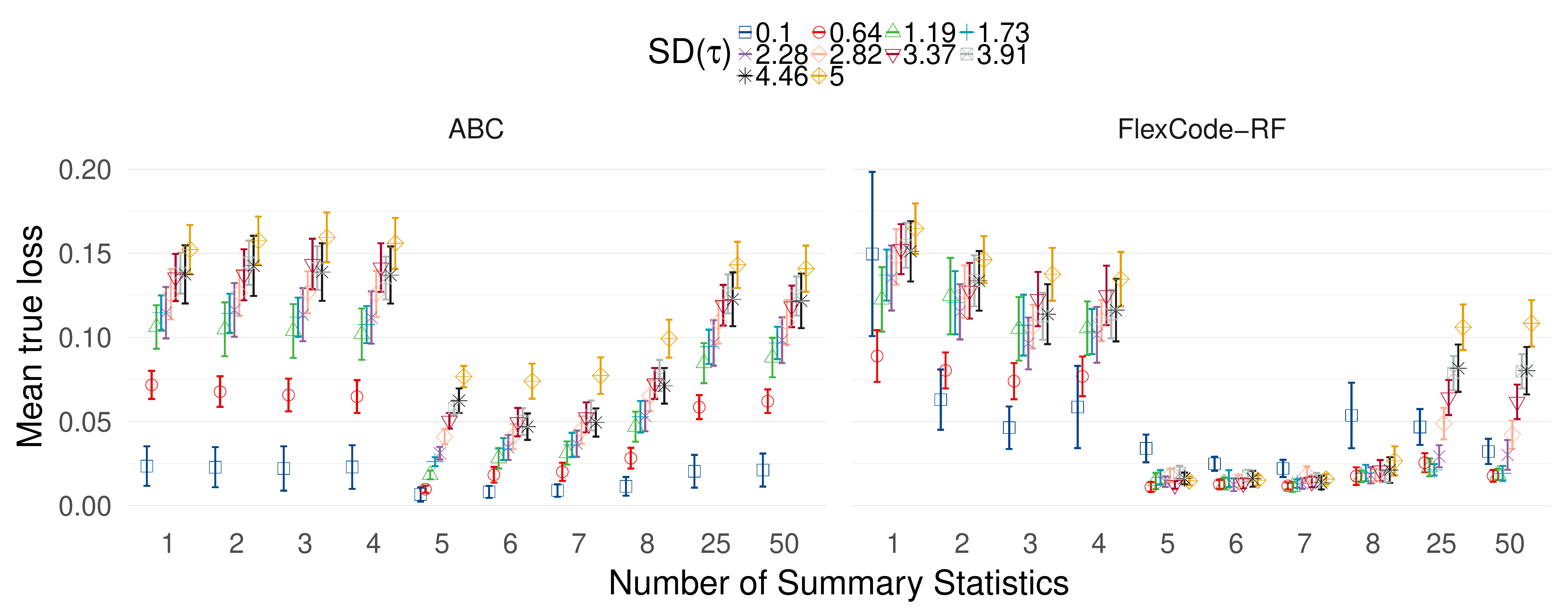}} \\[-2mm]
      	      	\subfloat[]{  \includegraphics[page=1,scale=0.17]{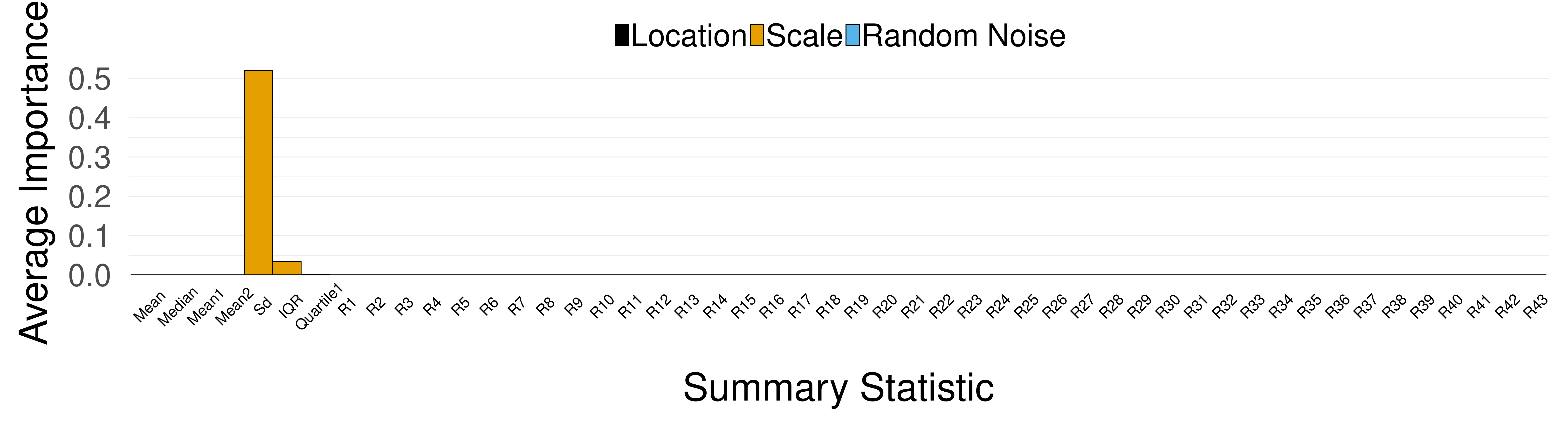}}  
      	\vspace{-3mm}
      	\caption{\footnotesize   
     Panels (a)-(b): Summary statistic selection for scenario 1 (mean of a Gaussian with known variance);
     Panels (c)-(d): Summary statistic selection  for scenario 2 (precision of a Gaussian with unknown precision). 	
      	Panels (a) and (c) show that
      	ABC  is highly sensitive to random noise (entries 8-51) with the estimates of the posteriors rapidly deteriorating with nuisance statistics.  Nuisance statistics do not affect the performance of FlexCode-RF much.
     Furthermore, we see from panel (b) that FlexCode-RF identifies the location statistics  (entries 1-5) as key variables for the first setting and assigns them a high average importance score.
     In the second setting,
     panel (d) indicates that we only need dispersion statistics (such as entry 5) to estimate the posteriors wells. } 
      	\label{fig::toyExampleSummary1}
      \end{figure}
   
       \subsubsection{Summary Statistic Selection}
    \label{sec::variableSelection}

    In this Section we investigate the performance
      of FlexCode-RF for summary statistics selection (Sec. \ref{sec::summary}).
      For this purpose,
     the following
 summary statistics were used:      \begin{enumerate}
     \item Mean: average of the data points; $\frac{1}{n}\sum_{i=1}^{n}X_i$ 
     \item Median: median of the data points; $\mbox{median}\{X_i\}_{i=1,\ldots,n}$
     \item Mean 1: average of the first half of the data points; $\frac{1}{n/2}\sum_{i=1}^{n/2}X_i$ 
     \item Mean 2: average of the second half of the data points;  $\frac{n/2+1}{n}\sum_{i=n/2+1}^{n}X_i$ 
     \item SD: standard deviation of the data points; $\sqrt{\frac{1}{n}\sum_{i=1}^{n}(X_i-\bar{\X})^2}$ 
     \item IQR: interquartile range of the data points; $\mbox{quantile}_{75\%}\{X_i\}_{i=1,\ldots,n}-\mbox{quantile}_{25\%}\{X_i\}_{i=1,\ldots,n}$ 
     \item Quartile 1: first quantile of the data points; $\mbox{quantile}_{25\%}\{X_i\}_{i=1,\ldots,n}$  
     \itemrange{43} Independent random variables  $\sim  \operatorname{Normal}(0,1)$, that is, random noise
     \end{enumerate}

Figure \ref{fig::toyExampleSummary1} summarizes the results of fitting FlexCode-RF and ABC to these summary statistics for the different scenarios.
      Panel (a) and (c) show the true loss as we increase the number of statistics. More precisely:
      the values at $x=1$ represent the true loss of ABC (left) and FlexCode-RF (right) when using only the mean (i.e., the first statistic);
        the points at $x=2$ indicate the true loss of the estimates using only the mean
  and the median (i.e., the first and second statistics) and so on. We note that
  %
   FlexCode-RF is robust to irrelevant summary statistics: the method virtually behaves as if they were not present.
      This is in sharp contrast with standard ABC, whose performance deteriorates quickly with added noise or nuisance statistics.

     Furthermore, panels (b) and (d) show the average importance of each statistic, defined according to Equation \ref{eq::importance},
      where $u_{i,j}$ is the
       mean decrease in the Gini index.
        These plots reveal that FlexCode-RF typically assigns a high score to sufficient summary statistics or to statistics
      that are highly correlated to sufficient statistics. For instance, in panel (b)
       (estimation of the mean of the distribution),
       measures of location are assigned a higher importance score, whereas  
       measures of dispersion are assigned a higher score in  panel (d) (estimation of the precision of the distribution). In all examples, FlexCode-RF assigns zero importance to random noise statistics. We conclude that our method for summary statistic selection indeed identifies relevant statistics for estimating the posterior $f(\theta|\x_o)$ well. 
\section{Approximate Bayesian Computation and Beyond}
\label{sec::beyond_ABC}

In this section, we show how one can use our surrogate loss to choose the tuning parameters in standard ABC with a nearest neighbors kernel smoother.

\subsection{ABC with Fewer Simulations} \label{sec::ABC_speedup}

As noted by \citep{blum2010approximate,biau2015new}, ABC is equivalent
to a kernel-CDE. More specifically, it can be
seen as a ``nearest-neighbors'' kernel-CDE (NN-KCDE) defined by
\begin{equation}
  \label{eq:kernel-nn-cde}
  \widehat{f}_{\text{nn}}(\theta \mid \x) = \frac{1}{k} \sum_{i = 1}^{k} K_{h}(\rho(\theta, \theta_{s_{i}(\x)})),
\end{equation}
where $s_{i}(\x)$ represents the index of the $i$th nearest neighbor to the target point $\x$ in covariate space, and we compute the conditional density of $\theta$ at $\x$ by applying a kernel smoother $K_{h}(\cdot)$ with bandwidth $h$ to the $k$ points closest to $x$.

For a given set of generated data, the above is equivalent to selecting the
ABC threshold $\epsilon$ as the $k/n$-th quantile of the observed
distances. This is commonly used in practice as it is more convenient
than determining $\epsilon$ a priori. However, as pointed out by \citet{biau2015new} (Section 4; remark 1), there is currently no good methodology to select both $k$ and $h$ in an ABC k-nearest neighbor estimate.

Given the connection between ABC and NN-KCDE, we propose to use our 
 surrogate loss
to tune the estimator; selecting $k$ and $h$ such that they minimize the
estimated surrogate loss in Equation~\ref{eq::lossEstimate}. In this sense, we are selecting the ``optimal''
ABC parameters after generating some of the data: having generated
10,000 points, it may turn out that we would have preferred a smaller
tolerance level $\epsilon$
 and that only using the closest 1,000 points would
better approximate the posterior.

{\bf Example with Normal Posterior.} 
To demonstrate the effectiveness of our surrogate loss in reducing the number of simulations, we draw data $X_1,\ldots,X_{5}|\mu \overset{iid}{\sim}  \operatorname{Normal}(\mu,0.2^2)$ where $\mu \sim  \operatorname{Normal}(1, 0.5^2)$. 
We examine the role of ABC thresholds by fitting the model for several values of the
threshold with observed data $\x_{0} = \left\{-0.5, -0.25, 0.0, 0.25, 0.5\right\}$. (A similar example with a two-dimensional normal distribution
can be found in the Appendix.) For each threshold, we perform rejection sampling until we
retain $B=1000$ ABC points. We select ABC thresholds to fix the acceptance rate
of the rejection sampling. Those acceptance rates are then used in
place of the actual tolerance level $\epsilon$ for easier comparison.

\begin{figure}[H]
  \centering
  \includegraphics[scale=0.39]{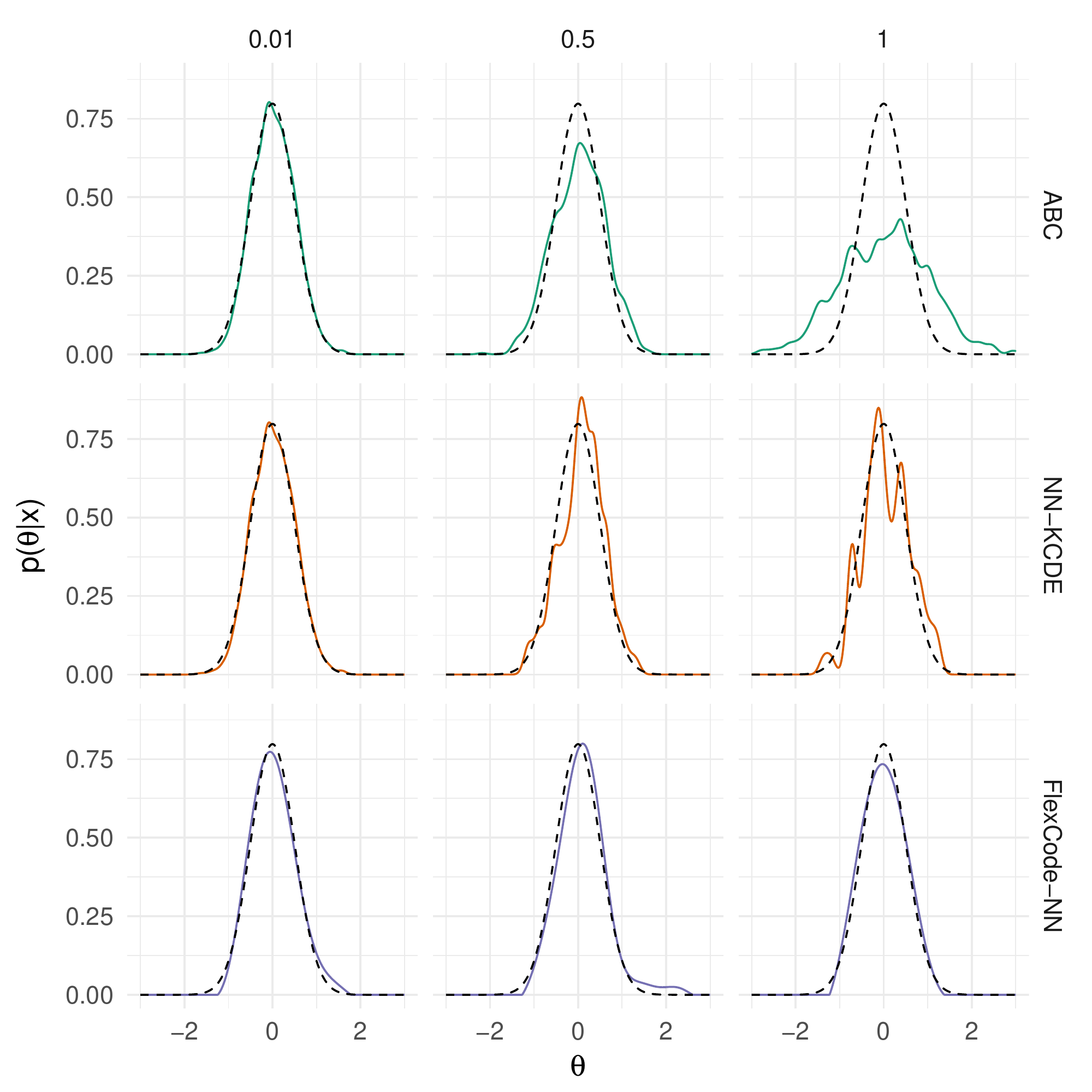}
   \includegraphics[scale=0.39]{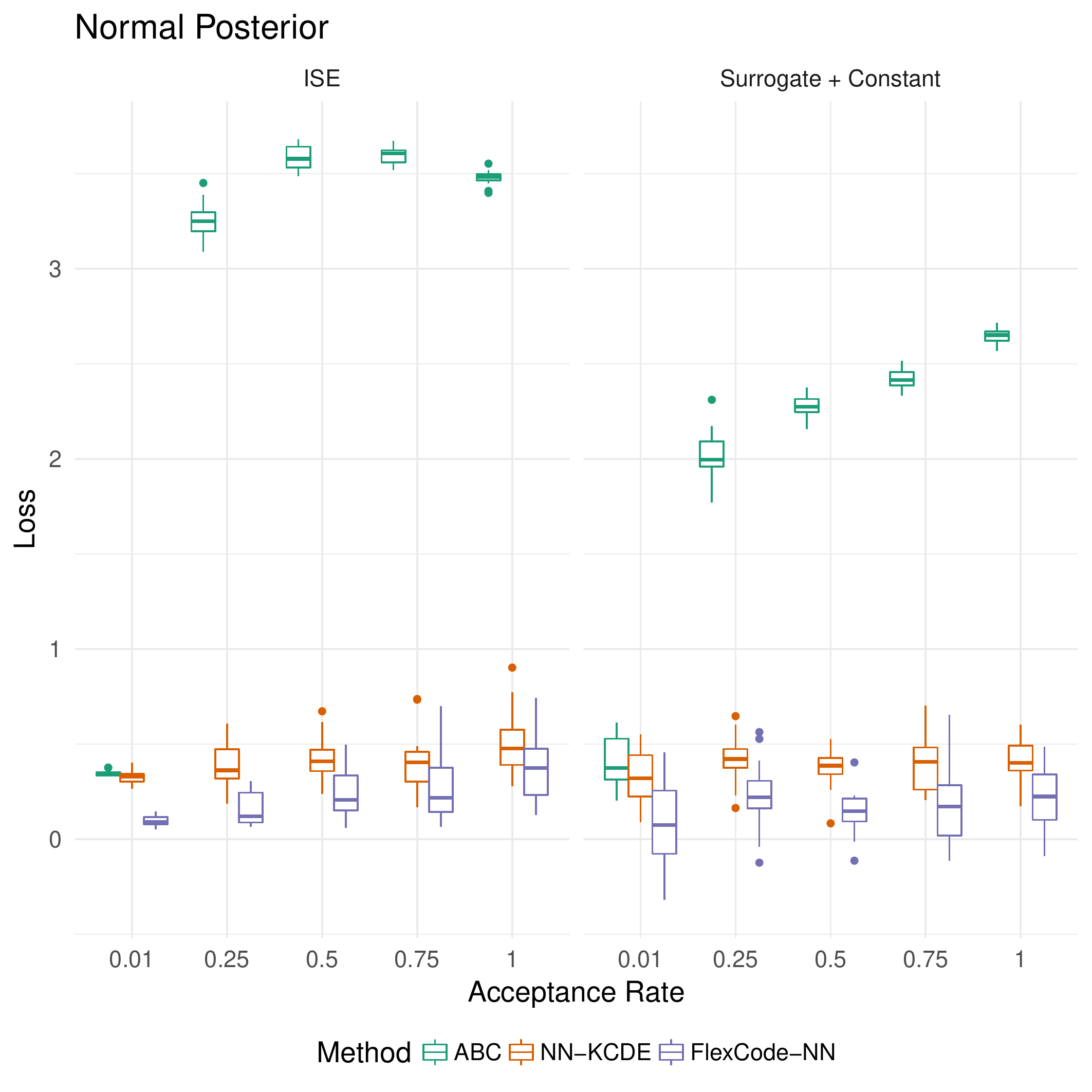}
    \vspace{-6mm}
  \caption{\footnotesize {\em Left:} Density estimates for normal posterior using ABC sample of
    varying acceptance rates (0.01, 0.5, and 1). As the acceptance rate (or, equivalently, the ABC tolerance level) decreases, 
    the ABC posterior approaches the true posterior. Both 
      NN-KCDE and FlexCode-NN approximate the posterior well for all acceptance
    rates, even for an acceptance rate of 1 which corresponds to no ABC threshold. {\em Right:} True integrated squared error (ISE) loss and estimated surrogate loss for normal posterior using ABC
        sample of varying acceptance rates. We need to decrease the acceptance rate considerably to attain a small loss for ABC. On the other hand,  the losses for
        NN-KCDE  and FlexCode-NN are 
        small for all thresholds.}
  \label{fig::normal-densities}
\end{figure}

\comment{consider the example of estimating the
posterior of a normal distribution. We draw the data
$X_{i} \sim \operatorname{Normal}(0.5, 1.0)$ and use the conjugate
prior $\mu \sim \operatorname{Normal}(0, 1)$. This results in the true
posterior
\begin{equation*}
  \mu \mid X_1,\ldots,X_n \sim \operatorname{Normal}\left(\frac{n}{n + 1}\bar{x}, \frac{1}{1 + n}\right), 
\end{equation*}
where $n$, the number of observations, is set at 10 in our simulations.

We use the sufficient statistic of the sample mean as our statistic
and the Euclidean norm as the distance function.}

The left panel of Figure \ref{fig::normal-densities} shows examples of posterior densities for varying
acceptance rates. For the highest acceptance rate of 1 (corresponding to the ABC tolerance level $\epsilon \rightarrow \infty$), the ABC
posterior (top left) is the prior distribution and thus a poor estimate. In contrast, the two ABC-CDE methods (FlexCode-NN and NN-KCDE) have a decent performance even at an acceptance rate of 1; more generally, they perform well at a higher acceptance rate than standard ABC. 

To corroborate this qualitative look, we examine the loss for each
method. The right panel of Figure \ref{fig::normal-densities} plots the true and surrogate
losses against the acceptance rate for the given methods. As seen in
Section \ref{sec::estPostModelSelec}, the surrogate loss provides the same
conclusion as the (unavailable in practice) true loss. As the
acceptance rate decreases, the ABC realizations 
more closely approximate the true posterior and the ABC estimate of the posterior improves. The main result is 
that NN-KCDE and FlexCode-NN have roughly constant performance over
all values of the acceptance rate. As such, we could generate {\em only}
1,000 realizations of the ABC sample at an acceptance rate of 1 and achieve {\em similar} result
as standard ABC generating 100,000 values at an acceptance rate of 0.01.

There are two different sources of improvement: the first exhibited by
NN-KCDE amounts to selecting the ``optimal'' ABC parameters
$k$ and $h$ using surrogate loss. However, as FlexCode-NN performs slightly better than NN-KCDE for the same sample, there is an additional
improvement in using CDE methods other than kernel smoothers; this difference becomes more pronounced for high-dimensional and complex data  (see \citealt{IzbickiLeeFlexCode} for examples of when traditional kernel smoothers fail).

\comment{If simulations are sufficiently expensive, ABC rejection-sampling and
even iterative sampling, such as SMC, will be computationally
impractical. In this case, our results at high acceptance rates are
promising: at the cost of far fewer simulated values, we can achieve
similar performance as an ABC sample with smaller acceptance rate and
consequently a larger number of simulations. We can also augment methods
designed to accelerate convergence, taking the output of a SMC
algorithm as the training sample for our CDE methods.

A similar example with a two-dimensional normal distribution
can be found in the Appendix. The conclusions are similar.}

\section{Conclusions}
\label{sec::final}

In this work, we have demonstrated three ways in which our conditional estimation framework 
can improve upon approximate Bayesian computational methods for next-generation complex data and simulations. 

First, realistic simulation models are often such that the computational cost of generating a single sample 
is large, making lower acceptance ratios unrealistic. 

Secondly, our ABC-CDE framework allows one to compare ABC and related methods in a principled way, making it possible 
to pick the best method for a given data set without knowing the true posterior.
Our approach is based on a surrogate loss function and data splitting. 
 We note that a related cross-validation procedure to choose the tolerance level $\epsilon$ in ABC has been proposed
by \citet{abcPackage}, albeit using a loss function that is appropriate for 
point estimation only. 

Finally, when dealing with complex models, it is often difficult to know exactly what summary statistics would be appropriate for ABC. Nevertheless, the practitioner can usually make up a list of a large but redundant number of candidate statistics, including statistics
		generated with automatic methods. As our results show, FlexCode-RF (unlike ABC) is robust to irrelevant statistics. 
		Moreover,
	FlexCode, in combination with RF for regression, offers a way of evaluating the importance of each summary statistic
	in estimating the full posterior distribution; hence, these importance scores could be used to choose relevant summary statistics for ABC and any other method used to estimate posteriors.
	
In brief, there are really two estimation problems in ABC-CDE: The first is that of estimating $f(\theta | \x_o)$. ABC-CDE starts with a rough approximation from an ABC sampler and then directly
 estimates the conditional density exactly at the point $\x = \x_o$ using a nonparametric conditional density estimator. The second is that of estimating the integrated squared error loss
 (Eq. \ref{eq::trueLoss}). Here we propose a surrogate loss that weights all points in the ABC posterior sample equally, but a weighted surrogate loss could potentially return more accurate estimates of the ISE. For example, 
 Figures 9 (left)  and
 10 in the Appendix
 show that NN-KCDE perform better than 
 ABC post-processing techniques. The current estimated loss, however,  cannot identify a difference in ISE loss between NN-KCDE and ``Blum'' because of the rapidly shifting posterior in the vicinity of $\x=\x_0$.


\vspace{3mm}

{\small \noindent \textbf{Acknowledgments.}
We are grateful to  Rafael Stern
 for his insightful comments on the manuscript. We would also like to thank Terrence Liu, Brendan McVeigh,  and Matt Walker for the NFW simulation code and to Michael Vespe for help with the weak lensing simulations in the Appendix. This work was partially supported by  NSF DMS-1520786,
Funda\c{c}\~ao de Amparo \`a Pesquisa do Estado de S\~ao Paulo} (2017/03363-8)
and
CNPq (306943/2017-4). 

\vspace{3mm}

{\small \noindent {\textbf{Links to Nonparametric Software Optimized for CDE:}
\begin{itemize}
\item  FlexCode: \url{https://github.com/rizbicki/FlexCoDE};  \url{ https://github.com/tpospisi/flexcode}
\item NN-KCDE: \url{https://github.com/tpospisi/NNKCDE} (see Appendix D)
\item RF-CDE: \url{https://github.com/tpospisi/rfcde} \citep{RFCDE_arxiv}
\end{itemize}

\bibliographystyle{Chicago}

{ \footnotesize 
\bibliography{paper}
}

\appendix

\section{Proofs} 

\subsection{Results on the surrogate loss}

\begin{thm}
\label{Eq::LossBias}
Assume that, for every $\theta \in \Theta$, $g_\theta(\x):= (\widehat{f}(\theta|\x)-f(\theta|\x))^2$
satisfies the  H\"{o}lder condition of order $\beta$ with a constant $K_\theta$\footnote{That is, there exists a constant $K_\theta$ such that for every $\x,\vec{y} \in \Re^d$  $|g_\theta(\x)-g_\theta(\vec{y})|\leq K_\theta (d(\x,\vec{y}))^\beta$. } such that $K_H:=\int K_\theta d\theta < \infty$. Then
$$|L_{\x_o}^\epsilon(\widehat{f},f)-L_{\x_o}(\widehat{f},f)| = K_H \epsilon^\beta = O(\epsilon^\beta)$$
\end{thm}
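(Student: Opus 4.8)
The plan is to exploit the fact that the weight appearing in the surrogate loss, namely $\frac{f(\x)\,\I(d(\x,\x_o)<\epsilon)}{\P(d(\X,\x_o)<\epsilon)}$, is a genuine probability density in $\x$ supported on the $\epsilon$-ball around $\x_o$. First I would rewrite both losses in a common form. Writing $g_\theta(\x)=(\widehat{f}(\theta|\x)-f(\theta|\x))^2$, we have $L_{\x_o}(\widehat{f},f)=\int g_\theta(\x_o)\,d\theta$, and since the weight integrates to one over $\x$, I may also express $L_{\x_o}$ as a double integral of the constant-in-$\x$ quantity $g_\theta(\x_o)$ against that same weight. Subtracting then collapses the difference to $L_{\x_o}^\epsilon-L_{\x_o}=\int\!\int [g_\theta(\x)-g_\theta(\x_o)]\,\frac{f(\x)\,\I(d(\x,\x_o)<\epsilon)}{\P(d(\X,\x_o)<\epsilon)}\,d\x\,d\theta$.

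Second, I would apply the triangle inequality (pulling the absolute value inside both integrals) and invoke the H\"older hypothesis pointwise: on the support of the indicator one has $d(\x,\x_o)<\epsilon$, so $|g_\theta(\x)-g_\theta(\x_o)|\le K_\theta\,(d(\x,\x_o))^\beta< K_\theta\,\epsilon^\beta$. Crucially the bound $K_\theta\,\epsilon^\beta$ no longer depends on $\x$, so the inner $\x$-integral of the weight evaluates to one, leaving $\int K_\theta\,\epsilon^\beta\,d\theta=\epsilon^\beta\int K_\theta\,d\theta=K_H\,\epsilon^\beta$. This yields exactly the claimed bound $|L_{\x_o}^\epsilon(\widehat{f},f)-L_{\x_o}(\widehat{f},f)|\le K_H\,\epsilon^\beta=O(\epsilon^\beta)$ (the appendix statement's equality should read as this inequality, which already delivers the stated $O(\epsilon^\beta)$ rate).

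The only delicate point, and the step I would treat most carefully, is the interchange of the order of integration in $\theta$ and $\x$ that underlies rewriting the difference as a single double integral and then factoring out the $\theta$-integral. This is justified by Fubini--Tonelli: the integrand $|g_\theta(\x)-g_\theta(\x_o)|$ times the weight is nonnegative, and the assumption $K_H=\int K_\theta\,d\theta<\infty$ together with the fact that the weight is a probability density in $\x$ guarantees the iterated integral is finite (indeed bounded by $K_H\epsilon^\beta$). Hence the interchange is valid and no regularity on $f$ or $\widehat{f}$ beyond the stated H\"older condition is required. I do not expect a serious obstacle here; the argument is essentially a weighted-average estimate showing that averaging $g_\theta$ over a small neighborhood of $\x_o$ cannot differ from its value at $\x_o$ by more than the H\"older modulus $K_\theta\epsilon^\beta$, integrated against $d\theta$.
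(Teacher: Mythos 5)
Your proposal is correct and follows essentially the same route as the paper's proof: rewrite $L_{\x_o}$ as a double integral against the weight (which integrates to one in $\x$), subtract, apply the triangle inequality and the H\"{o}lder bound $|g_\theta(\x)-g_\theta(\x_o)|\leq K_\theta\,\epsilon^\beta$ on the support of the indicator, and integrate out to get $K_H\epsilon^\beta$. Your added Fubini--Tonelli justification and your observation that the displayed equality in the appendix statement should be the inequality $\leq K_H\epsilon^\beta$ (as in the main-text version) are both correct refinements of what the paper leaves implicit.
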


\begin{proof}
First, notice that
$$L_{\x_o}(\widehat{f},f)= \int g_\theta(\x_o) d\theta \int\frac{f(\x) \I(d(\x,\x_o)< \epsilon)}{\P(d(\X,\x_o)<\epsilon)} d\x=\int \int g_\theta(\x_o) \frac{f(\x) \I(d(\x,\x_o)< \epsilon)}{\P(d(\X,\x_o)<\epsilon)} d\x d\theta.$$
It follows that  
\begin{align*}
|L_{\x_o}^\epsilon(\widehat{f},f)-L_{\x_o}(\widehat{f},f)| &=  \left| \int \left( \int g_\theta(\x) \frac{f(\x) \I(d(\x,\x_o)< \epsilon)}{\P(d(\X,\x_o)<\epsilon)} d\x  - \int g_\theta(\x_o) \frac{f(\x) \I(d(\x,\x_o)< \epsilon)}{\P(d(\X,\x_o)<\epsilon)} d\x   \right) d\theta\right| \\
& = \left| \int \left( \int (g_\theta(\x) -g_\theta(\x_o))\frac{f(\x) \I(d(\x,\x_o)< \epsilon)}{\P(d(\X,\x_o)<\epsilon)} d\x   \right) d\theta\right| \\
&\leq \int \left( \int \left| g_\theta(\x) -g_\theta(\x_o)\right|\frac{f(\x) \I(d(\x,\x_o)< \epsilon)}{\P(d(\X,\x_o)<\epsilon)} d\x   \right) d\theta \\
&\leq \int \left( \int K_\theta d(\x,\x_o)^\beta \frac{f(\x) \I(d(\x,\x_o)< \epsilon)}{\P(d(\X,\x_o)<\epsilon)} d\x   \right) d\theta \\
&\leq \int K_\theta \epsilon^\beta \left( \int  \frac{f(\x) \I(d(\x,\x_o)< \epsilon)}{\P(d(\X,\x_o)<\epsilon)} d\x   \right) d\theta \\
& = \epsilon^\beta   \int K_\theta  1 d\theta = K_H \epsilon^\beta
\end{align*}

\end{proof}

\begin{align}
\label{eq::LHat}
L_{\x_o}^\epsilon(\widehat{f},f) &= \notag \\
&\int \! \int \widehat{f}^2(\theta|\x) \frac{f(\x) \I(d(\x,\x_o)< \epsilon)}{\P(d(\X,\x_o)<\epsilon)}  d\theta d\x -2 
\int \! \int  \widehat{f}(\theta|\x)f(\theta|\x) \frac{f(\x) \I(d(\x,\x_o)< \epsilon)}{\P(d(\X,\x_o)<\epsilon)} d\theta d\x
+ K_f \notag \\
&  =\E_{\X'}\left[\int \widehat{f}^2(\theta|\X) d\theta \right]-2 \E_{(\theta',\X')}\left[\widehat{f}(\theta|\X)\right]+K_f,
\end{align}

\begin{thm}
Let $K_f$ be as in Equation \ref{eq::LHat}. Under the assumptions of Theorem \ref{Eq::LossBias},
$$|\widehat{L}_{\x_o}^\epsilon(\widehat{f},f)+K_f-L_{\x_o}(\widehat{f},f)| = O(\epsilon^\beta)+O_P(1/\sqrt{B'})$$
\end{thm}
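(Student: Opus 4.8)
The plan is to insert the (exact) surrogate loss $L_{\x_o}^\epsilon(\widehat{f},f)$ as an intermediate quantity and split the error into a deterministic approximation term and a stochastic estimation term. Concretely, by the triangle inequality,
\begin{align*}
|\widehat{L}_{\x_o}^\epsilon(\widehat{f},f)+K_f-L_{\x_o}(\widehat{f},f)| \leq |L_{\x_o}^\epsilon(\widehat{f},f)-L_{\x_o}(\widehat{f},f)| + |\widehat{L}_{\x_o}^\epsilon(\widehat{f},f)+K_f-L_{\x_o}^\epsilon(\widehat{f},f)|.
\end{align*}
The first summand is precisely the gap between the surrogate and the true loss, which Theorem~\ref{Eq::LossBias} already controls: under its H\"{o}lder assumption it is bounded by $K_H\epsilon^\beta=O(\epsilon^\beta)$. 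It therefore remains to show that the second summand is $O_P(1/\sqrt{B'})$, i.e.\ that the empirical estimator concentrates around its population target.

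For the second term I would use the representation in Equation~\ref{eq::LHat}, which yields
\begin{align*}
L_{\x_o}^\epsilon(\widehat{f},f)-K_f = \E_{\X'}\!\left[\int \widehat{f}^2(\theta|\X')\,d\theta\right] - 2\,\E_{(\theta',\X')}\!\left[\widehat{f}(\theta'|\X')\right],
\end{align*}
where $(\theta',\X')$ has the distribution induced by Algorithm~\ref{alg::naiveAbc}. Setting $W_k := \int \widehat{f}^2(\theta|\X'_k)\,d\theta - 2\widehat{f}(\theta'_k|\X'_k)$, the estimator $\widehat{L}_{\x_o}^\epsilon(\widehat{f},f)$ of Equation~\ref{eq::lossEstimate} is exactly the sample average $\frac{1}{B'}\sum_{k=1}^{B'} W_k$, and the display above shows $\E[W_k]=L_{\x_o}^\epsilon(\widehat{f},f)-K_f$. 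Since the validation pairs $(\theta'_1,\X'_1),\ldots,(\theta'_{B'},\X'_{B'})$ are i.i.d.\ draws from that distribution, the second summand equals $|\frac{1}{B'}\sum_{k=1}^{B'}(W_k-\E[W_k])|$, a centered average of i.i.d.\ terms. Provided $\V[W_1]<\infty$, the central limit theorem (or, more crudely, Chebyshev's inequality applied to $\V[\frac{1}{B'}\sum_k W_k]=\V[W_1]/B'$) gives that this deviation is $O_P(1/\sqrt{B'})$. Combining the two bounds yields $O(\epsilon^\beta)+O_P(1/\sqrt{B'})$, as claimed.

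The main obstacle I anticipate is verifying the finite-variance condition $\V[W_1]<\infty$, since neither the H\"{o}lder hypothesis of Theorem~\ref{Eq::LossBias} nor the definition of $W_k$ guarantees it on its own. This is where a boundedness assumption on $\widehat{f}$ enters: if $|\widehat{f}(\theta|\x)|\leq M$ (as assumed in Theorem~\ref{thm::estimateToTrueUnion}, and as holds for FlexCode with bounded bases or for a kernel density estimator on the ABC sample), then each $W_k$ is bounded, hence square-integrable, which legitimizes the central limit step. I would state this integrability requirement explicitly at the outset; once it is in place, the remaining manipulations are routine.
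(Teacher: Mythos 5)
Your proposal is correct and follows essentially the same route as the paper's proof: the identical triangle-inequality split through the exact surrogate loss $L_{\x_o}^\epsilon(\widehat{f},f)$, with Theorem~\ref{Eq::LossBias} bounding the deterministic term by $K_H\epsilon^\beta$ and the i.i.d.-average structure of $\widehat{L}_{\x_o}^\epsilon(\widehat{f},f)+K_f$ supplying the $O_P(1/\sqrt{B'})$ term. If anything you are more explicit than the paper, which asserts the stochastic rate directly from the fact that the estimator is an average of $B'$ i.i.d.\ variables; your remark that square-integrability of $W_1$ (e.g.\ via the boundedness assumption $|\widehat{f}(\theta|\x)|\leq M$ that the paper only introduces later, in Theorem~\ref{thm::estimateToTrueUnion}) is what legitimizes this step makes precise a condition the paper leaves implicit.
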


\begin{proof}
Using the triangle inequality,
\begin{align*}
|\widehat{L}_{\x_o}^\epsilon(\widehat{f},f)+K_f-L_{\x_o}(\widehat{f},f)| &\leq |\widehat{L}_{\x_o}^\epsilon(\widehat{f},f)+K_f-L_{\x_o}^\epsilon(\widehat{f},f)|+|L_{\x_o}^\epsilon(\widehat{f},f)-L_{\x_o}(\widehat{f},f)| \\
&= O(\epsilon^\beta)+O_P(1/\sqrt{B'}),
\end{align*}
where the last inequality follows from Theorem \ref{Eq::LossBias} and the fact that
$\widehat{L}_{\x_o}^\epsilon(\widehat{f},f)+K_f$ is an average of $B'$ iid random variables.
\end{proof}

\begin{Lemma} Assume there exists $M$ such that $|\widehat{f}(\theta|\x)|\leq M$ for every $\x$ and $\theta$.
Then
\label{lemma::estimateToEps}
 $$\P\left(|\widehat{L}_{\x_o}^\epsilon(\widehat{f},f)+K_f-L_{\x_o}^\epsilon(\widehat{f},f)| \geq 	\nu\right) \leq 2 e^{-\frac{B'\nu^2}{2(M^2+2M)^2}}$$
\end{Lemma}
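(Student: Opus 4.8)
The plan is to recognize $\widehat{L}_{\x_o}^\epsilon(\widehat{f},f)$ as an empirical average of i.i.d.\ bounded random variables whose expectation is exactly $L_{\x_o}^\epsilon(\widehat{f},f)-K_f$, and then apply Hoeffding's inequality. Concretely, for each validation point $(\theta_k',\X_k')$ I would define
$$W_k := \int \widehat{f}^2(\theta|\X_k')\,d\theta - 2\widehat{f}(\theta_k'|\X_k'),$$
so that $\widehat{L}_{\x_o}^\epsilon(\widehat{f},f)=\frac{1}{B'}\sum_{k=1}^{B'}W_k$ by Equation \ref{eq::lossEstimate}. Since the validation sample is drawn i.i.d.\ according to Algorithm \ref{alg::naiveAbc} and $\widehat{f}$ is fixed (fitted on an independent training set), the $W_k$ are i.i.d.; and comparing with the last line of Equation \ref{eq::LHat} gives $\E[W_k]=L_{\x_o}^\epsilon(\widehat{f},f)-K_f$. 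Hence the quantity in the statement is precisely the centered average $\frac{1}{B'}\sum_k\bigl(W_k-\E[W_k]\bigr)$.

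Next I would establish a uniform bound on $W_k$. Using the hypothesis $|\widehat{f}(\theta|\x)|\leq M$ together with the fact that $\theta$ ranges over a domain of unit Lebesgue measure (as for the Fourier basis on $[0,1]$ used throughout the paper), one has $0\leq \int \widehat{f}^2(\theta|\X_k')\,d\theta \leq M^2$ and $|2\widehat{f}(\theta_k'|\X_k')|\leq 2M$, so by the triangle inequality $|W_k|\leq M^2+2M$. Thus each $W_k$ is confined to the symmetric interval $[-(M^2+2M),\,M^2+2M]$, whose length is $R=2(M^2+2M)$.

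Finally, I would invoke Hoeffding's inequality for the average of $B'$ independent variables each confined to an interval of length $R$:
$$\P\left(\left|\frac{1}{B'}\sum_{k=1}^{B'}\bigl(W_k-\E[W_k]\bigr)\right|\geq \nu\right)\leq 2\exp\left(-\frac{2B'\nu^2}{R^2}\right)=2\exp\left(-\frac{B'\nu^2}{2(M^2+2M)^2}\right),$$
which is exactly the claimed bound.

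The argument is essentially routine once the right i.i.d.\ decomposition is identified; the only point requiring care is tracking the constant. One should bound $W_k$ by $M^2+2M$ (rather than exploiting the tighter range $[-2M,\,M^2+2M]$) so that the \emph{symmetric} length $2(M^2+2M)$ enters Hoeffding and reproduces the stated $2(M^2+2M)^2$ in the exponent. The one implicit assumption worth flagging is that the parameter domain has measure one, which is what makes $\int \widehat{f}^2\,d\theta\leq M^2$; on a domain of measure $\mu(\Theta)$ the bound would instead carry an extra factor of $\mu(\Theta)$.
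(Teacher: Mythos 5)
Your proposal is correct and follows essentially the same route as the paper's own proof: define $W_k = \int \widehat{f}^2(\theta|\X'_k)\,d\theta - 2\widehat{f}(\theta'_k|\X'_k)$, observe the $W_k$ are i.i.d.\ with $|W_k|\leq M^2+2M$, and apply Hoeffding's inequality to the centered average. Your additional remark that the bound $\int \widehat{f}^2(\theta|\x)\,d\theta \leq M^2$ implicitly assumes $\Theta$ has unit Lebesgue measure is a fair observation about a step the paper leaves tacit, but it does not change the argument.
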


\begin{proof}
Notice that
$$\widehat{L}_{\x_o}^\epsilon(\widehat{f},f)+K_f-L_{\x_o}^\epsilon(\widehat{f},f) = \frac{1}{B'}\sum_{k=1}^{B'} W_k-\E[W_1],$$
where
$W_k= \int \widehat{f}^2(\theta|\X'_k) d\theta  -2 \widehat{f}(\Theta'_k|\X'_k)$, with $W_1,\ldots,W_{B'}$ iid.
The conclusion follows from Hoeffding's inequality and the fact that
$|W_k| \leq |\int \widehat{f}^2(\theta|\X'_k) d\theta  -2 \widehat{f}(\Theta'_k|\X'_k)|
\leq M^2+2M.$
\end{proof}

\begin{Lemma} 
\label{lemma::estimateToTrue}
Under the assumptions of Lemma \ref{lemma::estimateToEps} and if
$g_{\theta}(\x):= (\widehat{f}(\theta|\x)-f(\theta|\x))^2$
satisfies the  H\"{o}lder condition of order $\beta$ with constants $K_\theta$
such that $ K_H:=\int K_\theta d\theta < \infty$, 
 $$\P\left(|\widehat{L}_{\x_o}^\epsilon(\widehat{f},f)+K_f-L_{\x_o}(\widehat{f},f)| \geq K_H \epsilon^\beta + \nu\right) \leq 2 e^{-\frac{B'\nu^2}{2(M^2+2M)^2}},$$
\end{Lemma}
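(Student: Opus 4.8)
The plan is to combine the deterministic bias bound of Theorem~\ref{Eq::LossBias} with the concentration inequality of Lemma~\ref{lemma::estimateToEps} through a single application of the triangle inequality. The quantity of interest, $\widehat{L}_{\x_o}^\epsilon(\widehat{f},f)+K_f-L_{\x_o}(\widehat{f},f)$, splits naturally into a stochastic piece, $\widehat{L}_{\x_o}^\epsilon(\widehat{f},f)+K_f-L_{\x_o}^\epsilon(\widehat{f},f)$, which is the Monte Carlo error of the empirical average over the $B'$ independent validation points, and a deterministic piece, $L_{\x_o}^\epsilon(\widehat{f},f)-L_{\x_o}(\widehat{f},f)$, which captures the systematic bias incurred by averaging over an $\epsilon$-neighborhood of $\x_o$ rather than evaluating exactly at $\x_o$. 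Since $\widehat{f}$ is fixed and the validation sample is independent of it, the second piece carries no randomness.

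First I would write, by the triangle inequality,
$$|\widehat{L}_{\x_o}^\epsilon+K_f-L_{\x_o}| \leq |\widehat{L}_{\x_o}^\epsilon+K_f-L_{\x_o}^\epsilon| + |L_{\x_o}^\epsilon-L_{\x_o}|.$$
By Theorem~\ref{Eq::LossBias}, applied under exactly the Hölder hypothesis assumed here, the second summand is deterministically bounded by $K_H\epsilon^\beta$. Hence, whenever the stochastic piece is smaller than $\nu$, the full deviation is strictly smaller than $K_H\epsilon^\beta+\nu$. Passing to complements yields the event inclusion
$$\left\{|\widehat{L}_{\x_o}^\epsilon+K_f-L_{\x_o}| \geq K_H\epsilon^\beta+\nu\right\} \subseteq \left\{|\widehat{L}_{\x_o}^\epsilon+K_f-L_{\x_o}^\epsilon| \geq \nu\right\}.$$

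The last step is to control the probability of the right-hand event. The boundedness assumption $|\widehat{f}(\theta|\x)|\leq M$ inherited from Lemma~\ref{lemma::estimateToEps} is in force, so that lemma applies verbatim and gives $\P(|\widehat{L}_{\x_o}^\epsilon+K_f-L_{\x_o}^\epsilon|\geq\nu)\leq 2e^{-B'\nu^2/(2(M^2+2M)^2)}$. Monotonicity of probability across the event inclusion transfers this bound to the left-hand event, which is precisely the claim. I do not expect any genuine obstacle: all the substantive content resides in the two results being assembled, and the only point requiring care is the bookkeeping observation that the deterministic bias \emph{shifts the threshold} by $K_H\epsilon^\beta$ rather than contributing an additional stochastic tail, so that the exponential rate in the Hoeffding bound is left untouched.
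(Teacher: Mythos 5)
Your proposal is correct and follows exactly the paper's own argument: a triangle-inequality split into the Monte Carlo error and the neighborhood bias, Theorem~\ref{Eq::LossBias} to bound the bias by $K_H\epsilon^\beta$, the resulting event inclusion (the paper writes it as the contrapositive implication), and finally the Hoeffding bound from Lemma~\ref{lemma::estimateToEps}. No differences of substance.
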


\begin{proof}
Notice that
\begin{align*}
|&\widehat{L}_{\x_o}^\epsilon(\widehat{f},f)+K_f-L_{\x_o}(\widehat{f},f)| - K_H \epsilon^\beta \\
&= 
|\widehat{L}_{\x_o}^\epsilon(\widehat{f},f)+K_f-L_{\x_o}^\epsilon(\widehat{f},f)+L_{\x_o}^\epsilon(\widehat{f},f)-L_{\x_o}(\widehat{f},f)| - K_H \epsilon^\beta  \\
&\leq 
|\widehat{L}_{\x_o}^\epsilon(\widehat{f},f)+K_f-L_{\x_o}^\epsilon(\widehat{f},f)|+|L_{\x_o}^\epsilon(\widehat{f},f)-L_{\x_o}(\widehat{f},f)| - K_H \epsilon^\beta  \\
&\leq 
|\widehat{L}_{\x_o}^\epsilon(\widehat{f},f)+K_f-L_{\x_o}^\epsilon(\widehat{f},f)|,
\end{align*}
where the last line follows from Theorem \ref{Eq::LossBias}.
It follows that 
$$ |\widehat{L}_{\x_o}^\epsilon(\widehat{f},f)+K_f-L_{\x_o}(\widehat{f},f)| \geq K_H \epsilon^\beta + \nu \Rightarrow |\widehat{L}_{\x_o}^\epsilon(\widehat{f},f)+K_f-L_{\x_o}^\epsilon(\widehat{f},f)| \geq 	\nu.$$ 
The conclusion follows from Lemma \ref{lemma::estimateToEps}.
\end{proof}

\begin{thm}
\label{thm::estimateToTrueUnion}
Let $\mathcal{F}=\{\widehat{f}_1,\ldots,\widehat{f}_m\}$ be a   set of estimators of $f(\theta|\x_o)$.
Assume there exists $M$ such that $|\widehat{f}_i(\theta|\x)|\leq M$ for every $\x$, $\theta$, and $i=1,\ldots,m$. \footnote{Such
assumptions hold if the $\widehat{f}_i$'s are obtained
via FlexCode with bounded basis functions (e.g., Fourier basis)
or a kernel density estimator on the ABC samples.}
Moroever, assume that
for every $\theta \in \Theta$, $g_{i,\theta}(\x):= (\widehat{f}_i(\theta|\x)-f(\theta|\x))^2$
satisfies the  H\"{o}lder condition of order $\beta$ with constants $K_\theta$
such that $ K_H:=\int K_\theta d\theta < \infty$. Then,
 $$\P\left(\max_{\widehat{f} \in \mathcal{F}} |\widehat{L}_{\x_o}^\epsilon(\widehat{f},f)+K_f-L_{\x_o}(\widehat{f},f)| \geq K_\epsilon \epsilon^\beta+	\nu\right) \leq  2m e^{-\frac{B'\nu^2}{2(M^2+2M)^2}}.$$
\end{thm}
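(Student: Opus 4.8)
The plan is to reduce the uniform statement to the single-estimator tail bound of Lemma \ref{lemma::estimateToTrue} via a union bound over the finite family $\mathcal{F}$. First I would verify that each $\widehat{f}_i$ satisfies the two hypotheses of Lemma \ref{lemma::estimateToTrue}: the uniform bound $|\widehat{f}_i(\theta|\x)| \le M$ is assumed directly, and the H\"{o}lder condition on $g_{i,\theta}(\x) = (\widehat{f}_i(\theta|\x) - f(\theta|\x))^2$ with integrable constant $K_H$ is also assumed. Hence, for each fixed $i \in \{1,\ldots,m\}$, Lemma \ref{lemma::estimateToTrue} applies verbatim and gives
\[
\P\left(|\widehat{L}_{\x_o}^\epsilon(\widehat{f}_i,f)+K_f-L_{\x_o}(\widehat{f}_i,f)| \geq K_H \epsilon^\beta + \nu\right) \leq 2 e^{-\frac{B'\nu^2}{2(M^2+2M)^2}}.
\]

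Next I would rewrite the event in the theorem as a union of the single-estimator events. Since the maximum over $\mathcal{F}$ exceeds the threshold if and only if at least one of the $m$ deviations does, we have
\[
\left\{\max_{\widehat{f} \in \mathcal{F}} |\widehat{L}_{\x_o}^\epsilon(\widehat{f},f)+K_f-L_{\x_o}(\widehat{f},f)| \geq K_H\epsilon^\beta+\nu\right\} = \bigcup_{i=1}^m \left\{|\widehat{L}_{\x_o}^\epsilon(\widehat{f}_i,f)+K_f-L_{\x_o}(\widehat{f}_i,f)| \geq K_H\epsilon^\beta+\nu\right\}.
\]
Applying countable subadditivity of probability (the union bound) together with the per-estimator bound above then yields the factor $2m\, e^{-B'\nu^2/(2(M^2+2M)^2)}$, which is exactly the claim; here the constant appearing as $K_\epsilon$ in the statement should read $K_H$, to match Lemma \ref{lemma::estimateToTrue}.

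There is essentially no hard step in this argument; it is a routine union bound once the single-estimator concentration is in hand. The one point worth flagging is that the empirical surrogate losses $\widehat{L}_{\x_o}^\epsilon(\widehat{f}_i,f)$ for different $i$ are all computed from the \emph{same} validation sample $(\theta'_1,\X'_1),\ldots,(\theta'_{B'},\X'_{B'})$, so the $m$ events are statistically dependent. This dependence is harmless: the union bound requires no independence across events, only subadditivity. It is also worth noting that $K_f$ is a single constant shared by all estimators, since it depends only on $f$ and not on $\widehat{f}$, so it is common throughout the maximum. The exponential decay in $B'$ is inherited unchanged from the Hoeffding step inside Lemma \ref{lemma::estimateToEps}, and the only price paid for uniformity over $\mathcal{F}$ is the linear factor $m$ in front of the exponential.
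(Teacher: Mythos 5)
Your proposal is correct and follows exactly the paper's own proof, which consists of applying Lemma \ref{lemma::estimateToTrue} to each $\widehat{f}_i$ and then invoking the union bound over the finite family $\mathcal{F}$. Your additional observations --- that the shared validation sample causes no problem since the union bound needs no independence, and that the $K_\epsilon$ in the statement is a typo for $K_H$ --- are both accurate.
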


\begin{proof}
The theorem follows from Lemma \ref{lemma::estimateToTrue}  and the union bound.
\end{proof}

\begin{Cor}
Let
$\widehat{f}^* := \arg \min_{\widehat{f} \in \mathcal{F}} \widehat{L}_{\x_o}^\epsilon(\widehat{f},f)$
be the best estimator in $\mathcal{F}$ according to the estimated surrogate loss, and
 let $f^*=\arg \min_{\widehat{f} \in \mathcal{F}} L_{\x_o}(\widehat{f},f)$
be the best estimator in $\mathcal{F}$ according to the true loss.
Then, under the assumptions from Theorem \ref{thm::estimateToTrueUnion}, with probability at least $1-2m e^{-\frac{B'\nu^2}{2(M^2+2M)^2}}$,
$$ L_{\x_o}(\widehat{f}^*,f)\leq  L_{\x_o}(f^*,f) +2 (K_H\epsilon^\beta+	\nu).$$
\end{Cor}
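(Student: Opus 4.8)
The plan is to run the standard oracle-inequality argument, leveraging the uniform control provided by Theorem \ref{thm::estimateToTrueUnion}. Introduce the event
\[
\mathcal{E} := \left\{ \max_{\widehat{f} \in \mathcal{F}} |\widehat{L}_{\x_o}^\epsilon(\widehat{f},f)+K_f-L_{\x_o}(\widehat{f},f)| \leq K_H \epsilon^\beta + \nu \right\},
\]
which, by Theorem \ref{thm::estimateToTrueUnion}, has probability at least $1 - 2m e^{-B'\nu^2/(2(M^2+2M)^2)}$. Writing $\delta := K_H \epsilon^\beta + \nu$ for brevity, the entire argument takes place on $\mathcal{E}$, so the stated probability is automatically the one that governs the final bound.

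First I would apply the two-sided bound available on $\mathcal{E}$ to the data-selected estimator $\widehat{f}^*$, passing from the true loss up to the estimated surrogate loss (plus the constant $K_f$):
\[
L_{\x_o}(\widehat{f}^*,f) \leq \widehat{L}_{\x_o}^\epsilon(\widehat{f}^*,f) + K_f + \delta.
\]
Next, since $\widehat{f}^*$ is by definition the minimizer of $\widehat{L}_{\x_o}^\epsilon(\cdot,f)$ over $\mathcal{F}$ and $f^* \in \mathcal{F}$, I would replace $\widehat{f}^*$ by $f^*$ inside the estimated loss at no cost:
\[
\widehat{L}_{\x_o}^\epsilon(\widehat{f}^*,f) \leq \widehat{L}_{\x_o}^\epsilon(f^*,f).
\]
Finally I would invoke the bound on $\mathcal{E}$ a second time, now at $f^*$, to descend from the estimated surrogate loss back to the true loss:
\[
\widehat{L}_{\x_o}^\epsilon(f^*,f) + K_f \leq L_{\x_o}(f^*,f) + \delta.
\]
Chaining the three displays yields $L_{\x_o}(\widehat{f}^*,f) \leq L_{\x_o}(f^*,f) + 2\delta = L_{\x_o}(f^*,f) + 2(K_H \epsilon^\beta + \nu)$, which is exactly the claim.

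The argument is essentially bookkeeping, and I do not anticipate a genuine obstacle; the one point deserving care is that the deviation bound must hold \emph{simultaneously} for both $\widehat{f}^*$ and $f^*$. A merely pointwise guarantee such as Lemma \ref{lemma::estimateToTrue} would not suffice, because $\widehat{f}^*$ is data-dependent and could, for a given sample, coincide with whichever estimator happens to be poorly approximated. This is precisely why the uniform version of Theorem \ref{thm::estimateToTrueUnion}, obtained via the union bound over the $m$ candidates, is the correct input, and it also explains the factor $2$ (rather than $1$) multiplying $\delta$: each of the two applications of the bound, one to $\widehat{f}^*$ and one to $f^*$, contributes a single copy of $K_H \epsilon^\beta + \nu$.
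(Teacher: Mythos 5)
Your proof is correct and is essentially the same argument as the paper's: the paper telescopes $L_{\x_o}(\widehat{f}^*,f)-L_{\x_o}(f^*,f)$ into the same three terms you chain together, bounding the two deviation terms by the uniform guarantee of Theorem \ref{thm::estimateToTrueUnion} and the middle term by the minimality of $\widehat{f}^*$. Your closing remark on why the uniform (rather than pointwise) bound is needed for the data-dependent $\widehat{f}^*$ is also exactly the reason the paper invokes the union-bound version.
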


\begin{proof}
From Theorem \ref{thm::estimateToTrueUnion},
with probability at least $1-2m e^{-\frac{B'\nu^2}{2(M^2+2M)^2}}$
\begin{align*}
L_{\x_o}(\widehat{f}^*,f)-&L_{\x_o}(f^*,f)= \\
&L_{\x_o}(\widehat{f}^*,f)-(\widehat{L}_{\x_o}^\epsilon(\widehat{f}^*,f)+K_f) \\
&+ (\widehat{L}_{\x_o}^\epsilon(\widehat{f}^*,f)+K_f)-(\widehat{L}_{\x_o}^\epsilon(f^*,f)+K_f)  \\
&+(\widehat{L}_{\x_o}^\epsilon(f^*,f)+K_f) - L_{\x_o}(f^*,f) \\
&\leq 2 (K_H\epsilon^\beta+	\nu),
\end{align*}
where the inequality follows from the fact that, by definition,
$(\widehat{L}_{\x_o}^\epsilon(\widehat{f}^*,f)+K_f)-(\widehat{L}_{\x_o}^\epsilon(f^*,f)+K_f) <0$
and $L_{\x_o}(\widehat{f},f)-(\widehat{L}_{\x_o}^\epsilon(\widehat{f},f)+K_f) \leq K_H\epsilon^\beta+	\nu$
for every $\widehat{f} \in \mathcal{F}$.
\end{proof}

\subsection{Results on summary statistics selection}

\begin{Assumption}[Smoothness in $\theta$ direction]
\label{label:assumpSmooth}
 \label{assump-sobolevZ} $\forall \vec{x} \! \in \! \mathcal{X}$, 
$f(\theta|\x) \! \in \! W_{\phi}(s_\vec{x},c_\vec{x}),$ 
 where $f(\theta|\x)$ is viewed as a function of $\theta$, and $s_\vec{x}$ and $c_\vec{x}$ are such that 
$\inf_\vec{x} s_\vec{x}\overset{\mbox{\tiny{def}}}{=}\beta>\frac{1}{2}$ and 
$\int c^2_\vec{x} d\x <\infty$.
\end{Assumption}

\begin{Lemma}
\label{lemma::sobolev}
Let $\x=(x_1,\ldots,x_d)$ and $\x'=(x_1,\ldots,x_j', \ldots, x_d)$.
Then, for every $\x$ and $x_j$, $g_{\x,x_j}(\theta):=f(\theta|\x)-f(\theta|\x_j') \in W_{\phi}(\beta,c_\x^2+c_{\x_j'}^2+2\sqrt{c_\x^2c_{\x_j'}^2})$.
\end{Lemma}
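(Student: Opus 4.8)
The plan is to recognize that $W_\phi(s,c)$ is nothing but a ball in a weighted $\ell^2$ norm on the $\phi$-coefficients, so the result will follow from the triangle inequality once the two conditional densities are placed in a common Sobolev space. Writing $f(\theta|\x)=\sum_{i\geq 1}\beta_i(\x)\phi_i(\theta)$ and $f(\theta|\x_j')=\sum_{i\geq 1}\beta_i(\x_j')\phi_i(\theta)$, linearity of the expansion gives $g_{\x,x_j}(\theta)=\sum_{i\geq 1}\bigl(\beta_i(\x)-\beta_i(\x_j')\bigr)\phi_i(\theta)$, so the $\phi$-coefficients of $g_{\x,x_j}$ are exactly the differences $\beta_i(\x)-\beta_i(\x_j')$. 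For a fixed order $s$ let $\|h\|_s^2:=\sum_{i\geq 1}(\pi i)^{2s}h_i^2$ denote the squared Sobolev norm of $h=\sum_i h_i\phi_i$; then membership $h\in W_\phi(s,c)$ is precisely the statement $\|h\|_s^2\leq c^2$, and $\|\cdot\|_s$ is a genuine norm (a weighted $\ell^2$ norm on the coefficient sequence), hence satisfies the triangle inequality.

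First I would reduce both densities to the common order $\beta=\inf_\x s_\x$. Because $\beta\leq s_\x$ and $\pi i\geq\pi>1$ for every $i\geq1$, the weights obey $(\pi i)^{2\beta}\leq(\pi i)^{2s_\x}$ termwise, whence $\|f(\cdot|\x)\|_\beta^2\leq\|f(\cdot|\x)\|_{s_\x}^2\leq c_\x^2$; by Assumption~\ref{label:assumpSmooth} this shows $f(\cdot|\x)\in W_\phi(\beta,c_\x)$, and identically $f(\cdot|\x_j')\in W_\phi(\beta,c_{\x_j'})$. Second, I would apply the triangle inequality in $\|\cdot\|_\beta$ to the difference, giving $\|g_{\x,x_j}\|_\beta\leq\|f(\cdot|\x)\|_\beta+\|f(\cdot|\x_j')\|_\beta\leq c_\x+c_{\x_j'}$. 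Squaring then yields $\|g_{\x,x_j}\|_\beta^2\leq(c_\x+c_{\x_j'})^2=c_\x^2+c_{\x_j'}^2+2\sqrt{c_\x^2c_{\x_j'}^2}$, which is exactly the assertion that $g_{\x,x_j}\in W_\phi\bigl(\beta,\,c_\x^2+c_{\x_j'}^2+2\sqrt{c_\x^2c_{\x_j'}^2}\bigr)$. Since nothing in the argument depends on the particular values of $\x$ or of the replaced coordinate $x_j'$, the conclusion holds for every $\x$ and $x_j$ as claimed.

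The only genuine subtlety is the first step: the two conditional densities need not lie in Sobolev balls of the same order, so one must justify that membership in $W_\phi(s,c)$ implies membership in $W_\phi(\beta,c)$ for any $\beta\leq s$. This monotonicity in the smoothness index rests solely on the elementary inequality $(\pi i)^\beta\leq(\pi i)^s$, valid because $\pi i>1$; once both functions share the order $\beta$, the remainder is just the triangle inequality for a weighted $\ell^2$ norm, so I do not anticipate any further difficulty. I would also note that the radius appearing in the statement is precisely $(c_\x+c_{\x_j'})^2$, which matches the squared-norm bound produced above.
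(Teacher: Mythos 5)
Your proposal is correct and takes essentially the same route as the paper: both arguments expand the two conditional densities in the basis $(\phi_i)_i$, observe that the coefficients of $g_{\x,x_j}$ are the differences $\beta_i(\x)-\beta_i(\x_j')$, and bound the weighted squared coefficient sum by $c_\x^2+c_{\x_j'}^2+2\sqrt{c_\x^2 c_{\x_j'}^2}$ — your triangle (Minkowski) inequality in the weighted $\ell^2$ norm is just a repackaging of the paper's explicit Cauchy--Schwarz bound on the cross term. Your one genuine addition is making explicit the reduction from the orders $s_\x, s_{\x_j'}$ to the common order $\beta$ via monotonicity of the weights, a step the paper's proof uses implicitly.
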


\begin{proof}
First we expand $f(\theta|\x)$ 
and $f(\theta|\x_j')$  in the basis $(\phi_i)_i$.
We have that
\begin{align*}
g_{\x,x_j}(\theta)=f(\theta|\x)-f(\theta|\x'_j) =\sum_{i \geq 0} (\beta_i(\x)-\beta_i(\x'_j))\phi_i(\theta).
\end{align*}
Now, using  Cauchy-Schwarz inequality, the expansion coefficients satisfy
\begin{align*}
\sum_{i\geq 1} i^{2\beta} &(\beta_i(\x)-\beta_i(\x'_j))^2   \\
&=\sum_{i\geq 1} i^{2\beta} (\beta_i(\x))^2
+\sum_{i\geq 1} i^{2\beta} (\beta_i(\x'_j))^2+2\sum_{i\geq 1} i^{2\beta} \beta_i(\x)\beta_i(\x'_j) \\
&\leq c_\x^2+c_{\x'_j}^2+2\sqrt{\left(\sum_{i\geq 1} i^{2\beta} (\beta_i(\x))^2\right) \left(\sum_{i\geq 1} i^{2\beta} (\beta_i(\x_j'))^2\right)} \\
&\leq c_\x^2+c_{\x'_j}^2+2\sqrt{c_\x^2c_{\x'_j}^2},
\end{align*}
where the last inequality follows from Assumption \ref{assump-sobolevZ}.
\end{proof}

\begin{prop} Under Assumption \ref{label:assumpSmooth},
$$ r_j =  \sum_{i=1}^I r_{i,j} + O\left(I^{-2\beta}\right)$$ \label{prop::stats_relevance}
\end{prop}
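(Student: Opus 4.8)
The plan is to reduce the claim to a statement about the tail of an orthonormal expansion. First I would fix $\x$ and $x'_j$ and expand both conditional densities in the basis $(\phi_i)_i$, so that $f(\theta\mid\x)-f(\theta\mid\x'_j)=\sum_{i\geq 1}(\beta_i(\x)-\beta_i(\x'_j))\phi_i(\theta)$ with $\beta_i(\cdot)$ as in Equation \ref{eq::beta}. Parseval's identity for the orthonormal system then gives
\[
\int \bigl(f(\theta\mid\x)-f(\theta\mid\x'_j)\bigr)^2 d\theta = \sum_{i\geq 1}\bigl(\beta_i(\x)-\beta_i(\x'_j)\bigr)^2 .
\]
Since the integrand is nonnegative, Tonelli's theorem lets me integrate this identity over $\x$ and $x'_j$ and interchange the sum with the integration, yielding $r_j=\sum_{i\geq 1} r_{i,j}$. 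Splitting the series at the cutoff $I$ gives $r_j=\sum_{i=1}^I r_{i,j}+\sum_{i>I} r_{i,j}$, so the whole statement reduces to showing that the remainder $\sum_{i>I} r_{i,j}$ is $O(I^{-2\beta})$.

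Second, I would control the remainder pointwise in $(\x,x'_j)$ using the Sobolev regularity supplied by Lemma \ref{lemma::sobolev}. From (the proof of) that lemma, $\sum_{i\geq 1} i^{2\beta}(\beta_i(\x)-\beta_i(\x'_j))^2\leq c_\x^2+c_{\x'_j}^2+2\sqrt{c_\x^2 c_{\x'_j}^2}=(c_\x+c_{\x'_j})^2$. Inserting the factor $i^{-2\beta}\leq I^{-2\beta}$, valid for all $i>I$ because $\beta>0$, yields
\[
\sum_{i>I}\bigl(\beta_i(\x)-\beta_i(\x'_j)\bigr)^2 \leq I^{-2\beta}\sum_{i>I} i^{2\beta}\bigl(\beta_i(\x)-\beta_i(\x'_j)\bigr)^2 \leq I^{-2\beta}\bigl(c_\x+c_{\x'_j}\bigr)^2.
\]
Integrating this bound over $(\x,x'_j)$ and using $2 c_\x c_{\x'_j}\leq c_\x^2+c_{\x'_j}^2$ controls the remainder by $2 I^{-2\beta}\int\!\int (c_\x^2+c_{\x'_j}^2)\,d\x\,dx'_j$, which is of the desired order $O(I^{-2\beta})$ as soon as this constant is finite.

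The main obstacle I expect is exactly the finiteness of that constant $\int\!\int (c_\x^2+c_{\x'_j}^2)\,d\x\,dx'_j$. Assumption \ref{label:assumpSmooth} only supplies $\int c_\x^2\,d\x<\infty$; since $c_\x^2$ does not depend on $x'_j$ (and $c_{\x'_j}^2$ does not depend on $x_j$), the extra integration over the replaced coordinate is harmless only if the effective domain of each summary statistic carries finite measure, or if these integrals are read against probability densities rather than Lebesgue measure. I would therefore make this integrability hypothesis explicit before concluding. The remaining ingredients---membership of each $f(\theta\mid\x)$ in $L^2(d\theta)$ needed for Parseval (guaranteed by the Sobolev assumption), the interchange of summation and integration, and the pointwise Sobolev tail bound---are routine given Lemma \ref{lemma::sobolev}.
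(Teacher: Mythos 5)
Your proposal is correct and follows essentially the same route as the paper's own proof: Parseval's identity applied to the expansion of $f(\theta\mid\x)-f(\theta\mid\x'_j)$ gives $r_j=\sum_{i\geq 1}r_{i,j}$, and the Sobolev-type bound of Lemma \ref{lemma::sobolev} yields the tail estimate $\sum_{i>I}r_{i,j}\leq I^{-2\beta}\int\!\int\bigl(c_\x+c_{\x'_j}\bigr)^2\,d\x\,dx'_j$, exactly as in the paper (which merely performs the tail bound before, rather than after, the Parseval step). The integrability caveat you raise is genuine but counts in your favor: the paper's proof silently absorbs $\int\!\int\bigl(c_\x^2+c_{\x'_j}^2+2\sqrt{c_\x^2c_{\x'_j}^2}\bigr)\,d\x\,dx'_j$ into a finite constant $K$, even though Assumption \ref{label:assumpSmooth} only guarantees $\int c_\x^2\,d\x<\infty$, so making that finiteness hypothesis explicit is, if anything, more careful than the original.
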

\begin{proof} 
Because $\beta_i(\x)-\beta_i(\x'_j)$ are the expansion coefficients of 
$f(\theta|\x)-f(\theta|\x'_j)$ on the basis $(\phi_i)_i$, 
it follows from Lemma \ref{lemma::sobolev} (see appendix) that
\begin{align*}
\sum_{i \geq I} I^{2\beta}\left(\beta_i(\x)-\beta_i(\x_j')\right)^2 \leq \sum_{i \geq I} i^{2\beta}\left(\beta_i(\x)-\beta_i(\x_j')\right)^2 \leq  c_\x^2+c_{\x'_j}^2+2\sqrt{c_\x^2c_{\x'_j}^2}.
\end{align*}
Hence,
\begin{align}
\label{eq::bias}
\sum_{i \geq I} r_{i,j}=\sum_{i \geq I} \int \! \int \left(\beta_i(\x)-\beta_i(\x_j')\right)^2 d\x dx_j' \leq \frac{ K}{I^{2\beta}}
= O(I^{-2 \beta}).
\end{align}
Because $f(\theta|\x)-f(\theta|\x'_j) =\sum_{i \geq 0} (\beta_i(\x)-\beta_i(\x'_j))\phi_i(\theta)$
and the basis $(\phi_i)_i$ is orthonormal, we have that
\begin{align} 
\label{eq::expansion}
r_j= \int \! \int \sum_{i \geq 0} (\beta_i(\x)-\beta_i(\x'_j))^2  d\x dx_j' 
= \sum_{i \geq 0} r_{i,j}.
\end{align}
The final result follows from putting Equations \ref{eq::bias}
 and \ref{eq::expansion} together.
 \end{proof}

\section{Mean of a Gaussian with unknown precision}

In this section,
we repeat the experiments of Section 3.1 of the paper, but in the case
		$X_1,\ldots,X_{20}|(\mu,\tau) \overset{iid}{\sim} \mbox{N}(\mu,1/\tau)$, $(\mu,\tau) \sim \mbox{Normal-Gamma}(\mu_0,\nu_0,\alpha_0,\beta_0)$. We set $\mu_0=0$, $\alpha_0=2$, $\beta_0=50$ and repeat the experiments for $\nu_0$ in an equally spaced grid with ten  values between 0.001 and 1.

		    \begin{figure}[H]
		    	\centering
		    	\subfloat[]{  \includegraphics[page=1,scale=0.29]{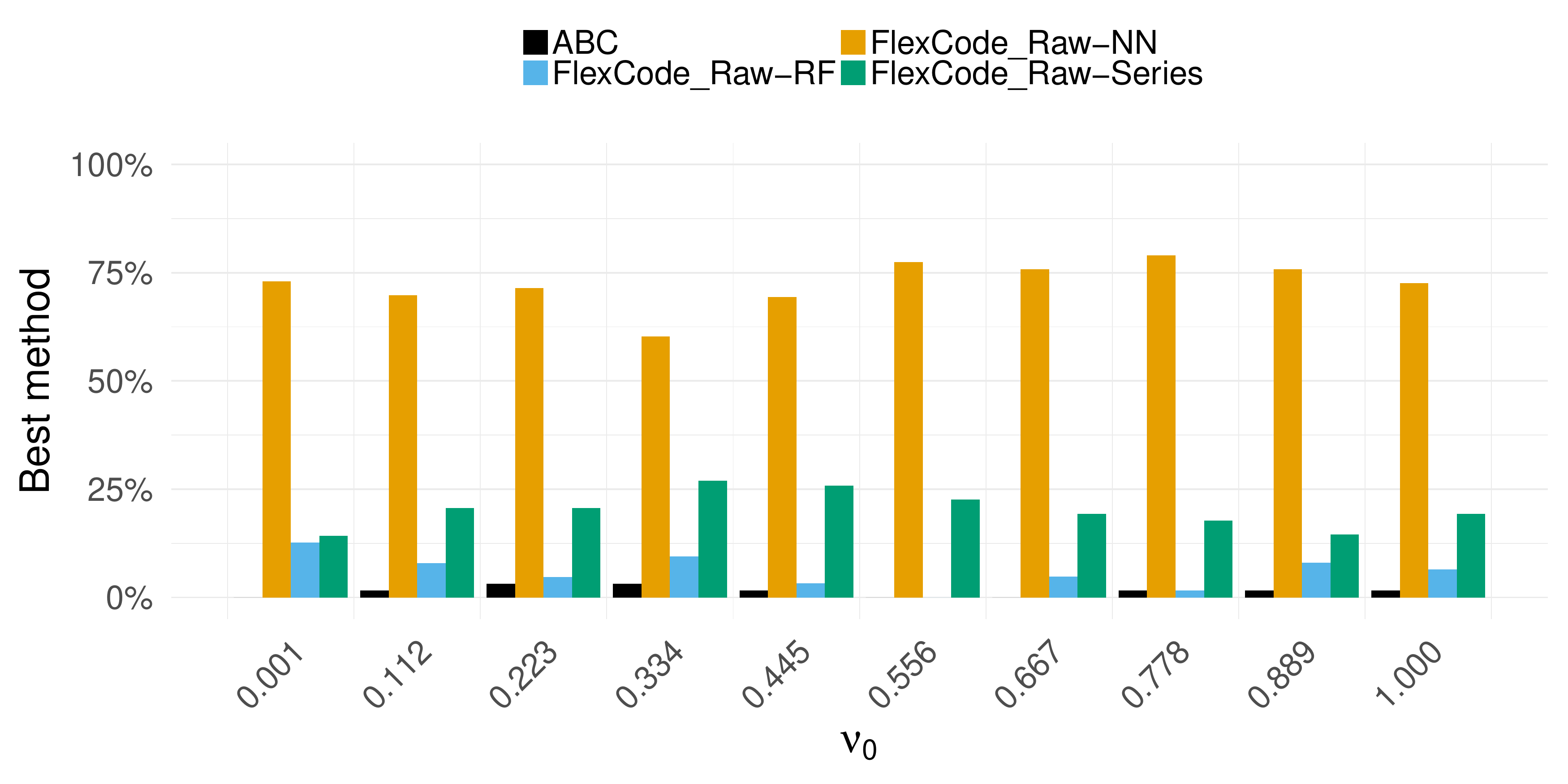}} 
		    	\subfloat[]{  \includegraphics[page=1,scale=0.29]{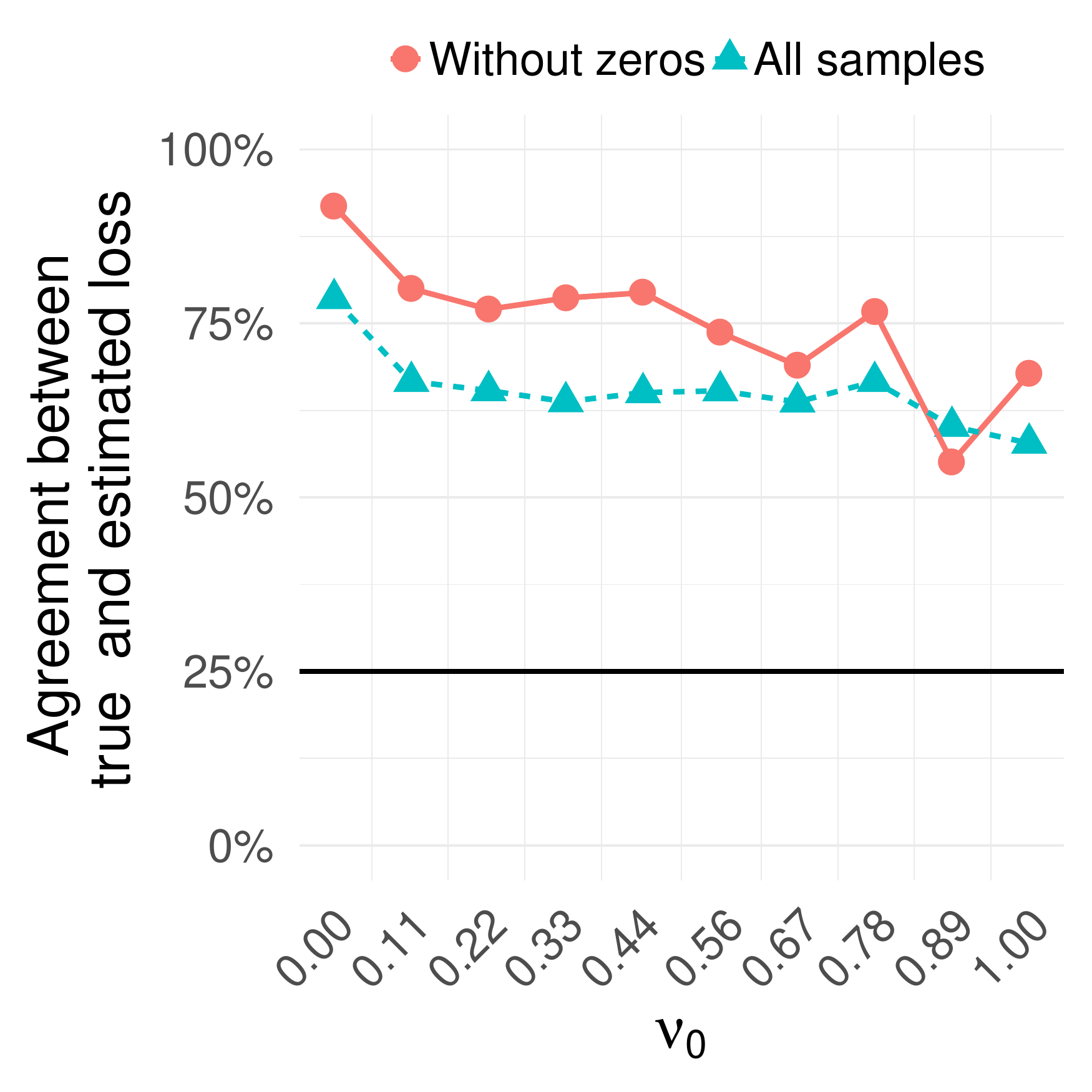}}  \\
		    	\subfloat[]{  \includegraphics[page=2,scale=0.29]{figures/gauss_unknownVar_reduced.pdf}} 
		    	\subfloat[]{  \includegraphics[page=3,scale=0.29]{figures/gauss_unknownVar_scatter_reduced.pdf}}  
		    	\vspace{-3mm}
		    	\caption{\footnotesize  CDE and method selection results for scenario 3 (mean of a Gaussian with unknown precision).  {\em Left:}	 Panels (a) and (c) show that the NN version of FlexCode yield better estimates of the posterior density $f(\theta|\x_o)$ than the competing methods. {\em Right:} Panels (b) and (d) indicate that one by estimating the surrogate loss function can tell from the data which method is better for the problem at hand. The horizontal line in panel (b) represents the behavior of a random selection.}
		    \end{figure}

		      \begin{figure}[H]
		      	\centering
		      	\subfloat[]{  \includegraphics[page=2,scale=0.29]{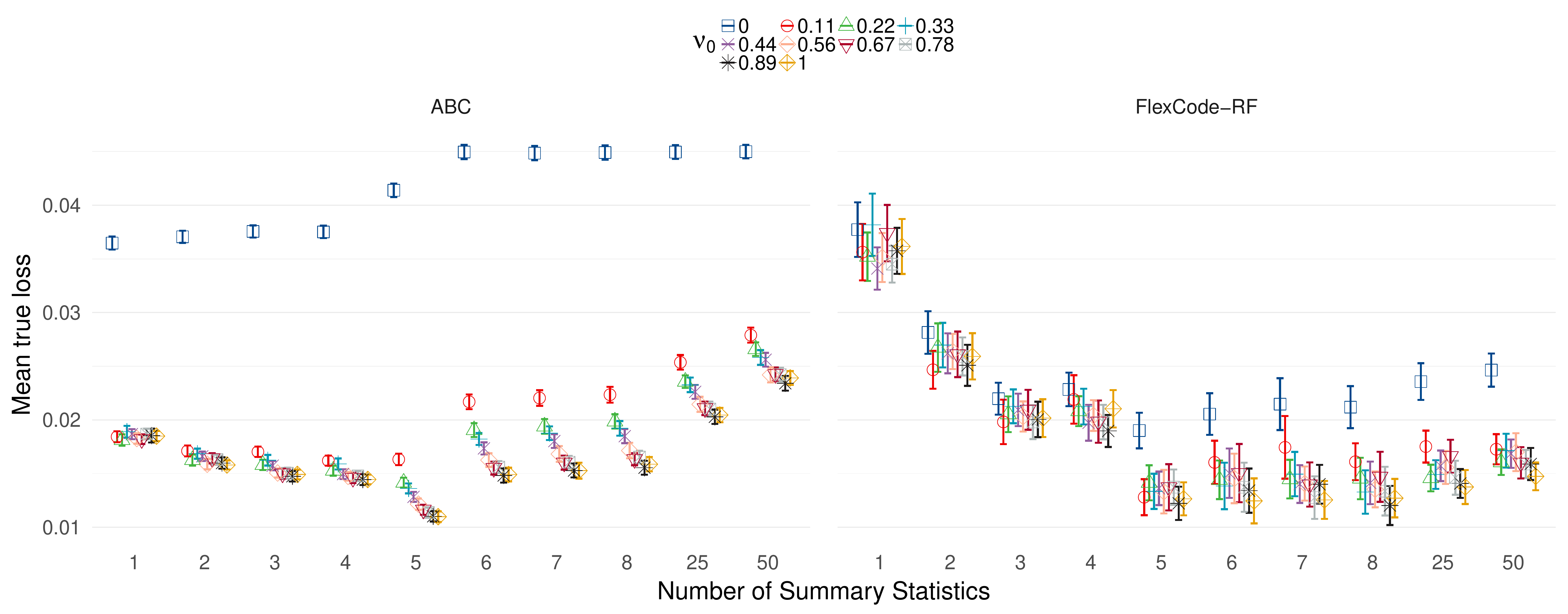}} \\
		      	\subfloat[]{  \includegraphics[page=1,scale=0.23]{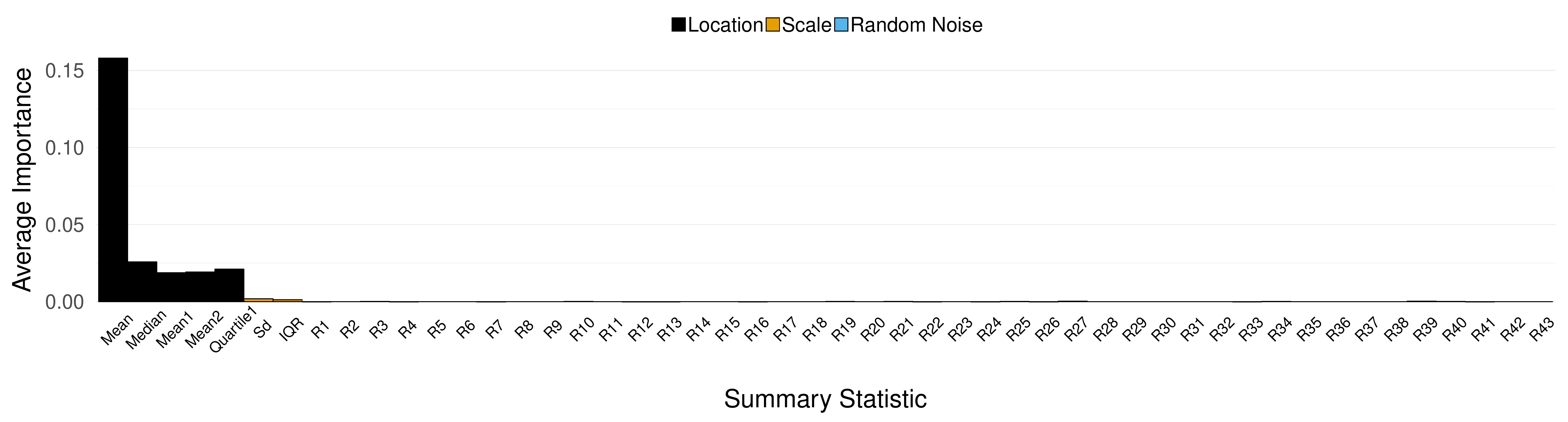}}  
		      	\vspace{-3mm}
		      	\caption{\footnotesize      Summary statistic selection for scenario 3 (mean of a Gaussian with unknown precision).  
		      	  	Panel (a) shows the performance of ABC and FlexCode-RF using different sets of summary statistics. The ABC estimates of the posteriors rapidly deteriorate when adding other statistics than the location statistics 1-5 ({\em top left}), whereas nuisance statistics do not decrease the performance of FlexCode-RF significantly ({\em top right}). Panel (b), furthermore, shows that FlexCode-RF identifies the location statistics (entries 1-5) as key variables and assigns them a high average importance score.} 
		      	\label{fig::toyExampleSummary2}
		      \end{figure}

\section{Application I: Estimating a Galaxy's Dark Matter Density Profile}
      \label{sec::sculptor-data}
      
      Next we consider more complex simulations.  
      The $\Lambda$CDM  (Lambda cold dark matter) model is frequently referred to as the standard model of Big Bang cosmology \citep{liddle2015introduction}; it is the simplest model that contains assumptions consistent with observational and theoretical knowledge of the Universe.
            
      The $\Lambda$CDM model predicts that the dark
      matter profile of a galaxy in the absence of baryonic effects can be parameterized by the
      Navarro-French-White (NFW) model \citep{navarro1996structure}. Given
      an observed galaxy, such as the Sculptor dwarf spheroidal galaxy, we
      wish to constrain the parameters of the NFW model. To begin we will only
      consider a single parameter, the critical energy $E_{c}$ (\citealt{strigari2017dynamical}, Equation 15), and set all
      other parameters at commonly accepted estimates; see Section      \ref{sec::detailsNFW} for details.
      
The observed data $\x_0$ are velocities and coordinates of 200 stars in a galaxy, here simulated so as to follow the NFW model.\footnote{The simulations are written by Mao-Shen (Terrence) Liu and rely on an MCMC sampling scheme; the details are outlined in~\citealt{LiuWalker2018}.}
      To perform ABC we define
      the distance function as the $\ell^2$ norm between bivariate
      kernel density estimates of the joint distribution of the
      velocity and distance from the center. The same distance
      function will be used by FlexCode-NN and FlexCode-Series.
      Because the data are functional we also implement a third version of ABC-CDE based on
      FlexCode-Functional, where the coefficients in FlexCode are
      estimated via functional kernel regression
      \citep{ferraty2006nonparametric}.
      
      To assess the performance of the CDE methods we generate 1000
      test observations each with an ABC sample of 1000 accepted
      observations with an acceptance rate of 0.1. 
      We use the prior
      $E_{c} \sim U(0.01, 1.0)$.

      \begin{figure}[H]
        \centering
         \subfloat{\includegraphics[scale=0.5]{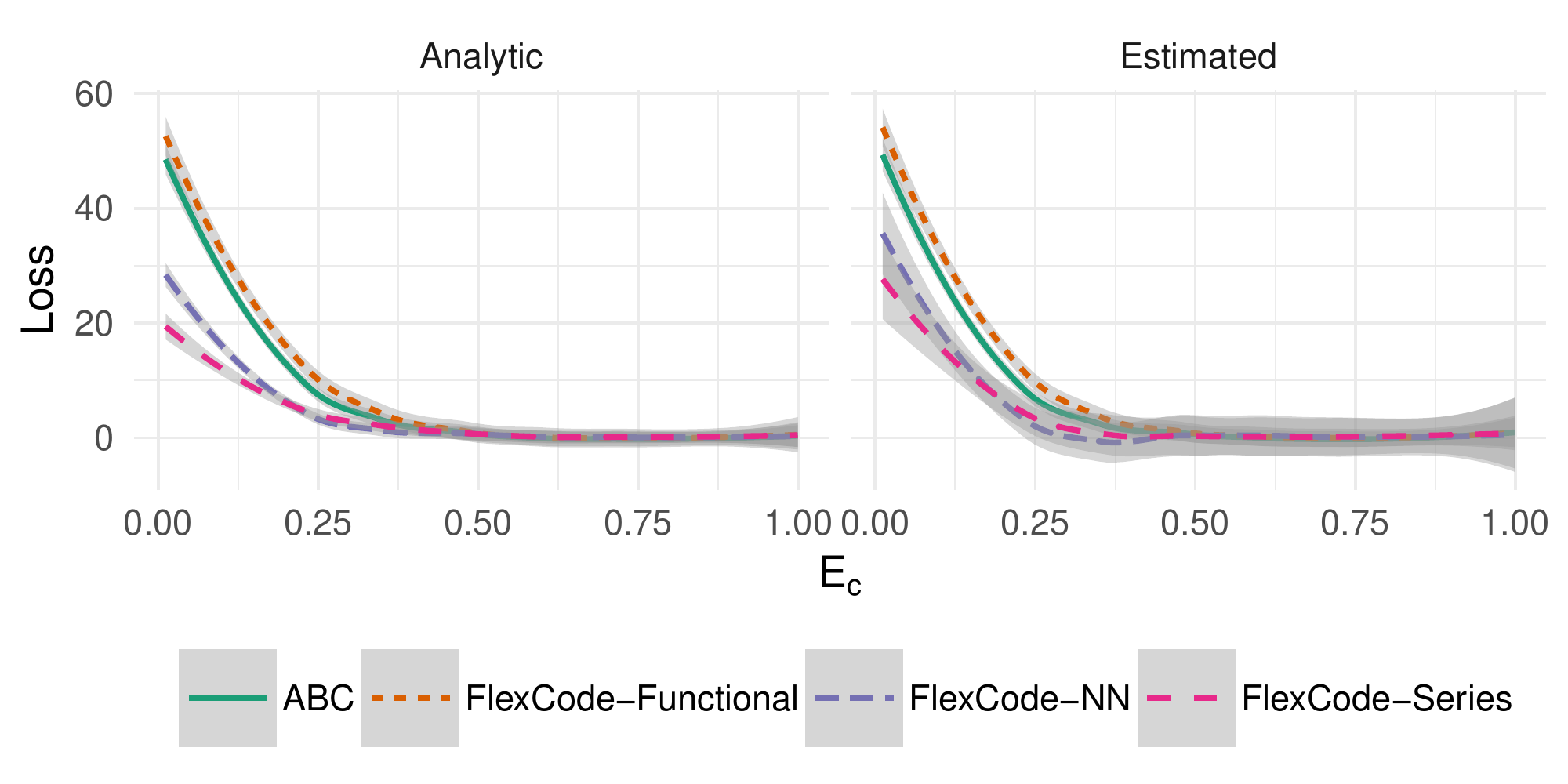}}
         \vspace{-2mm}
        \caption{  \footnotesize {\em Left:} True loss for each method in different parameter regions. {\em Right:} The estimated surrogate loss (here shifted with a constant 
        $\int \! f(\theta|\x_o)^2 d\theta$ for easier comparison) can be used to identify when our methods improve upon the ABC estimates.} 
        \label{fig::sculptor-losses}
      \end{figure}
\vspace{-6mm}
  
        Figure
        \ref{fig::sculptor-losses}, left, displays the true loss for
        each method. The plot indicates that, for at least some
        realizations with low true $E_{c}$, the estimates from the
        FlexCode estimators lead to better performance than ABC. Of most
        interest is that we reach similar conclusions with the estimated
        surrogate loss; see right plot. Thus the surrogate loss serves
        as a reasonable proxy for the true loss which in practical
        applications will be unavailable. 
        
\subsection{Additional details}        
\label{sec::detailsNFW}        
Under the NFW model the joint likelihood for the specific angular
momentum $J$ and specific energy $E$ factorizes independently

\begin{equation}
  \label{eq:NFW-likelihood}
  f(E, J) \propto g(J)h(E)
\end{equation}

with

\begin{equation}
  \label{eq:J-likelihood}
  g(J) = \begin{cases}
    [1 + (J / J_{\beta})^{-b}]^{-1} & b \le 0 \\
    1 + (J / J_{\beta})^{b} & b > 0
  \end{cases}
\end{equation}

and

\begin{equation}
  \label{eq:E-likelihood}
  h(E) = \begin{cases}
    E^{\alpha}(E^{q} + E_{c}^{q})^{d/q}(\Phi_{lim} - E)^{e} & E < \Phi_{lim} \\
    0 & E \ge \Phi_{lim}
  \end{cases}
\end{equation}

We can relate $E$ and $J$ to the observed values of position $r$ and
velocity $v$  as follows

\begin{eqnarray*}
  E &=& \frac{1}{2} v^{2} + \Phi_{s}(1 - \frac{\log(1 + \frac{r}{r_{s}})}{\frac{r}{r_{s}}}) \\
  J &=& vr\sin(\theta)
\end{eqnarray*}

We set the following constants at commonly accepted values
  in Table \ref{tab::values}
and focus only on estimating $E_{c}$.

\begin{table}
\centering
\caption{Parameter values used for the simulations
of the Galaxy's Dark Matter Density Profile Model as reported by \cite{strigari2017dynamical}.}
\begin{tabular}{|l|l|}
  \hline
  Parameter & Value \\
  \hline
  $\alpha$ & 2.0 \\
  d & -5.3 \\
  e & 2.5 \\
  $v_{max}$ & 21 \\
  $r_{max}$ & 1.5 \\
  $\phi_{s}$ & $(v_{max} / 0.465)^{2}$ \\
  $r_{s}$ & $r_{max} / 2.16$ \\
  $r_{lim}$ & 1.5 \\
  $\phi_{lim}$ & $\phi_{s}(1 - \frac{\log(1 + \frac{r_{lim}}{r_{s}})}{\frac{r_{lim}}{r_{s}}}$ \\
  b & -9.0 \\
  q & 6.9 \\
  $J_{\beta}$ & 0.086 \\
  \hline
\end{tabular}
\label{tab::values}
\end{table}

      Figure \ref{fig::sculptor-densities} displays examples of
      estimated posterior; each plot consists of an observed sample
      generated using a different true of $E_c$.
      
      \begin{figure}[H]
        \centering
         \includegraphics[scale=0.4]{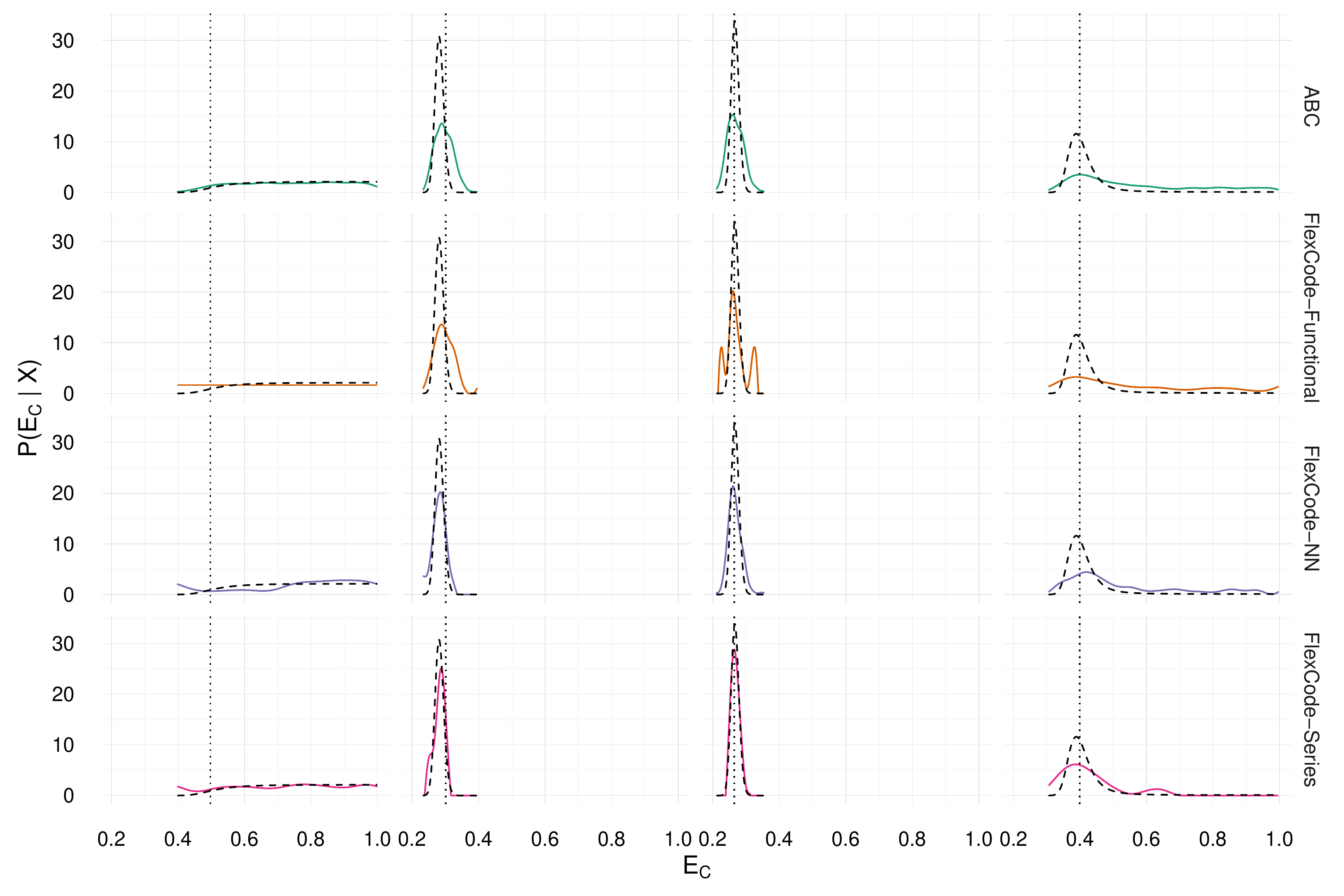}
        \caption{ \footnotesize   Sample posterior densities for simulated galaxy data; the dashed curve is the true posterior, and the vertical line indicates the true parameter value.}
        \label{fig::sculptor-densities}
      \end{figure}

\section{Fast Implementation of NN-KCDE}

NN-KCDE (or nearest-neighbors kernel CDE) is the usual kernel density estimate using only the points
closest in covariate space to the target point $\x$:
\begin{equation}
  \label{eq:kernel-nn-cde}
  \widehat{f}_{\text{nn}}(\theta \mid \x) = \frac{1}{k} \sum_{i = 1}^{k} K_{h}(\rho(\theta, \theta_{s_{i}(\x)})) ,
\end{equation}
where $s_{i}(\x)$ represents the index of the $i$th nearest neighbor to
$\x$. As mentioned in Section 4.1, NN-KCDE has a close connection with ABC in that,
for every choice of $k$ for a data set, there is a choice of $\epsilon$
for the accept-reject ABC algorithm that produces an equivalent
estimate of the posterior. With the CDE loss, we can choose $k$ to
minimize the loss and improve upon the naive pre-selected $\epsilon$
approach. For large $\epsilon$ we expect $k$ to be small to avoid bias
in the posterior estimate. As $\epsilon$ shrinks, larger values of $k$
will be selected to reduce the variance of the estimate.

To make this model selection procedure computationally feasible, we
need to be able to efficiently calculate the surrogate loss function. We examine
the two terms (Section 2.2, Equation 5) separately: The second term 
\begin{equation*}
  \sum_{i} \widehat{f}(\theta_{i} \mid \x_{i})
\end{equation*}
poses no difficulties as we simply plug in the kernel density
estimate. The first term
\begin{equation*}
  \sum_{i} \int \widehat{f}^2(\theta \mid \x_{i}) dz
\end{equation*}
is more difficult. Numerically integrating this integral is infeasible
especially as the number of validation samples increases.

Fortunately there is an analytic solution. We can express the integral
in terms of convolutions of the kernel function:
\begin{equation*}
  \int \widehat{f}^2(\theta \mid \x) dz = \frac{1}{k^{2}h} \sum_{i \in N_{k}(\x)}\sum_{j \in N_{k}(\x)} \int K(t) K\left(t - \frac{d_{i,j}}{h}\right) dt
\end{equation*}
with $d_{i,j}$ representing the pairwise distance between points
$\x_{i}$ and $\x_{j}$. In the Gaussian case we have the analytic solution
\begin{equation*}
  \int K(t) K(t - d) dt = \frac{1}{2\sqrt{\pi}} \exp\left(\frac{-d^{2}}{4}\right)
\end{equation*}

For other kernels we can work out the analytic solution as well, or,
if that proves intractable, we can approximate the function using
numerical integration.

For both terms we have nested calculations, in that we can reuse
computations for $k = k_{1} < k_{2}$ when calculating the estimated
loss for $k = k_{2}$. In this way, there is little additional
computational time in considering all settings for $k$ as opposed to
trying only a large value of $k$.

An implementation of this method is available at
\url{https://github.com/tpospisi/NNKCDE}.

\section{Two-Dimensional Normal.}
\label{sec:org0876fa9}

We can extend the normal example of Section 4 of the paper to
multiple dimensions with similar results. Given a two-dimensional multivariate normal
with fixed covariance \(\Sigma_{X}= I_{2}\), we put a normal conjugate prior on the mean
\(\mu \sim N(\mu_{0}= 0, \Sigma_{0} = I_{2})\). This results in the true posterior
\begin{equation*}
  \mu \mid X \sim N(\mu_{n}, \Sigma_{n}),
\end{equation*}
where
\begin{eqnarray*}
  \mu_{n} &=& \Sigma_{0}(\Sigma_{0} + \frac{1}{n}\Sigma_{X})^{-1}\bar{x} + \frac{1}{n}\Sigma_{X}(\Sigma_{0} + \frac{1}{n}\Sigma_{X})^{-1}\mu_{0} \\
  \Sigma_{n} &=& \frac{1}{n}\Sigma_{0}(\Sigma_{0} + \frac{1}{n}\Sigma_{X})^{-1} \Sigma_{X}
\end{eqnarray*}
As before we use the sufficient statistic of the sample mean as
our statistic and the Euclidean norm as the distance function.

Figure \ref{fig:org2bc0769} shows density estimates for ABC and NN-KCDE for different values of the acceptance rate. At
higher acceptance rates the ABC density estimate performs poorly,
reflecting the prior distribution rather than the posterior.
Eventually, with a suitably low acceptance rate the ABC density
approaches the posterior. Once again NN-KCDE achieves similar performance as standard ABC with 100000 simulations (at acceptance ratio 0.01) but using only 1000 ABC realizations (at acceptance ratio 1).
\begin{figure}[htbp]
\centering
\includegraphics[width=0.7\textwidth]{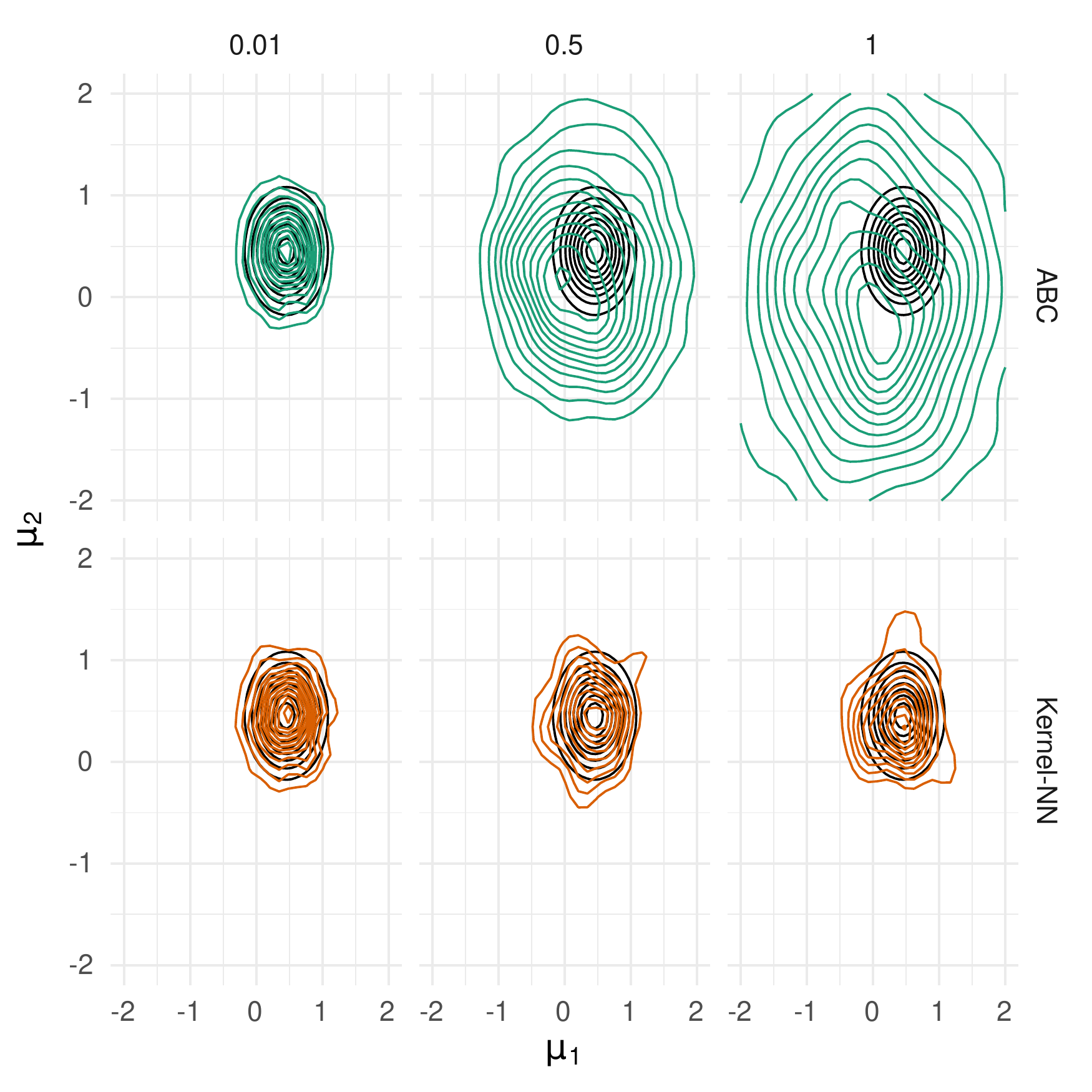}
\caption{\footnotesize Contours for density estimates of two-dimensional normal posterior; black lines are the contours of the true posterior.}
\label{fig:org2bc0769}
\end{figure}

\section{Comparison with ABC Post-Processing Methods}
\label{sec:post-processing}

We can view the connection between ABC and NN-KCDE in yet another way:
NN-KCDE (tuned with the surrogate loss) provides a post-processing step for improving the density
estimate obtained from ABC. There are several other post-processing
procedures in the ABC literature, most notably the regression
adjustment methods of \cite{beaumont2002approximate} and
\cite{blum2010non}; see \citet{li2017} for asymptotic results. These two methods use a regression adjustment to
correct for the impact of the conditional distribution changing with $\x$,
modeling the response as
$$ \theta_{i} = m(\x_{i}) + \sigma(\x_{i}) \times \epsilon_{i}, $$
where $m(\x_{i})$ is the conditional expectation, and $\sigma(\x_{i})$ is
the conditional standard deviation. Assuming this is the true model,
the sample points can be transformed to
\begin{equation}
\label{eq::regression-adjust}  \widetilde{\theta}_{i} = m(\x_{o}) + (\theta_{i} - m(\x_{i})) \times \frac{\sigma(\x_{o})}{\sigma(\x_{i})},
\end{equation}
which scales the sample to have the same mean and standard deviation
as the fitted distribution around $\x_{o}$.

 A visual representation of this procedure can be seen in Figure
\ref{fig::unimodal-transformation}. The data are drawn from 
the conjugate normal posterior described in Section 4.1 of the paper.
In the joint distribution we see a clear linear
relationship between the summary statistic $\bar{\x}$ and parameter
$\theta$. 
The ABC joint sample we obtain from restricting our data to a neighborhoood around $\bar{\x}_{\text{obs}}$ is skewed, as seen in
 the ABC kernel density estimate.
 With access to the true $m(\x)$ and
$\sigma(\x)$, however, one could transform the ABC joint sample with Equation
\ref{eq::regression-adjust} to remove the trend (see regression-adjusted joint distribution) and achieve a better
fit (see regression-adjusted kernel density estimate). \cite{beaumont2002approximate} and \cite{blum2010non} use
local-linear and neural-net regression respectively to estimate
$\widehat{m}(\x)$ and $\widehat{\sigma}(\x)$ with similar effect.

In Figure
\ref{fig::unimodal-losses}, we calculate the ABC
density loss and CDE loss for the unimodal example, and see that regression-adjusted methods achieve similar
performance to ABC-CDE (FlexCode-NN and NN-KCDE) here. We use the abc
package \citep{abcPackage} 
 to fit both the methods of
\cite{beaumont2002approximate} and \cite{blum2010non}.

\begin{figure}[H]
  \centering
 \includegraphics[width=0.6\textwidth]{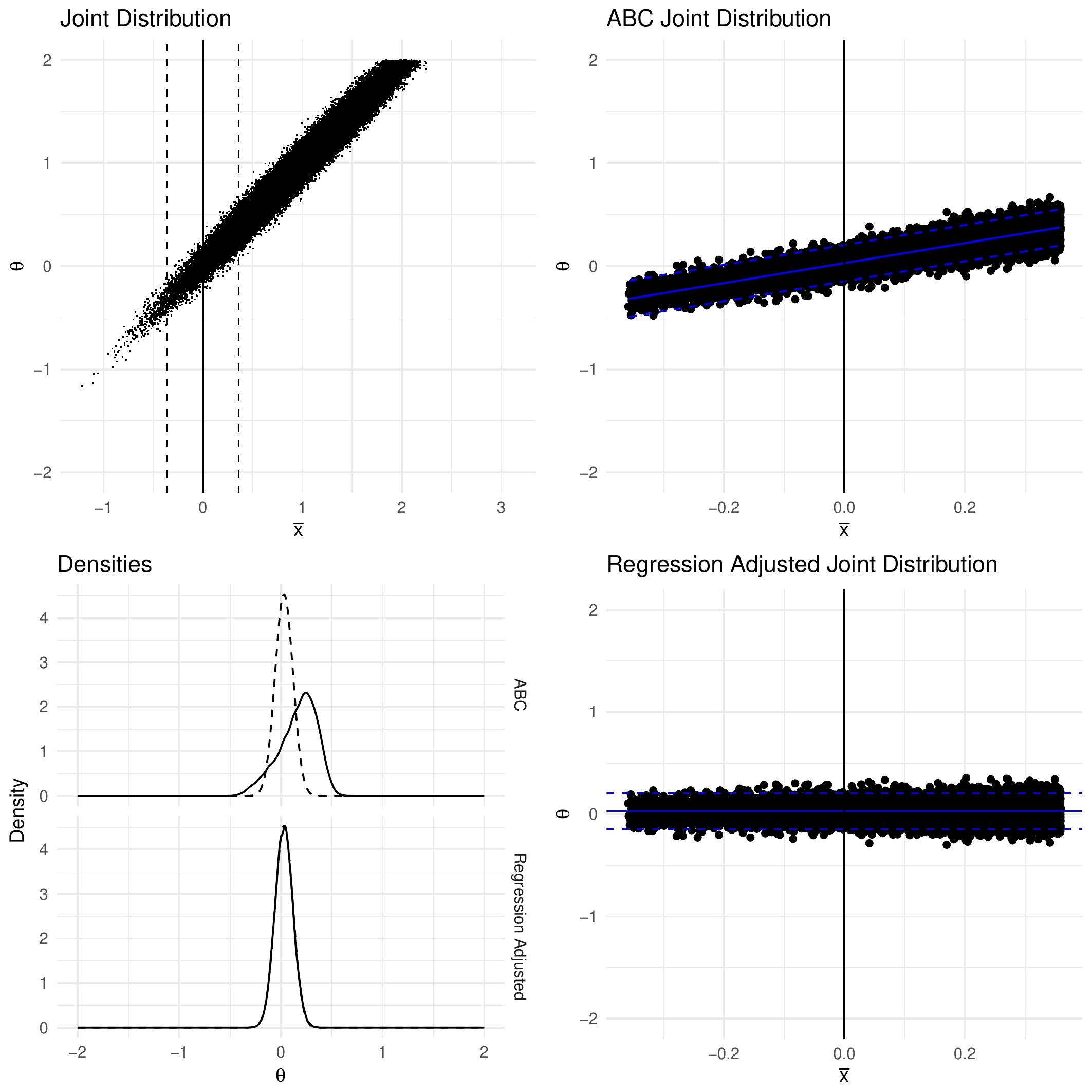}
  \caption{\footnotesize Regression adjustment in ABC for unimodal example assuming we know the true $m(\x)$ and $\sigma(\x)$. In the Joint Distribution plots the solid line represents the conditional mean and the dotted lines are one conditional standard deviation from the mean. See text for details. }
  \label{fig::unimodal-transformation}
\end{figure}


\begin{figure}[H]
  \centering
 \includegraphics[width=0.5\textwidth]{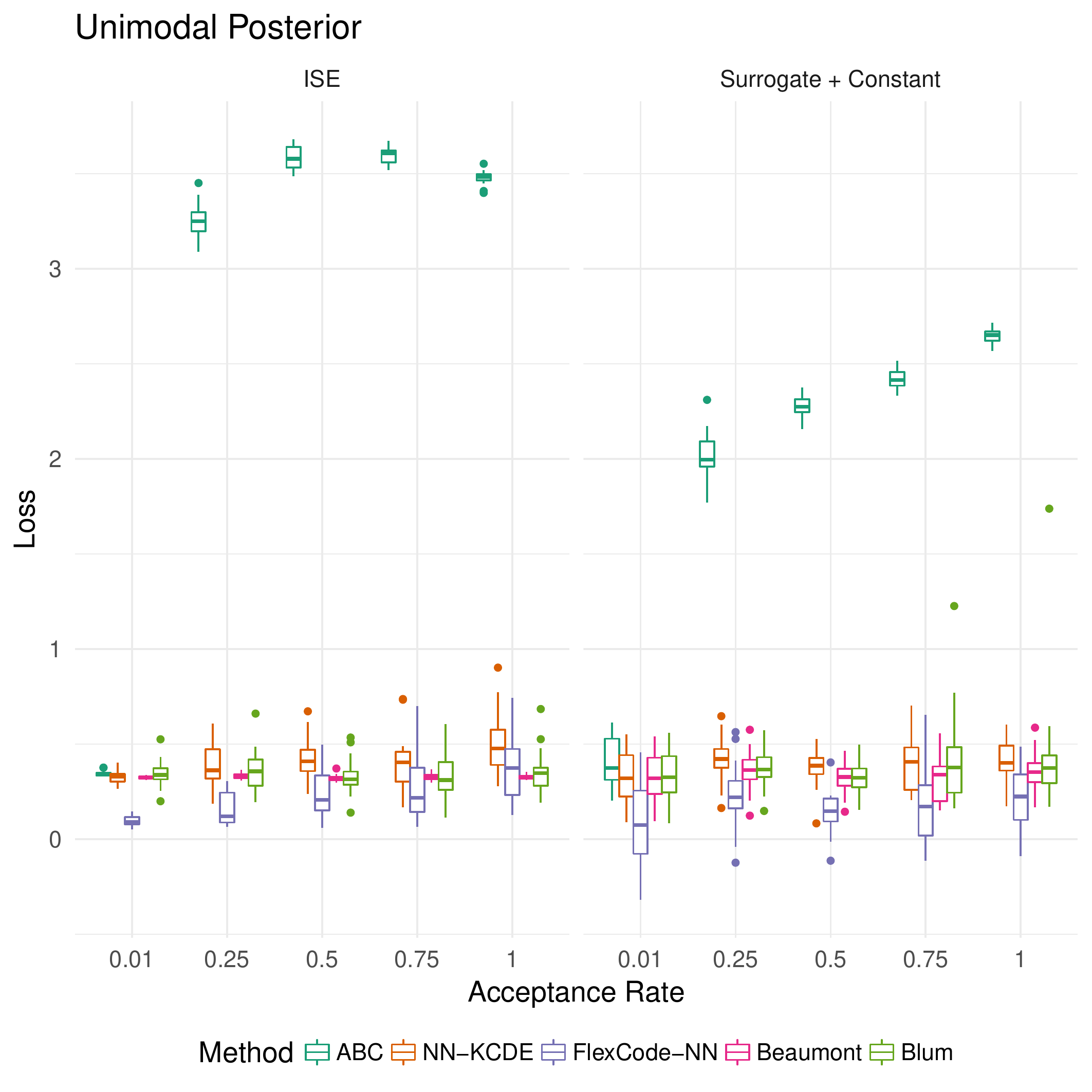}
  \caption{\footnotesize True integrated squared error loss ({\em left}) and estimated surrogate loss  ({\em right})  for unimodal example using ABC  sample of varying acceptance rates. Both regression adjustment and ABC-CDE methods improve upon standard ABC.}
  \label{fig::unimodal-losses}
\end{figure}

However, the regression-based methods rely upon the assumption that
the distributions of $\epsilon_{i}$ at different $\x$ are similar up to a translation and scaling. To illustrate
where this assumption can break down consider the case of a multimodal
posterior. Given the mixture prior
$\mu \sim \sum_{i} w_{i}  \operatorname{Normal}(\mu_{i}, \sigma_{i}^2)$
and likelihood $X_{i} \sim \operatorname{Normal}(\mu, \sigma_{x}^2)$, we  obtain the conjugate mixture model posterior
\begin{equation*}
  \mu \mid X \sim \sum_{i} w_{i}^{*} \operatorname{Normal}(\mu_{i}^{*}, \sigma_{i}^{*2}),
\end{equation*}
where $\mu_{i}^{*}$ and $\sigma_{i}^{*}$ are the parameters of the
conjugate posterior for that particular normal prior. The mixing weights $w_{i}^{*} \propto w_{i}P_{i}(X)$ where $P_{i}(X)$ is the marginal likelihood under the $i$-th mixture component.

We follow the same procedure as Figure
\ref{fig::unimodal-transformation} for this setting resulting in
Figure \ref{fig::multimodal-transformation}. Here the regression is
misleading; the error distribution at $\bar{\x}_{\text{obs}}$ is
bimodal whereas the error distribution away from $\bar{\x}_{\text{obs}}$
is unimodal. When the regression adjustment is made, the bimodality is
lost and a single peak is fit. When we calculate the losses for this
simulation (Figure~\ref{fig::bimodal-losses}), we see that the regression methods perform worse than ABC
whereas the ABC-CDE methods still improve upon ABC. 

\begin{figure}[H]
  \centering
 \includegraphics[width=0.56\textwidth]{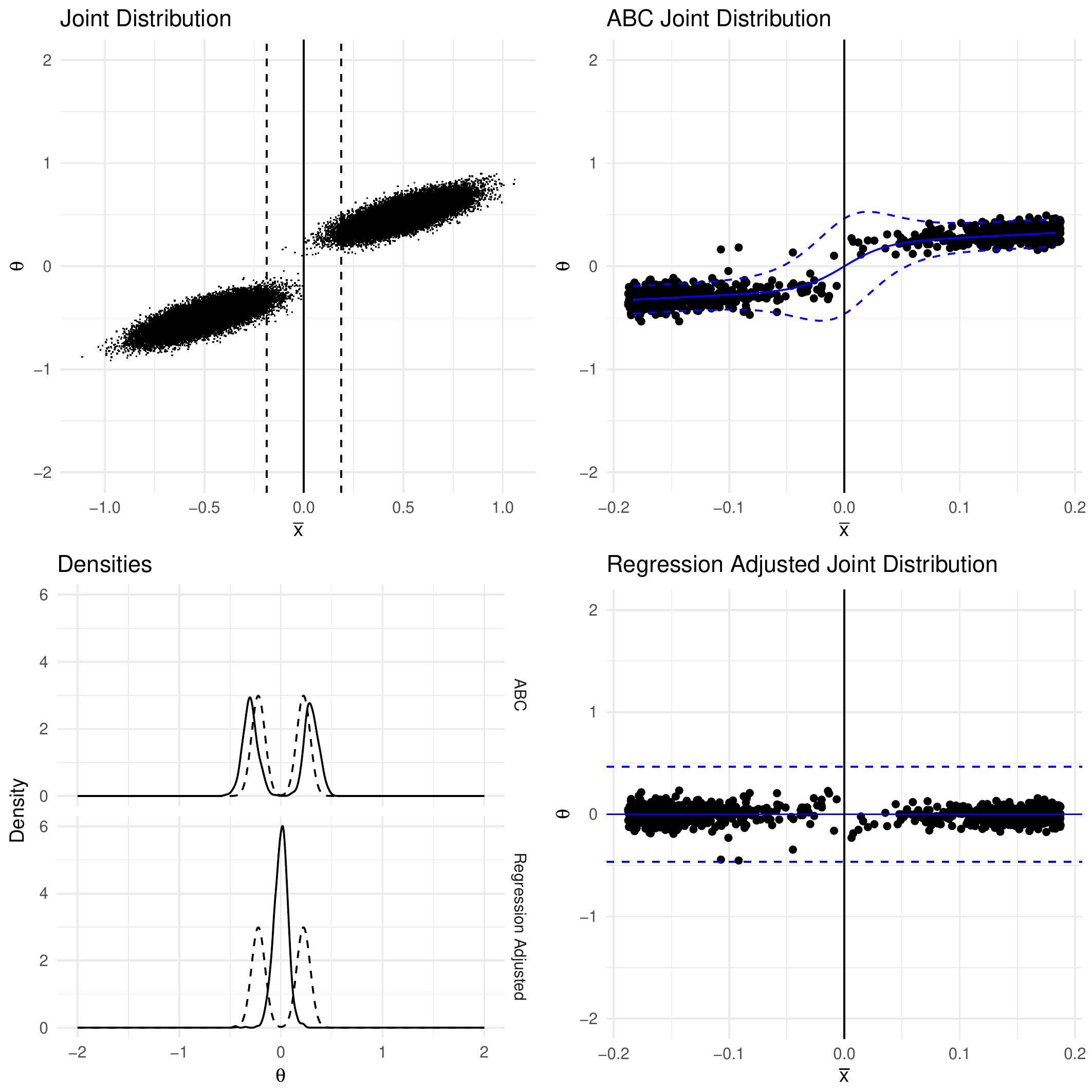}
    \caption{\footnotesize For the multimodal example, a regression of the ABC joint sample is misleading, as we cannot simply adjust for the change in the distribution of $\theta | \x$ around $\x_{\text{obs}}$ by shifting and rescaling the sample by the conditional mean $m(\x)$ and the conditional standard deviation $\sigma(\x)$, respectively.}
  \label{fig::multimodal-transformation}
\end{figure}

\begin{figure}[H]
  \centering
  \includegraphics[width=0.5\textwidth]{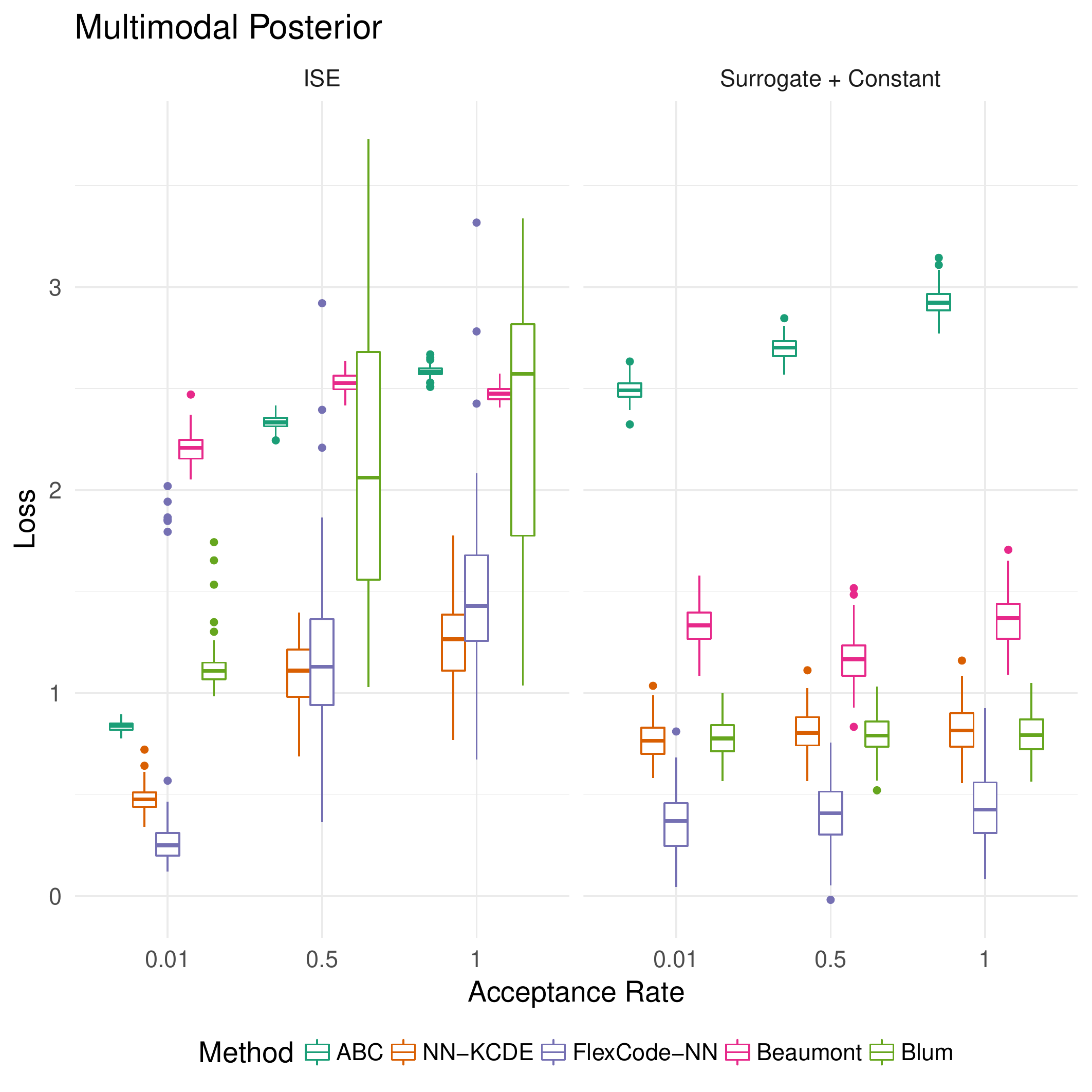}
  \caption{\footnotesize  True integrated squared error loss ({\em left}) and estimated surrogate loss  ({\em right})  for multimodal example using ABC  sample of varying acceptance rates. ABC-CDE methods (e.g, NN-KCDE and FlexCode-NN) improve upon standard ABC, whereas regression adjustment methods (e.g., Beaumont and Blum) can lead to worse results.}
  \label{fig::bimodal-losses}
\end{figure}

\begin{figure}[H]
  \centering
  \includegraphics[scale=0.4]{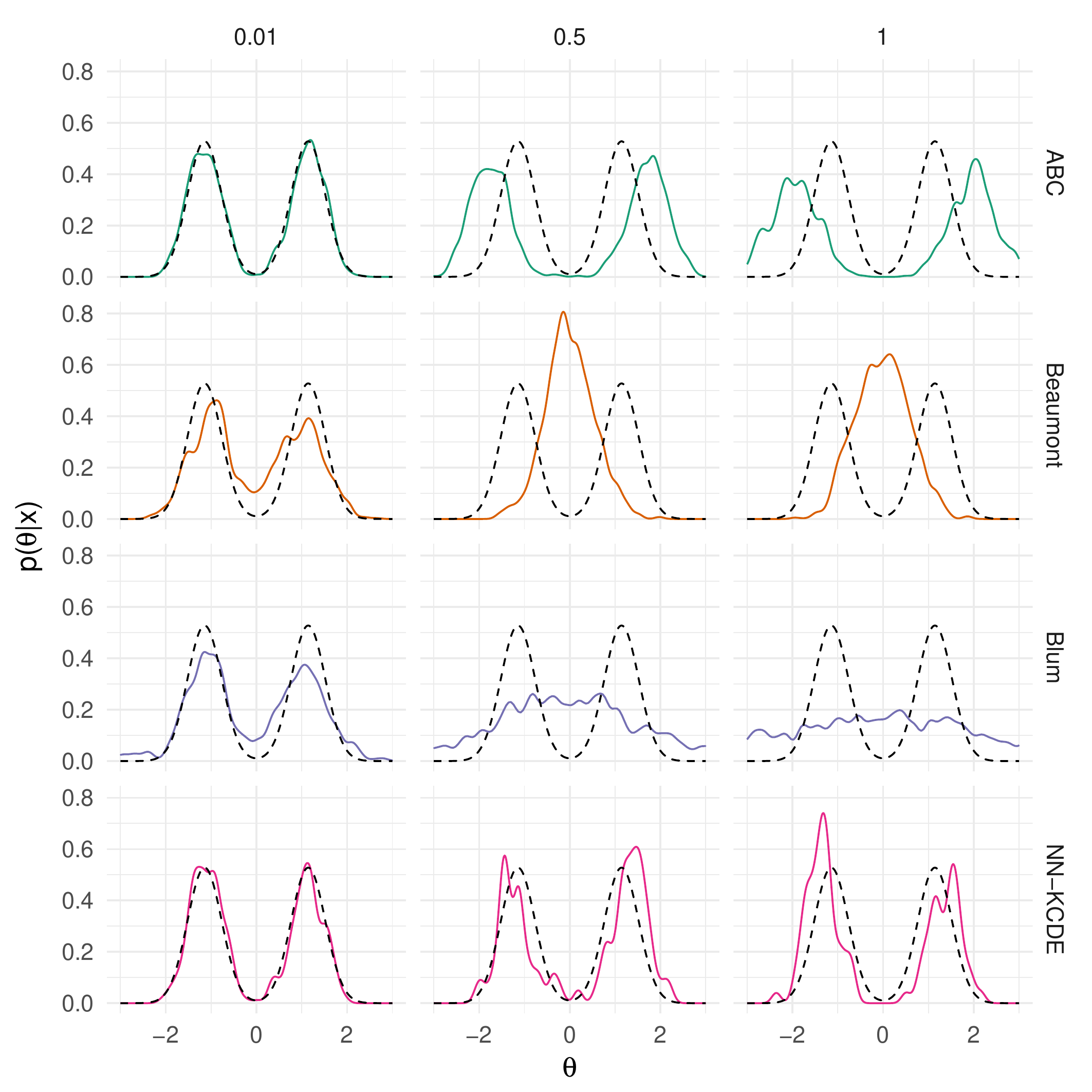}
  \caption{ \footnotesize   The regression-adjustment methods adjust for the change in the distribution of $\theta \mid \x$ around $\x_{\text{obs}}$ by shifting and rescaling the sample by the conditional mean $m(\x)$ and the conditional variance $\sigma(\x)$, respectively. However, the change in the distribution from unimodal to multimodal cannot be expressed by shifting or rescaling which results in misleading posteriors for the regression-adjustment methods for larger values of the ABC tolerance while NN-KCDE performs well.}
  \label{fig::multimodal-densities}
\end{figure}

\section{Application II: Cosmological Parameter Inference via Weak Lensing}
\label{sec:org9e43ec4}
We end by considering the problem of cosmological parameter inference via weak gravitational lensing.  
Gravitational lensing causes distortion in images of distance
galaxies; this is called cosmic shear. Because the universe has
varying matter densities, these create tidal gravitational fields which
cause light to deflect differentially. The size and direction of
distortion is directly related to the size and shape of the matter
along that line of sight. We can use shear correlation functions
to study the properties and evolution of the large scale structure
and geometry of the Universe. In particular we can constrain
parameters of the \(\Lambda CDM\) cosmological model such as the dark
matter density \(\Omega_{M}\) and matter power spectrum normalization \(\sigma_{8}\).
For further background see \cite{hoekstra2008weak}, \cite{munshi2008cosmology} and \citet{mandelbaum2017weak}.

We can perform ABC rejection sampling using the Euclidean distance
between binned shear correlation functions as our summary statistic.
We use the \texttt{lenstools} package \citep{petri2016mocking} to
generate power spectra given parameter realizations. We then use the
\texttt{GalSim} toolkit \citep{rowe2015galsim} to generate simplified
galaxy shears distributed according to a Gaussian random field
determined by $(\Omega_{M}, \sigma_{8})$.

For our prior distribution we assume a uniform distribution:
\(\Omega_{M} \sim U(0.1, 0.8)\) and \(\sigma_{8} \sim U(0.5, 1.0)\).
Other parameters are fixed to \(h = 0.7\),
\(\Omega_{b} = 0.045\), \(z = 0.7\).
\label{org0eea9fd}

\begin{figure}[htbp]
\centering
	\includegraphics[width=0.5\textwidth]{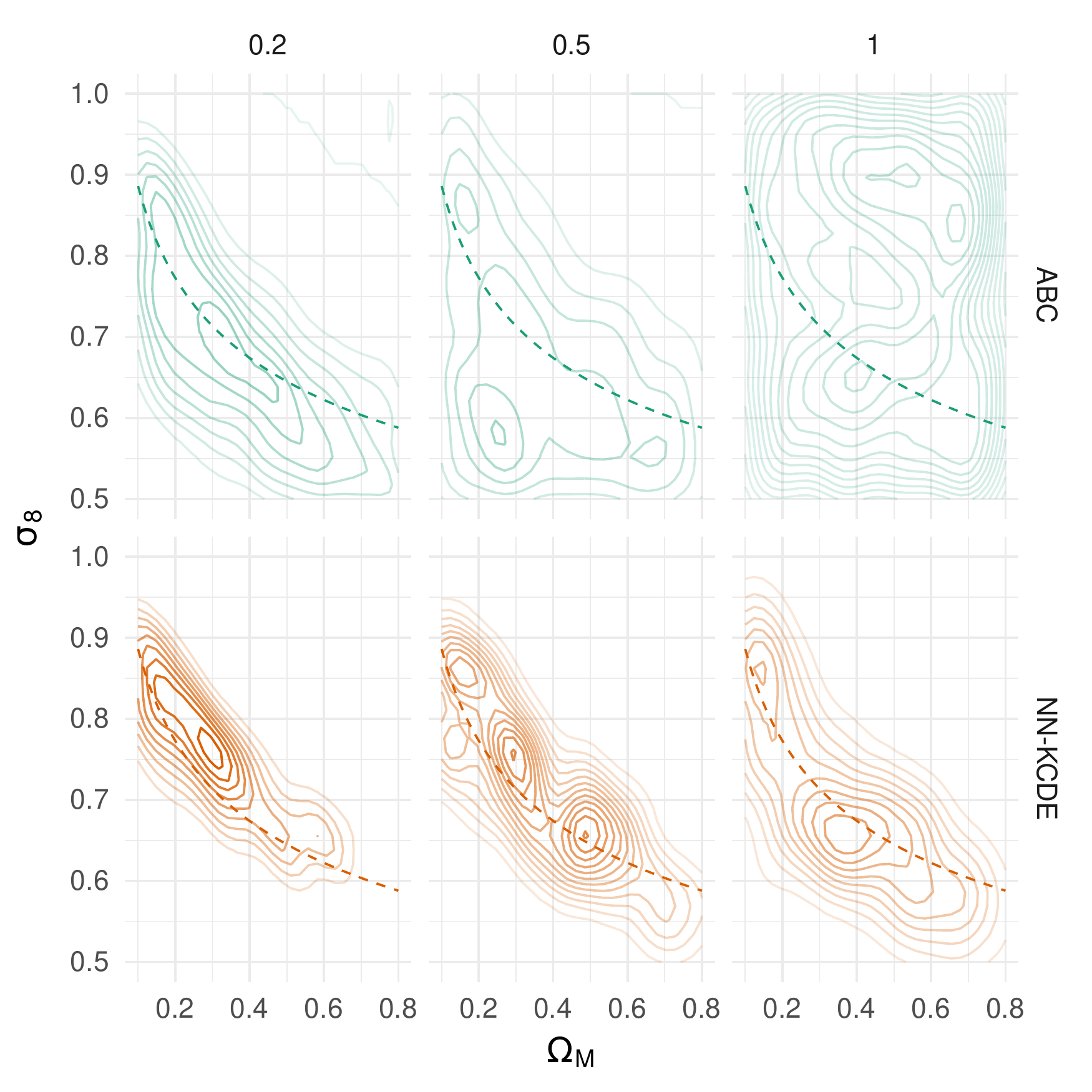}
\caption{\label{fig:org1743082}
\footnotesize Estimated posteriors of cosmological parameters for weak lensing mock data at different ABC acceptance rates (0.2, 0.5, and 1). The dashed line represents the parameter degeneracy curve on which the data are indistinguishable. With  NN-KCDE tuned with a surrogate loss, the posteriors concentrate rapidly around the degeneracy line; we even see some structure for an ABC acceptance rate of 1; that is, an ABC threshold of $\epsilon \rightarrow \infty$.}
\end{figure}

One result that is apparent from Figure~\ref{fig:org1743082} is that kernel-NN tuned with the surrogate loss quickly converges to 
the degeneracy curve \(\Omega_{M}^{\alpha}\sigma_{8}\) on which observable data are indistinguishable.  
 
 As we in the future analyze more complex simulation mechanisms and higher-dimensional data (with, for example, galaxies divided into time-space bins and measurements from different probes), the dimension of the data and the simulation time will eventually make standard ABC intractable. For example, recent analyses in cosmological analysis \citep[e.g.,][]{DES2016,KIDS2017} have employed $\sim$1000 expensive N-body simulations, which altogether can take months to run on thousands of CPUs \citep{Masanori2011,Deraps2015}. Often there are also several parameters, which results in prior distributions
that are typically not concentrated around the true value of $\theta$. Our work indicates that these are exactly the settings where
nonparametric conditional density estimators of $f(\theta|\x)$
will lead to better performance than ABC.

\bibliographystyle{Chicago}

{ \footnotesize 
\bibliography{paper}

\begin{thebibliography}{}

\bibitem[\protect\citeauthoryear{{Abbott}, {Abdalla}, {Allam}, {Amara},
  {Annis}, {Armstrong}, {Bacon}, {Banerji}, {Bauer}, {Baxter}, {Becker},
  {Benoit-L{\'e}vy}, {Bernstein}, {Bernstein}, {Bertin}, {Blazek}, {Bonnett},
  {Bridle}, {Brooks}, {Bruderer}, {Buckley-Geer}, {Burke}, {Busha}, {Capozzi},
  {Carnero Rosell}, {Carrasco Kind}, {Carretero}, {Castander}, {Chang},
  {Clampitt}, {Crocce}, {Cunha}, {D'Andrea}, {da Costa}, {Das}, {DePoy},
  {Desai}, {Diehl}, {Dietrich}, {Dodelson}, {Doel}, {Drlica-Wagner},
  {Efstathiou}, {Eifler}, {Erickson}, {Estrada}, {Evrard}, {Fausti Neto},
  {Fernandez}, {Finley}, {Flaugher}, {Fosalba}, {Friedrich}, {Frieman},
  {Gangkofner}, {Garcia-Bellido}, {Gaztanaga}, {Gerdes}, {Gruen}, {Gruendl},
  {Gutierrez}, {Hartley}, {Hirsch}, {Honscheid}, {Huff}, {Jain}, {James},
  {Jarvis}, {Kacprzak}, {Kent}, {Kirk}, {Krause}, {Kravtsov}, {Kuehn},
  {Kuropatkin}, {Kwan}, {Lahav}, {Leistedt}, {Li}, {Lima}, {Lin}, {MacCrann},
  {March}, {Marshall}, {Martini}, {McMahon}, {Melchior}, {Miller}, {Miquel},
  {Mohr}, {Neilsen}, {Nichol}, {Nicola}, {Nord}, {Ogando}, {Palmese}, {Peiris},
  {Plazas}, {Refregier}, {Roe}, {Romer}, {Roodman}, {Rowe}, {Rykoff}, {Sabiu},
  {Sadeh}, {Sako}, {Samuroff}, {Sanchez}, {S{\'a}nchez}, {Seo},
  {Sevilla-Noarbe}, {Sheldon}, {Smith}, {Soares-Santos}, {Sobreira}, {Suchyta},
  {Swanson}, {Tarle}, {Thaler}, {Thomas}, {Troxel}, {Vikram}, {Walker},
  {Wechsler}, {Weller}, {Zhang}, {Zuntz}, and {Dark Energy Survey
  Collaboration}}{{Abbott} et~al.}{2016}]{DES2016}
{Abbott}, T., F.~B. {Abdalla}, S.~{Allam}, et~al. (2016, July).
\newblock {Cosmology from cosmic shear with Dark Energy Survey Science
  Verification data}.
\newblock {\em Physical Review D\/}~{\em 94\/}(2), 022001.

\bibitem[\protect\citeauthoryear{Aeschbacher, Beaumont, and
  Futschik}{Aeschbacher et~al.}{2012}]{aeschbacher2012novel}
Aeschbacher, S., M.~A. Beaumont, and A.~Futschik (2012).
\newblock A novel approach for choosing summary statistics in approximate
  {B}ayesian c omputation.
\newblock {\em Genetics\/}~{\em 192\/}(3), 1027--1047.

\bibitem[\protect\citeauthoryear{Beaumont}{Beaumont}{2010}]{beaumont2010approximate}
Beaumont, M.~A. (2010).
\newblock Approximate bayesian computation in evolution and ecology.
\newblock {\em Annual Review of Ecology, Evolution, and Systematics\/}~{\em
  41}, 379--406.

\bibitem[\protect\citeauthoryear{Beaumont, Cornuet, Marin, and Robert}{Beaumont
  et~al.}{2009}]{beaumont2009adaptive}
Beaumont, M.~A., J.~Cornuet, J.~Marin, and C.~P. Robert (2009).
\newblock Adaptive approximate bayesian computation.
\newblock {\em Biometrika\/}, asp052.

\bibitem[\protect\citeauthoryear{Beaumont, Zhang, and Balding}{Beaumont
  et~al.}{2002}]{beaumont2002approximate}
Beaumont, M.~A., W.~Zhang, and D.~J. Balding (2002).
\newblock Approximate bayesian computation in population genetics.
\newblock {\em Genetics\/}~{\em 162\/}(4), 2025--2035.

\bibitem[\protect\citeauthoryear{Biau, C{\'e}rou, and Guyader}{Biau
  et~al.}{2015}]{biau2015new}
Biau, G., F.~C{\'e}rou, and A.~Guyader (2015).
\newblock New insights into approximate bayesian computation.
\newblock In {\em Annales de l'Institut Henri Poincar{\'e}, Probabilit{\'e}s et
  Statistiques}, Volume~51, pp.\  376--403. Institut Henri Poincar{\'e}.

\bibitem[\protect\citeauthoryear{Bickel, Ritov, and Stoker}{Bickel
  et~al.}{2006}]{bickel2006tailor}
Bickel, P.~J., Y.~Ritov, and T.~M. Stoker (2006).
\newblock Tailor-made tests for goodness of fit to semiparametric hypotheses.
\newblock {\em The Annals of Statistics\/}, 721--741.

\bibitem[\protect\citeauthoryear{Blum}{Blum}{2010}]{blum2010approximate}
Blum, M.~G. (2010).
\newblock Approximate bayesian computation: a nonparametric perspective.
\newblock {\em Journal of the American Statistical Association\/}~{\em
  105\/}(491), 1178--1187.

\bibitem[\protect\citeauthoryear{Blum and Fran{\c{c}}ois}{Blum and
  Fran{\c{c}}ois}{2010}]{blum2010non}
Blum, M. G.~B. and O.~Fran{\c{c}}ois (2010).
\newblock Non-linear regression models for approximate bayesian computation.
\newblock {\em Statistics and Computing\/}~{\em 20\/}(1), 63--73.

\bibitem[\protect\citeauthoryear{Blum, Nunes, Prangle, and Sisson}{Blum
  et~al.}{2013}]{blum2013comparative}
Blum, M. G.~B., M.~A. Nunes, D.~Prangle, and S.~A. Sisson (2013).
\newblock A comparative review of dimension reduction methods in approximate
  bayesian computation.
\newblock {\em Statistical Science\/}~{\em 28\/}(2), 189--208.

\bibitem[\protect\citeauthoryear{Breiman}{Breiman}{2001}]{breiman2001statistical}
Breiman, L. (2001).
\newblock Statistical modeling: The two cultures.
\newblock {\em Statistical Science\/}~{\em 16\/}(3), 199--231.

\bibitem[\protect\citeauthoryear{Cameron and Pettitt}{Cameron and
  Pettitt}{2012}]{cameron2012approximate}
Cameron, E. and A.~Pettitt (2012).
\newblock Approximate bayesian computation for astronomical model analysis: a
  case study in galaxy demographics and morphological transformation at high
  redshift.
\newblock {\em Monthly Notices of the Royal Astronomical Society\/}~{\em
  425\/}(1), 44--65.

\bibitem[\protect\citeauthoryear{Chen and Guestrin}{Chen and
  Guestrin}{2016}]{chen2016xgboost}
Chen, T. and C.~Guestrin (2016).
\newblock Xgboost: A scalable tree boosting system.
\newblock In {\em Proceedings of the 22nd acm sigkdd international conference
  on knowledge discovery and data mining}, pp.\  785--794. ACM.

\bibitem[\protect\citeauthoryear{Creel and Kristensen}{Creel and
  Kristensen}{2016}]{creel2016selection}
Creel, M. and D.~Kristensen (2016).
\newblock On selection of statistics for approximate bayesian computing (or the
  method of simulated moments).
\newblock {\em Computational Statistics \& Data Analysis\/}~{\em 100}, 99--114.

\bibitem[\protect\citeauthoryear{Csillery, Francois, and Blum}{Csillery
  et~al.}{2012}]{abcPackage}
Csillery, K., O.~Francois, and M.~G.~B. Blum (2012).
\newblock abc: an r package for approximate bayesian computation (abc).
\newblock {\em Methods in Ecology and Evolution\/}.

\bibitem[\protect\citeauthoryear{Estoup, Lombaert, Marin, Guillemaud, Pudlo,
  Robert, and Cornuet}{Estoup et~al.}{2012}]{estoup2012estimation}
Estoup, A., E.~Lombaert, J.~Marin, et~al. (2012).
\newblock Estimation of demo-genetic model probabilities with approximate
  bayesian computation using linear discriminant analysis on summary
  statistics.
\newblock {\em Molecular Ecology Resources\/}~{\em 12\/}(5), 846--855.

\bibitem[\protect\citeauthoryear{Faisal, Futschik, Hussain, and Abd-el.
  Moemen}{Faisal et~al.}{2016}]{faisal2016choosing}
Faisal, M., A.~Futschik, I.~Hussain, and M.~Abd-el. Moemen (2016).
\newblock Choosing summary statistics by least angle regression for approximate
  bayesian computation.
\newblock {\em Journal of Applied Statistics\/}, 1--12.

\bibitem[\protect\citeauthoryear{Fan, Nott, and Sisson}{Fan
  et~al.}{2013}]{fan2013approximate}
Fan, Y., D.~J. Nott, and S.~A. Sisson (2013).
\newblock Approximate bayesian computation via regression density estimation.
\newblock {\em Stat\/}~{\em 2\/}(1), 34--48.

\bibitem[\protect\citeauthoryear{Fearnhead and Prangle}{Fearnhead and
  Prangle}{2012}]{fearnhead2012constructing}
Fearnhead, P. and D.~Prangle (2012).
\newblock Constructing summary statistics for approximate bayesian computation:
  semi-automatic approximate bayesian computation.
\newblock {\em Journal of the Royal Statistical Society: Series B (Statistical
  Methodology)\/}~{\em 74\/}(3), 419--474.

\bibitem[\protect\citeauthoryear{Ferraty and Vieu}{Ferraty and
  Vieu}{2006}]{ferraty2006nonparametric}
Ferraty, F. and P.~Vieu (2006).
\newblock {\em Nonparametric functional data analysis: theory and practice}.
\newblock Springer Science \& Business Media.

\bibitem[\protect\citeauthoryear{{Harnois-D{\'e}raps} and {van
  Waerbeke}}{{Harnois-D{\'e}raps} and {van Waerbeke}}{2015}]{Deraps2015}
{Harnois-D{\'e}raps}, J. and L.~{van Waerbeke} (2015, July).
\newblock {Simulations of weak gravitational lensing - II. Including finite
  support effects in cosmic shear covariance matrices}.
\newblock {\em Monthly Notices of the Royal Astronomical Society\/}~{\em 450},
  2857--2873.

\bibitem[\protect\citeauthoryear{Hastie, Tibshirani, and Friedman}{Hastie
  et~al.}{2001}]{Hast:Tibs:Frie:2001}
Hastie, T., R.~Tibshirani, and J.~H. Friedman (2001).
\newblock {\em The elements of statistical learning: data mining, inference,
  and prediction}.
\newblock New York: Springer-Verlag.

\bibitem[\protect\citeauthoryear{{Hildebrandt}, {Viola}, {Heymans}, {Joudaki},
  {Kuijken}, {Blake}, {Erben}, {Joachimi}, {Klaes}, {Miller}, {Morrison},
  {Nakajima}, {Verdoes Kleijn}, {Amon}, {Choi}, {Covone}, {de Jong}, {Dvornik},
  {Fenech Conti}, {Grado}, {Harnois-D{\'e}raps}, {Herbonnet}, {Hoekstra},
  {K{\"o}hlinger}, {McFarland}, {Mead}, {Merten}, {Napolitano}, {Peacock},
  {Radovich}, {Schneider}, {Simon}, {Valentijn}, {van den Busch}, {van Uitert},
  and {Van Waerbeke}}{{Hildebrandt} et~al.}{2017}]{KIDS2017}
{Hildebrandt}, H., M.~{Viola}, C.~{Heymans}, et~al. (2017, February).
\newblock {KiDS-450: cosmological parameter constraints from tomographic weak
  gravitational lensing}.
\newblock {\em Monthly Notices of the Royal Astronomical Society\/}~{\em 465},
  1454--1498.

\bibitem[\protect\citeauthoryear{Hoekstra and Jain}{Hoekstra and
  Jain}{2008}]{hoekstra2008weak}
Hoekstra, H. and B.~Jain (2008).
\newblock Weak gravitational lensing and its cosmological applications.
\newblock {\em Annual Review of Nuclear and Particle Science\/}~{\em 58},
  99--123.

\bibitem[\protect\citeauthoryear{Izbicki and Lee}{Izbicki and
  Lee}{2017}]{IzbickiLeeFlexCode}
Izbicki, R. and A.~Lee (2017).
\newblock Converting high-dimensional regression to high-dimensional
  conditional density estimation.
\newblock {\em Eletronic Journal of Statistics\/}~{\em 11}, 2800--2831.

\bibitem[\protect\citeauthoryear{Lee and Izbicki}{Lee and
  Izbicki}{2016}]{LeeIzbickiReg}
Lee, A.~B. and R.~Izbicki (2016).
\newblock A spectral series approach to high-dimensional nonparametric
  regression.
\newblock {\em Electronic Journal of Statistics\/}~{\em 10\/}(1), 423--463.

\bibitem[\protect\citeauthoryear{Li, Nott, Fan, and Sisson}{Li
  et~al.}{2015}]{li2015extending}
Li, J., D.~J. Nott, Y.~Fan, and S.~A. Sisson (2015).
\newblock Extending approximate bayesian computation methods to high dimensions
  via gaussian copula.
\newblock {\em arXiv preprint arXiv:1504.04093\/}.

\bibitem[\protect\citeauthoryear{Li and Fearnhead}{Li and
  Fearnhead}{2017}]{li2017}
Li, W. and P.~Fearnhead (2017).
\newblock Convergence of regression adjusted approximate bayesian computation.
\newblock {\em Biometrika\/}.

\bibitem[\protect\citeauthoryear{Liddle}{Liddle}{2015}]{liddle2015introduction}
Liddle, A. (2015).
\newblock {\em An introduction to modern cosmology}.
\newblock John Wiley \& Sons.

\bibitem[\protect\citeauthoryear{Liu and Walker}{Liu and
  Walker}{2018}]{LiuWalker2018}
Liu, T. and M.~Walker (2018).
\newblock In preparation.

\bibitem[\protect\citeauthoryear{Lueckmann, Goncalves, Bassetto, {\"O}cal,
  Nonnenmacher, and Macke}{Lueckmann et~al.}{2017}]{lueckmann2017flexible}
Lueckmann, J.~M., P.~J. Goncalves, G.~Bassetto, et~al. (2017).
\newblock Flexible statistical inference for mechanistic models of neural
  dynamics.
\newblock In {\em Advances in Neural Information Processing Systems}, pp.\
  1289--1299.

\bibitem[\protect\citeauthoryear{Mallat}{Mallat}{1999}]{mallat1999wavelet}
Mallat, S. (1999).
\newblock {\em A wavelet tour of signal processing}.
\newblock Academic press.

\bibitem[\protect\citeauthoryear{Mandelbaum}{Mandelbaum}{2017}]{mandelbaum2017weak}
Mandelbaum, R. (2017).
\newblock Weak lensing for precision cosmology.
\newblock {\em arXiv preprint:1710.03235\/}.

\bibitem[\protect\citeauthoryear{Marin, Pudlo, Robert, and Ryder}{Marin
  et~al.}{2012}]{marin2012approximate}
Marin, J.~M., P.~Pudlo, C.~P. Robert, and R.~J. Ryder (2012).
\newblock Approximate {B}ayesian computational methods.
\newblock {\em Statistics and Computing\/}~{\em 22\/}(6), 1167--1180.

\bibitem[\protect\citeauthoryear{Munshi, Valageas, Van~Waerbeke, and
  Heavens}{Munshi et~al.}{2008}]{munshi2008cosmology}
Munshi, D., P.~Valageas, L.~Van~Waerbeke, and A.~Heavens (2008).
\newblock Cosmology with weak lensing surveys.
\newblock {\em Physics Reports\/}~{\em 462\/}(3), 67--121.

\bibitem[\protect\citeauthoryear{Navarro}{Navarro}{1996}]{navarro1996structure}
Navarro, J.~F. (1996).
\newblock The structure of cold dark matter halos.
\newblock In {\em Symposium-international astronomical union}, Volume 171, pp.\
   255--258. Cambridge University Press.

\bibitem[\protect\citeauthoryear{Papamakarios and Murray}{Papamakarios and
  Murray}{2016}]{papamakarios2016fast}
Papamakarios, G. and I.~Murray (2016).
\newblock Fast $\epsilon$-free inference of simulation models with bayesian
  conditional density estimation.
\newblock {\em arXiv preprint arXiv:1605.06376\/}.

\bibitem[\protect\citeauthoryear{Petri}{Petri}{2016}]{petri2016mocking}
Petri, A. (2016).
\newblock Mocking the weak lensing universe: The lenstools python computing
  package.
\newblock {\em Astronomy and Computing\/}~{\em 17}, 73--79.

\bibitem[\protect\citeauthoryear{Pospisil and Lee}{Pospisil and
  Lee}{2018}]{RFCDE_arxiv}
Pospisil, T. and A.~B. Lee (2018).
\newblock {RFCDE}: Random forests for conditional density estimation.
\newblock {\em arXiv preprint:1804.05753\/}.

\bibitem[\protect\citeauthoryear{Prangle, Blum, Popovic, and Sisson}{Prangle
  et~al.}{2014}]{prangle2014diagnostic}
Prangle, D., M.~G.~B. Blum, G.~Popovic, and S.~A. Sisson (2014).
\newblock Diagnostic tools for approximate bayesian computation using the
  coverage property.
\newblock {\em Australian \& New Zealand Journal of Statistics\/}~{\em
  56\/}(4), 309--329.

\bibitem[\protect\citeauthoryear{{R Core Team}}{{R Core Team}}{2013}]{R}
{R Core Team} (2013).
\newblock {\em R: A Language and Environment for Statistical Computing}.
\newblock Vienna, Austria: R Foundation for Statistical Computing.
\newblock {ISBN} 3-900051-07-0.

\bibitem[\protect\citeauthoryear{Ravikumar, Lafferty, Liu, and
  Wasserman}{Ravikumar et~al.}{2009}]{Ravikumar:EtAl:2009}
Ravikumar, P., J.~Lafferty, H.~Liu, and L.~Wasserman (2009).
\newblock Sparse additive models.
\newblock {\em Journal of the Royal Statistical Society, Series B\/}~{\em
  71\/}(5), 1009--1030.

\bibitem[\protect\citeauthoryear{Raynal, Marin, , Pudlo, Ribatet, Robert, and
  Estoup}{Raynal et~al.}{2017}]{marin2017abc}
Raynal, L., J.-M. Marin, , et~al. (2017).
\newblock {ABC} random forests for {B}ayesian parameter inference.
\newblock {\em arXiv preprint arXiv:1605.05537\/}.

\bibitem[\protect\citeauthoryear{Rowe, Jarvis, Mandelbaum, Bernstein, Bosch,
  Simet, Meyers, Kacprzak, Nakajima, Zuntz, et~al.}{Rowe
  et~al.}{2015}]{rowe2015galsim}
Rowe, B., M.~Jarvis, R.~Mandelbaum, et~al. (2015).
\newblock Galsim: The modular galaxy image simulation toolkit.
\newblock {\em Astronomy and Computing\/}~{\em 10}, 121--150.

\bibitem[\protect\citeauthoryear{{Sato}, {Takada}, {Hamana}, and
  {Matsubara}}{{Sato} et~al.}{2011}]{Masanori2011}
{Sato}, M., M.~{Takada}, T.~{Hamana}, and T.~{Matsubara} (2011, June).
\newblock {Simulations of Wide-field Weak-lensing Surveys. II. Covariance
  Matrix of Real-space Correlation Functions}.
\newblock {\em The Astrophysical Journal\/}~{\em 734}, 76.

\bibitem[\protect\citeauthoryear{Sisson, Fan, and Tanaka}{Sisson
  et~al.}{2007}]{sisson2007sequential}
Sisson, S.~A., Y.~Fan, and M.~M. Tanaka (2007).
\newblock Sequential {M}onte {C}arlo without likelihoods.
\newblock {\em Proceedings of the National Academy of Sciences\/}~{\em
  104\/}(6), 1760--1765.

\bibitem[\protect\citeauthoryear{Strigari, Frenk, and White}{Strigari
  et~al.}{2017}]{strigari2017dynamical}
Strigari, L.~E., C.~S. Frenk, and S.~D.~M. White (2017).
\newblock Dynamical models for the {S}culptor dwarf spheroidal in a
  $\lambda${CDM} universe.
\newblock {\em The Astrophysical Journal\/}~{\em 838\/}(2), 123.

\bibitem[\protect\citeauthoryear{Weyant, Schafer, and Wood-Vasey}{Weyant
  et~al.}{2013}]{weyant2013likelihood}
Weyant, A., C.~Schafer, and W.~M. Wood-Vasey (2013).
\newblock Likelihood-free cosmological inference with type ia supernovae:
  Approximate bayesian computation for a complete treatment of uncertainty.
\newblock {\em The Astrophysical Journal\/}~{\em 764\/}(2), 116.

\end{thebibliography}
}

\end{document}